\title{Optimal-Time Queries on BWT-runs Compressed Indexes} %TODO Please add
\titlerunning{Optimal-Time Queries on BWT-runs Compressed Indexes} %TODO optional, please use if title is longer than one line
\author{Takaaki Nishimoto}{RIKEN Center for Advanced Intelligence Project, Japan}{takaaki.nishimoto@riken.jp}{}{}%mandatory, please use full name; only 1 author per \author macro; first two parameters are mandatory, other parameters can be empty.
\author{Yasuo Tabei}{RIKEN Center for Advanced Intelligence Project, Japan}{yasuo.tabei@riken.jp}{}{}
\authorrunning{T. Nishimoto and Y. Tabei}%mandatory. First: Use abbreviated first/middle names. Second (only in severe cases): Use first author plus 'et al.'
\keywords{Compressed text indexes, Burrows-Wheeler transform, highly repetitive text collections}%mandatory
\begin{document}

\maketitle

\begin{abstract}
Indexing highly repetitive strings (i.e., strings with many repetitions) for fast queries 
has become a central research topic in string processing, because it has a wide variety 
of applications in bioinformatics and natural language processing.
%Although a significant number of indexes for highly repetitive strings have been proposed thus far, 
%developing compressed indexes that support various queries remains a challenge. 
Although a substantial number of indexes for highly repetitive strings have been proposed thus far, 
developing compressed indexes that support various queries remains a challenge. 
%\emph{Run-length Burrows-Wheeler transform (RLBWT)} is a lossless data compression by 
%a reversible permutation of an input string and run-length encoding, 
%and it has attached an increased attention for indexing highly repetitive strings.
The \emph{run-length Burrows-Wheeler transform} (RLBWT) is a lossless data compression by 
a reversible permutation of an input string and run-length encoding, 
and it has received interest for indexing highly repetitive strings.
%LF and $\phi^{-1}$ are two key functions for building indexes on RLBWT and 
%the best previous result computes LF and $\phi^{-1}$ functions in $O(\log \log n)$ time 
%with $O(r)$ words of space for the string length $n$ and the number $r$ of runs in RLBWT. 
LF and $\phi^{-1}$ are two key functions for building indexes on RLBWT, and 
the best previous result computes LF and $\phi^{-1}$ in $O(\log \log n)$ time 
with $O(r)$ words of space for the string length $n$ and the number $r$ of runs in RLBWT. 
%In this paper, we improve LF and $\phi^{-1}$ functions that can be computed in constant time with $O(r)$ words of space. 
In this paper, we improve LF and $\phi^{-1}$ so that they can be computed in a constant time with $O(r)$ words of space. 
%Subsequently, we present \emph{OptBWTR (optimal-time queries on BWT-runs compressed indexes)}, 
%the first string index that supports various queries including locate, count, 
%extract queries in optimal time and $O(r)$ words of space for the number $r$ of runs on RLBWT. 
Subsequently, we present \emph{OptBWTR (optimal-time queries on BWT-runs compressed indexes)}, 
the first string index that supports various queries including locate, count, 
extract queries in optimal time and $O(r)$ words of space. 
\end{abstract}

\newcommand{\Occ}{\mathit{Occ}}

\newcommand{\floor}[1]{\left \lfloor #1 \right \rfloor}
\newcommand{\ceil}[1]{\left \lceil #1 \right \rceil}

%%Other
\newcommand{\argmax}{\mathop{\rm arg~max}\limits}
\newcommand{\argmin}{\mathop{\rm arg~min}\limits}
\newcommand{\polylog}{\mathop{\rm polylog}\limits}

\newcommand{\SA}{\mathsf{SA}}

%BWT
\newcommand{\LF}{\mathsf{LF}}
\newcommand{\FL}{\mathsf{FL}}
\newcommand{\select}{\mathsf{select}}
\newcommand{\rank}{\mathsf{rank}}

\newcommand{\rlbwt}{\mathsf{rlbwt}}

\newcommand{\pred}{\mathsf{pred}}

\newcommand{\Linterval}{\mathsf{Linterval}}

\newcommand{\splitf}{\mathsf{split}}
\newcommand{\balance}{B}
\newcommand{\phibalance}{\dot{B}}

\newcommand{\movef}{\mathsf{Move}}

\newcommand{\sbwt}{L_{\mathsf{first}}}

\newcommand{\flbwt}{L_{\mathsf{FL}}}

%%%
%\newcommand{\backward}{\mathsf{bbs}}

%%% iranasasou
%\newcommand{\children}{\mathsf{children}}
%\newcommand{\dsa}{\mathsf{DSA}}
%\newcommand{\transform}{\mathsf{transform}}
%
%\newcommand{\posbr}{\mathsf{pos}_{\mathsf{BR}}}
%
%\newcommand{\pos}{\mathsf{pos}}
%\newcommand{\phipos}{\mathsf{pos}^{1}}
%\newcommand{\invpos}{\mathsf{pos}^{-1}}
%\newcommand{\fstc}{\mathsf{fstC}}
%\newcommand{\lastc}{\mathsf{lastC}}

\section{Introduction}\label{sec:introduction}
%A string index represents a string in a compressed format that supports locate query (i.e., computing all the occurrence positions at which a query appears on a string). 
A string index represents a string in a compressed format that supports locate queries (i.e., computing all the positions at which a given pattern appears in a string). 
%\emph{FM-index}~\cite{DBLP:journals/jacm/FerraginaM05,DBLP:journals/talg/FerraginaMMN07} is an efficient string index 
%on a lossless data compression called \emph{Burrows-Wheeler transform (BWT)}~\cite{burrows1994block} as a reversible permutation of an input string. 
The \emph{FM-index} ~\cite{DBLP:journals/jacm/FerraginaM05,DBLP:journals/talg/FerraginaMMN07} is an efficient string index 
on a lossless data compression called the \emph{Burrows-Wheeler transform (BWT)}~\cite{burrows1994block}, which is a reversible permutation of an input string.
%The locate queries on FM-index are efficiently computed by the \emph{backward search} that is an iterative algorithm computing an interval 
%corresponding to the query on a \emph{suffix array} (SA)~\cite{DBLP:journals/siamcomp/ManberM93} which stores all the suffixes of an input string in lexicographical order. 
In particular, locate queries can be efficiently computed on an FM-index by performing a \emph{backward search}, which is an iterative algorithm for computing an interval 
corresponding to the query on a \emph{suffix array} (SA)~\cite{DBLP:journals/siamcomp/ManberM93} storing all the suffixes of an input string in lexicographical order.
%FM-index performs locate queries in $O(m + occ)$ time 
%with $O(n (1 + \log \sigma / \log n))$ words of space for length $n$ of a string $T$, length $m$ of a query string, 
%alphabet size $\sigma$ and number $occ$ of occurrences of a query in $T$. 
The FM-index performs locate queries in $O(m + occ)$ time 
with $O(n (1 + \log \sigma / \log n))$ words of space  for a string $T$ of length $n$, query string of length $m$, 
alphabet size $\sigma$, and number $occ$ of occurrences of a query in $T$.

%\emph{Highly repetitive string} is a string including many repetitions. 
A \emph{highly repetitive string} is a string including many repetitions. 
%Examples are human genomes, version-controlled documents, and source code in repositories. 
Examples include the human genome, version-controlled documents, and source code in repositories. 
A significant number of string indexes on various compressed formats for highly repetitive strings have been proposed thus far~(e.g., SLP-index~\cite{DBLP:conf/spire/ClaudeN12a}, LZ-indexes~\cite{DBLP:conf/latin/ChristiansenE18,DBLP:conf/latin/GagieGKNP14}, 
BT-indexes~\cite{DBLP:journals/talg/ChristiansenEKN21,DBLP:journals/tcs/NavarroP19}). 
%A significant number of string indexes for various compressed formats have been proposed for dealing with highly repetitive strings~(e.g., 
%the SLP-index~\cite{DBLP:conf/spire/ClaudeN12a}, 
%the LZ-indexes~\cite{DBLP:conf/latin/ChristiansenE18,DBLP:conf/latin/GagieGKNP14}, 
%the BT-indexes~\cite{DBLP:journals/talg/ChristiansenEKN21,DBLP:journals/tcs/NavarroP19}). 
%For a large collection of highly repetitive strings, the most powerful and efficient compressed format is \emph{run-length BWT (RLBWT)}~\cite{burrows1994block} 
%that is a BWT compressed by run-length encoding. 
For a large collection of highly repetitive strings, the most powerful and efficient compressed format is 
the run-length (RL) Burrows Wheeler transform (RLBWT)~\cite{burrows1994block}, which is a BWT compressed by run-length encoding. 
%M\"{a}kinen et al.~\cite{DBLP:journals/jcb/MakinenNSV10} presented 
%an RLBWT-based string index named RLFM-index that 
%solves locate queries by proposing a backward search algorithm on RLBWT. 
M\"{a}kinen et al.~\cite{DBLP:journals/jcb/MakinenNSV10} presented 
an RLBWT-based string index, named the RLFM-index, 
that solves locate queries by executing a backward search algorithm on RLBWT. 
%While RLFM-index can solve locate queries in $O(r+n/s)$ words of space in $O((m + s \cdot occ) (\frac{\log \sigma}{\log \log r} + (\log \log n)^{2}))$ time 
%for number $r$ of runs in RLBWT of $T$ and parameter $s \geq 1$, 
%it consumes $O(r+n/s)$ words of space that depends on a string length. 
While the RLFM-index can solve locate queries in $O(r+n/s)$ words of space in $O((m + s \cdot occ) (\frac{\log \sigma}{\log \log r} + (\log \log n)^{2}))$ time 
for the number $r$ of runs in the RLBWT of $T$ and parameter $s \geq 1$, 
it consumes $O(r+n/s)$ words of space depending on the string length. 
%Recently, Gagie et al.~\cite{10.1145/3375890} presented r-index that can reduce the space usage of RLFM-index 
%into a space linearly proportional to the number of runs in RLBWT. 
Recently, Gagie et al.~\cite{10.1145/3375890} presented the r-index, which can reduce the space usage of the RLFM-index 
to one linearly proportional to the number of runs in RLBWT.
%The r-index can solve locate queries space-efficiently with only $O(r)$ words of space 
%and in $O(m \log \log_{w} (\sigma + (n/r)) + occ \log \log_{w} (n/r) )$ time for machine word size $w = \Theta(\log n)$. 
The r-index can solve locate queries space efficiently with only $O(r)$ words of space 
and in $O(m \log \log_{w} (\sigma + (n/r)) + occ \log \log_{w} (n/r) )$ time for a machine word size $w = \Theta(\log n)$.
%If r-index allows to use $O(r \log \log_{w} (\sigma + n/r))$ words of space, 
%it solves locate queries in an optimal time as $O(m + occ)$. 
If the r-index allows $O(r \log \log_{w} (\sigma + n/r))$ words of space to be used, 
it can solve locate queries in the optimal time, $O(m + occ)$. 
%Although there are other important queries including count query, extract query, 
%decompression and prefix search for various applications to string processing, 
%no previous string index has supported various queries in addition to locate queries on RLBWT 
%in an optimal time with only $O(r)$ words of space. 
Although there are other important queries including count query, extract query, 
decompression and prefix search for various applications to string processing, 
no previous string index can support various queries in addition to locate queries based on RLBWT 
in an optimal time with only $O(r)$ words of space.
%Thus, developing string index  for various queries in an optimal time with $O(r)$ words of space is a challenge. 
That is, developing a string index for various queries in an optimal time with $O(r)$ words of space remains a challenge.

%In this paper, we present \emph{OptBWTR (optimal-time queries on BWT-runs compressed indexes)}, 
%the first string index that supports various queries including locate, count, 
%extract queries in an optimal time and $O(r)$ words of space for the number $r$ of runs on RLBWT. 
In this paper, we present \emph{OptBWTR (optimal-time queries on BWT-runs compressed indexes)}, 
the first string index that supports various queries including locate, count, 
extract queries in optimal time and $O(r)$ words of space for the number $r$ of runs in RLBWT.  
%LF and $\phi^{-1}$ are important functions for string indexes on RLBWT. 
LF and $\phi^{-1}$  are important functions for string indexes on RLBWT. 
%The best previous data structure computes LF and $\phi^{-1}$ functions in $O(\log \log_{w} (n/r))$ time with $O(r)$ words of space. 
The best previous data structure computes LF and $\phi^{-1}$ in $O(\log \log_{w} (n/r))$ time with $O(r)$ words of space. 
%We present a novel data structure that can compute LF and $\phi^{-1}$ functions in constant time and $O(r)$ words of space. 
In this paper, we present a novel data structure that can compute LF and $\phi^{-1}$ in constant time and $O(r)$ words of space. 
%Subsequently, we present OptBWTR that supports the following five queries in an optimal-time and $O(r)$ words of space. 
Subsequently, we present OptBWTR that supports the following five queries in optimal time and $O(r)$ words of space. 
\begin{itemize}
\item {\bf Locate query:} 
%OptBWTR can solve locate query on an input string in $O(r)$ words of space and $O(m \log \log_{w} \sigma + occ)$ time 
%that is optimal for strings with polylogarithmic alphabets~(i.e., $\sigma = O(\polylog n)$). 
OptBWTR can solve a locate query on an input string in $O(r)$ words of space and $O(m \log \log_{w} \sigma + occ)$ time, 
which is optimal for strings with polylogarithmic alphabets~(i.e., $\sigma = O(\polylog n)$).
\item {\bf Count query:} 
%OptBWTR can return the number of occurrences of a query string on an input string in $O(r)$ words of space and $O(m \log \log_{w} \sigma)$ time 
%that is optimal for polylogarithmic alphabets. 
OptBWTR can return the number of occurrences of a query string on an input string in $O(r)$ words of space and $O(m \log \log_{w} \sigma)$ time, 
which is optimal for polylogarithmic alphabets. 
\item {\bf Extract query:} 
%OptBWTR can return substrings starting at a given position bookmarked beforehand in a string in $O(1)$ time per character and $O(r + b)$ words of space, 
%where $b$ is the number of bookmarked positions. Solving extract queries is also known as the \emph{bookmarking problem}~\cite{DBLP:conf/latin/GagieGKNP14,DBLP:conf/spire/CordingGW16}. 
OptBWTR can return substrings starting at a given position bookmarked beforehand in a string in $O(1)$ time per character and $O(r + b)$ words of space, 
where $b$ is the number of bookmarked positions. 
Resolving extract queries is sometimes called the \emph{bookmarking problem}~\cite{DBLP:conf/latin/GagieGKNP14,DBLP:conf/spire/CordingGW16}.
\item {\bf Decompression:} 
%OptBWTR decompresses the original string of length $n$ in optimal time~(i.e., $O(n)$). 
%The result is the first linear-time decompression algorithm for RLBWT in $O(r)$ words of working space. 
OptBWTR decompresses the original string of length $n$ in optimal time~(i.e., $O(n)$). 
This is the first linear-time decompression algorithm for RLBWT in $O(r)$ words of working space. 
\item {\bf Prefix search:} 
%OptBWTR can return the strings in a set $D$ that include a query as their prefixes in 
%optimal time~(i.e., $O(m + occ')$) and $O(r')$ words of space, where $occ'$ is the number of output strings 
%and $r'$ is the number of runs in the RLBWT of a string made by concatenating the strings in $D$. 
OptBWTR can return the strings in a set $D$ that include a given pattern as their prefixes in 
optimal time~(i.e., $O(m + occ')$) and $O(r')$ words of space, where $occ'$ is the number of output strings 
and $r'$ is the number of runs in the RLBWT of a string made by concatenating the strings in $D$. 
\end{itemize}

%State-of-the-art string indexes for each query are summarized in Table~\ref{table:result}. 
The state-of-the-art string indexes for each type of query are summarized in Table~\ref{table:result}.

\renewcommand{\arraystretch}{0.7}
\begin{table}[t]
    \scriptsize
    %\vspace{-0.5cm}
    \caption{
    Summary of space and time for (i) locate and (ii) count queries, (iii) extract queries (a.k.a the bookmarking problem), 
    (iv) decompression of BWT or RLBWT and (v) prefix searches for each query, 
    where $n$ is the length of the input string $T$, 
    $m$ is the length of a given string $P$, 
    $occ$ is the number of occurrences of $P$ in $T$, 
    $\sigma$ is the alphabet size of $T$, 
    $w = \Theta(\log n)$ is the machine word size, 
    $r$ is the number of runs in the RLBWT of $T$, 
    $s$ is a parameter, $g$ is the size of a compressed grammar deriving $T$, 
    $b$ is the number of input positions for the bookmarking problem, 
    $D$ is a set of strings of total length $n$, 
    $occ'$ is the number of strings in $D$ such that each string has $P$ as a prefix and 
    $r'$ is the number of runs in the RLBWT of a string made by concatenating the strings in $D$.}
    \label{table:result} 
    \center{	

    \begin{tabular}{r||c|c}
(i) Locate query & Space~(words) & Time \\ \hline
RLFM-index~\cite{DBLP:journals/jcb/MakinenNSV10} & $O(r+ n/s)$ & $O((m + s \cdot occ) (\frac{\log \sigma}{\log \log r} + (\log \log n)^{2}))$ \\ \hline
r-index~\cite{10.1145/3375890} & $O(r)$ & $O(m \log \log_{w} (\sigma + (n/r)) + occ \log \log_{w} (n/r))$ \\  
 & $O(r \log \log_{w} (\sigma + (n/r)))$ & $O(m + occ)$ \\  
 & $O(rw\log_{\sigma} \log_{w} n)$ & $O( \lceil m \log (\sigma) / w \rceil + occ )$ \\ \hline \hline
OptBWTR & $O(r)$ & $O(m \log \log_{w} \sigma + occ)$ \\ 
    \end{tabular}     
    \smallskip        

    \begin{tabular}{r||c|c}
(ii) Count query & Space~(words) & Time \\ \hline
RLFM-index~\cite{DBLP:journals/jcb/MakinenNSV10} & $O(r)$ & $O(m (\frac{\log \sigma}{\log \log r} + (\log \log n)^{2}))$ \\ \hline
r-index~\cite{10.1145/3375890} & $O(r)$ & $O(m \log \log_{w} (\sigma + (n/r)))$ \\  
 & $O(r \log \log_{w} (\sigma + (n/r)))$ & $O(m)$ \\ 
 & $O(rw\log_{\sigma} \log_{w} n)$ & $O( \lceil m \log (\sigma) / w \rceil )$ \\ \hline \hline
OptBWTR & $O(r)$ & $O(m \log \log_{w} \sigma)$ \\ 
    \end{tabular}     
    \smallskip        

    \begin{tabular}{r||c|c|c}
(iii) Extract query & Space~(words) & Time per character & Overhead \\ \hline
Gagie et al.\cite{DBLP:conf/latin/GagieGKNP14} & $O(g + b \log^{*} n )$ & $O(1)$ & - \\ \hline
Gagie et al.\cite{10.1145/3375890} & $O( r \log (n/r) )$ & $O(\log (\sigma) / w )$ & $O( \log (n/r))$ \\ \hline
Cording et al.\cite{DBLP:conf/spire/CordingGW16} & $O((g + b)\max\{ 1, \log^{*} g - \log^{*}( \frac{g}{b} - \frac{b}{g} )  \} )$ & $O(1)$ & - \\ \hline \hline
OptBWTR & $O(r + b)$ & $O(1)$ & - \\ 
    \end{tabular}     
    \smallskip        

    \begin{tabular}{r||c|c}
(iv) Decompression & Space~(words) & Time \\ \hline
Lauther and Lukovszki~\cite{DBLP:journals/algorithmica/LautherL10} & $O(n (\log \log n + \log \sigma)/w)$ & $O(n)$ \\ \hline
Golynski et al.\cite{DBLP:conf/soda/GolynskiMR06} & $O((n \log \sigma) / w)$ & $O(n \log \log \sigma)$ \\ \hline 
Predecessor queries~\cite{DBLP:journals/talg/BelazzouguiN15} & $O(r)$ & $O(n \log \log_{w} (n/r))$ \\ \hline \hline
OptBWTR & $O(r)$ & $O(n)$ \\ 
    \end{tabular}     
    \smallskip

    \begin{tabular}{r||c|c}
(v) Prefix search & Space~(words) & Time \\ \hline
Compact trie~\cite{DBLP:journals/jacm/Morrison68} & $(n \log \sigma)/w + O(|D|)$ & $O(m + occ')$ \\ \hline
Z-fast trie~\cite{DBLP:conf/spire/BelazzouguiBV10} & $(n \log \sigma)/w + O(|D|)$ & expected $O(\lceil m \log (\sigma) / w \rceil + \log m + \log \log \sigma + occ')$ \\ \hline
Packed c-trie~\cite{DBLP:journals/ieicet/TakagiISA17} & $(n \log \sigma)/w + O(|D|)$ & expected $O(\lceil m \log (\sigma) / w \rceil + \log \log n + occ')$ \\ \hline
c-trie++~\cite{DBLP:journals/corr/abs-1904-07467} & $(n \log \sigma)/w + O(|D|)$ & expected $O(\lceil m \log (\sigma) / w \rceil + \log \log_{\sigma} w + occ')$ \\ \hline \hline
OptBWTR & $O( r' + |D|)$ & $O(m + occ')$ \\ 
    \end{tabular}     
    \smallskip    
    }
\end{table}

This paper is organized as follows. In Section~\ref{sec:preliminary}, 
we introduce the important notions used in this paper. 
Section~\ref{sec:move_section} presents novel data structures for computing LF and $\phi^{-1}$ in constant time. 
Section~\ref{sec:backward_search} presents a data structure supporting a modified version of a backward search on RLBWT. 
The backward search leverages the two data structures introduced in Section~\ref{sec:move_section}.  
Sections~\ref{sec:OptBWTR} and \ref{sec:applications} present OptBWTR that supports all five queries mentioned above by leveraging the modified backward search, 
LF, and $\phi^{-1}$.  
%Section~\ref{sec:applications} presents extract query queries for OptBWTR
%Count and locate queries on OptBWTR are also presented. 

\section{Preliminaries} \label{sec:preliminary}
Let $\Sigma = \{ 1, 2, \ldots, \sigma \}$ be an ordered alphabet of size $\sigma$, 
$T$ be a string of length $n$ over $\Sigma$ and $|T|$ be the length of $T$. 
Let $T[i]$ be the $i$-th character of $T$~(i.e., $T = T[1], T[2], \ldots, T[n]$) and $T[i..j]$ be the substring of $T$ that begins at position $i$ 
and ends at position $j$. For two strings, $T$ and $P$, $T \prec P$ means that $T$ is lexicographically smaller than $P$. 
Let $\varepsilon$ be the empty string, i.e., $|\varepsilon| = 0$. 
We assume that (i) $\sigma = n^{O(1)}$ and (ii)
the last character of string $T$ is a special character $\$$ not occurring on substring $T[1..n-1]$ such that 
$\$ \prec c$ holds for any character $c \in \Sigma \setminus \{ \$ \}$. 
For two integers, $b$ and $e$~($b \leq e$), \emph{interval} $[b, e]$ is the set $\{b, b+1, \ldots, e \}$. 
$\Occ(T, P)$ denotes all the occurrence positions of a string $P$ in a string $T$, i.e., $\Occ(T, P) = \{i \mid i \in [1, n-|P|+1] \mbox{ s.t. } P = T[i..(i+|P|-1)] \}$. 
A \emph{count query} on a string $T$ returns the number of occurrences of a given string $P$ in $T$, i.e., $|\Occ(T, P)|$. 
Similarly, a \emph{locate query} on string $T$ returns all the starting positions of $P$ in $T$, i.e., $\Occ(T, P)$. 

A \emph{rank} query $\rank(T, c, i)$ on a string $T$ returns the number of occurrences of a character $c$ in $T[1..i]$, i.e., $\rank(T, c, i) = |\Occ(T[1..i], c)|$. 
A \emph{select} query $\select(T, c, i)$ on a string $T$ returns the $i$-th occurrence of $c$ in $T$, i.e., 
the query returns the smallest integer $j \geq 1$ such that $|\Occ(T[1..j], c)| = i$. 
Assume that $T[b..e]$ contains a character $c$ for an interval $[b, e] \subseteq [1, n]$. 
Let $\hat{b}$ and $\hat{e}$ be the first and last occurrences of a character $c$ in $T[b..e]$~(i.e., 
$\hat{b} = \min \{ i \mid i \in [b, e] \mbox{ s.t. } T[i] = c \}$ and $\hat{e} = \max \{ i \mid i \in [b, e] \mbox{ s.t. } T[i] = c \}$).
Then, we can compute $\hat{b}$ and $\hat{e}$ by the following lemma. 
\begin{lemma}\label{lem:first_last_rank}
The following statements hold: 
(i) $T[b..e]$ contains a character $c$ if and only if $\rank(T, c, e) - \rank(T, c, b-1) \geq 1$ holds. 
(ii) $\hat{b} = \select(T, c, \rank(T, c, b-1) + 1)$ and $\hat{e} = \select(T, c, \rank(T, c, e))$ hold if $T[b..e]$ contains $c$. 
\end{lemma}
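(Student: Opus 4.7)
The plan is to derive both statements directly from the definitions of $\rank$ and $\select$, treating them as bookkeeping about the positions of the character $c$ in $T$.

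For statement (i), I would first observe that every occurrence of $c$ in $T[1..e]$ is either an occurrence of $c$ in $T[1..b-1]$ or an occurrence of $c$ in $T[b..e]$, and that these two sets are disjoint. By the definition of rank, this gives $|\Occ(T[b..e], c)| = \rank(T, c, e) - \rank(T, c, b-1)$. Thus $T[b..e]$ contains $c$ if and only if this difference is at least $1$.

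For statement (ii), set $k = \rank(T, c, b-1)$ and $\ell = \rank(T, c, e)$, and let $p_1 < p_2 < \cdots$ be the positions in $T$ at which $c$ occurs, so that $\select(T,c,j) = p_j$ by definition. Since $T[1..b-1]$ contains exactly $k$ occurrences of $c$, we have $p_k \leq b-1 < p_{k+1}$, and by statement (i) applied with $c$ present, $\ell \geq k+1$, so $p_{k+1}$ exists and satisfies $p_{k+1} \leq p_{\ell} \leq e$. Hence $p_{k+1} \in [b, e]$ is an occurrence of $c$ in $T[b..e]$, and it is the smallest such (any earlier one would lie in $[b, p_{k+1}-1]$ and thus contradict the minimality of $p_{k+1}$ among occurrences past position $b-1$); this gives $\hat b = \select(T, c, k+1)$. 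The argument for $\hat e = \select(T, c, \ell) = p_{\ell}$ is symmetric: $p_{\ell} \leq e$ by the definition of $\ell$, while $p_{\ell} \geq b$ because $\ell > k$ forces $p_{\ell} > p_k$, i.e., $p_{\ell} \geq b$; and any $c$-occurrence strictly to the right of $p_{\ell}$ in $T$ would be past position $e$, so $p_{\ell}$ is indeed the last occurrence in $[b,e]$.

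There is essentially no obstacle here; the only subtlety worth being careful about is not confusing ``occurrence in $T[1..e]$'' with ``occurrence in $T[b..e]$'' when translating $\select$ indices, which is exactly why I introduce the named variables $k$ and $\ell$ and verify the resulting positions lie in $[b,e]$ rather than relying on intuition.
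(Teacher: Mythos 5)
Your proof is correct and carefully unpacks exactly the counting facts that the paper treats as immediate: the paper states Lemma~\ref{lem:first_last_rank} right after defining $\rank$ and $\select$ and gives no proof at all, regarding it as a direct consequence of those definitions. Your argument — decomposing occurrences in $T[1..e]$ into those in $T[1..b-1]$ and $T[b..e]$ for (i), and tracking the $k$-th and $\ell$-th occurrence positions $p_{k+1}, p_{\ell}$ for (ii) — is the standard and intended justification, so there is nothing to compare against.
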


%SA of a string $T$ is an integer array $\SA$ of size $n$ such that 
%$\SA[i]$ stores the starting position of $i$-th suffix of $T$ in lexicographical order. 
%Formally, $\SA$ is a permutation of $[1, n]$ such that $T[\SA[1]..n] \prec \cdots \prec T[\SA[n]..n]$ holds. Each value in SA is called \emph{sa-value}.
A suffix array ($\SA$) of a string $T$ is an integer array of size $n$ such that 
$\SA[i]$ stores the starting position of the $i$-th suffix of $T$ in lexicographical order. 
Formally, $\SA$ is a permutation of $[1, n]$ such that $T[\SA[1]..n] \prec \cdots \prec T[\SA[n]..n]$ holds. 
Each value in SA is called an \emph{sa-value}.

%\emph{Suffix array interval}~(\emph{sa-interval}) of a string $P$ is an interval $[b,e] \subseteq [1,n]$ such that 
The \emph{suffix array interval}~(\emph{sa-interval}) of a string $P$ is an interval $[b,e] \subseteq [1,n]$ such that 
$\SA[b..e]$ represents all the occurrence positions of $P$ in string $T$; 
that is, for any integer $p \in [1, n]$, $T[p..p+|P|-1] = P$ holds if and only if $p \in \{ \SA[b], \SA[b+1], \ldots, \SA[e] \}$. 
The sa-interval of the empty string $\varepsilon$ is defined as $[1, n]$.

$\LF$ is a function that returns the position with sa-value $\SA[i]-1$ on $\SA$~(i.e., $\SA[\LF(i)] = \SA[i] - 1$) 
for a given integer $i \in [1, n]$ if $\SA[i] \neq 1$; otherwise, it returns the position with sa-value $n$~(i.e., $\SA[\LF(i)] = n$). 
$\phi^{-1}$~\cite{DBLP:conf/cpm/KarkkainenMP09} is a function that returns $\SA[i+1]$ for a given 
sa-value $\SA[i] \in [1, n]$~(i.e., $\phi^{-1}(\SA[i]) = \SA[i+1]$) if $i \neq n$; 
otherwise, it returns $\SA[1]$.

%We use base-2 logarithm throughout this paper if the base of a logarithm is not indicated. 
We will use base-2 logarithms throughout this paper unless indicated otherwise. 
Our computation model is a unit-cost word RAM with a machine word size of $w = \Theta(\log n)$ bits. 
We evaluate the space complexity in terms of the number of machine words. 
A bitwise evaluation of space complexity can be obtained with a $\log n$ multiplicative factor.

\subsection{BWT and run-length BWT~(RLBWT)}\label{sec:bwt}
\begin{wrapfigure}[18]{r}[1mm]{50mm}
\vspace{-2\baselineskip}
\begin{center}
		\includegraphics[width=0.3\textwidth]{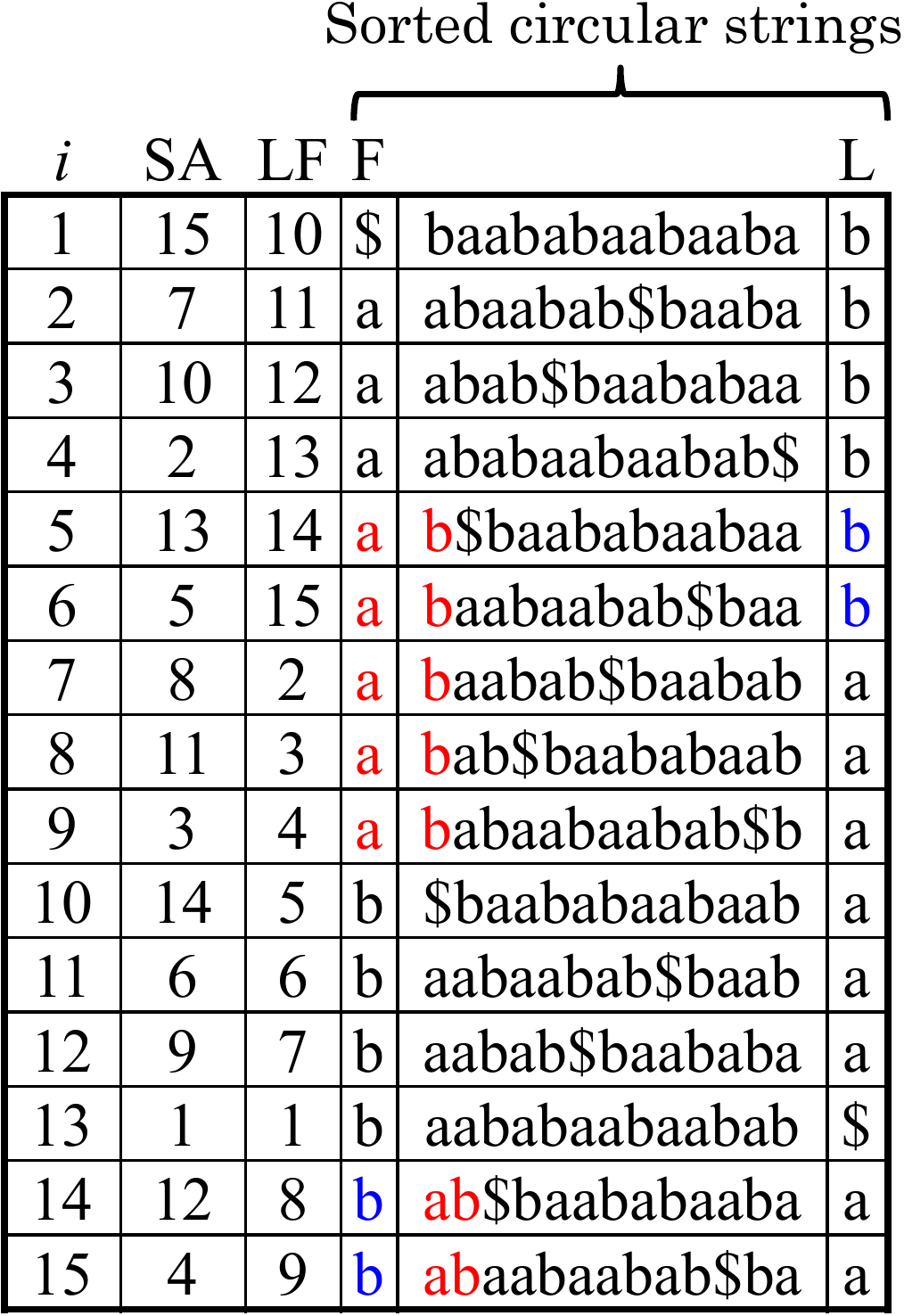}
%\caption{The table illustrates the BWT~(L), SA, LF function, F and the sorted circular strings of $T = baababaabaabab\$$.}
\caption{Table illustrating the BWT~(L), SA, LF function, F, and  the sorted circular strings of $T = baababaabaabab\$$.}
 \label{fig:bwt}
\end{center}
\end{wrapfigure}

%BWT~\cite{burrows1994block} of a string $T$ is a string $L$ of length $n$ built by permuting $T$ as follows: 
The BWT~\cite{burrows1994block} of a string $T$ is a string $L$ of length $n$ built by permuting $T$ as follows: 
(i) all $n$ circular strings of $T$~(i.e., $T[1..n]$, $T[2..n]T[1]$, $T[3..n]T[1..2]$, $\ldots$, $T[n]T[2..n-1]$) are sorted in lexicographical order; 
(ii) $L[i]$ is the last character at the $i$-th circular string in the sorted order for $i \in [1, n]$. 
Similarly, $F$ is a string of length $n$ such that $F[i]$ is the first character at the $i$-th circular string in the sorted order. 
Formally, let $L[i] = T[\SA[\LF(i)]]$ and $F[i] = T[\SA[i]]$. 

Let $C$ be an array of size $\sigma$ such that $C[c]$ is the number of occurrences of characters lexicographically smaller than $c\in \Sigma$ in string $T$ 
i.e., $C[c] = |\{ i \mid i \in [1, n] \mbox{ s.t. } T[i] \prec c \}|$. 
The BWT has the following property. 
For any integer $i \in [1, n]$, $\LF(i)$ is equal to the number of characters that are lexicographically smaller than the character $L[i]$ plus the rank of $L[i]$ on the BWT. 
Thus, $\LF(i) = C[c] + \rank(L, c, i)$ holds for $c = L[i]$. 
This is because $\LF(i) < \LF(j)$ if and only if either of the following conditions holds: (i) $L[i] \prec L[j]$ or (ii) $L[i] = L[j]$ and $i < j$ 
for two integers $1 \leq i < j \leq n$. 

Let $[b,e]$ be the sa-interval of a string $P$ and 
$[b', e']$ be the sa-interval of $cP$ for a character $c$. 
%and $[b', e']$ be the two sa-intervals of $P$ and $cP$, respectively, 
%for a string $P$ and character $c$. 
Then, the following relation holds between $[b,e]$ and $[b', e']$ on the BWT $L$. 
\begin{lemma}[e.g., \cite{DBLP:journals/jacm/FerraginaM05}]\label{lem:backward_search}
Let $\hat{b}$ and $\hat{e}$ be the first and last occurrences of $c$ in $L[b..e]$~(i.e., 
$\hat{b} = \min \{ i \mid i \in [b, e] \mbox{ s.t. } L[i] = c \}$ and $\hat{e} = \max \{ i \mid i \in [b, e] \mbox{ s.t. } L[i] = c \}$). 
Then, $b' = \LF(\hat{b})$, $e' = \LF(\hat{e})$, and $\SA[b'] = \SA[\hat{b}] - 1$ hold if $P$ and $cP$ are substrings of $T$.   
\end{lemma}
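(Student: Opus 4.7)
The plan is to exploit the two basic identities recalled just before the statement: $L[i] = T[\SA[\LF(i)]]$ (so that $L[i]$ is the character immediately preceding the suffix whose rank is $i$, via $\SA[\LF(i)] = \SA[i] - 1$), together with the formula $\LF(i) = C[c] + \rank(L, c, i)$ for $c = L[i]$. Using these, I would characterize the set $S := \{i \in [b,e] \mid L[i] = c\}$ as exactly the set of ranks $i$ such that $\SA[i] - 1$ is an occurrence of $cP$ in $T$, and then argue that the map $i \mapsto \LF(i)$ sends $S$ bijectively and order-preservingly onto the sa-interval $[b', e']$ of $cP$.

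First I would prove the set equality. For $i \in [b,e]$, the suffix $T[\SA[i]..n]$ has prefix $P$, so $\SA[i] - 1$ is an occurrence of $cP$ iff $T[\SA[i]-1] = c$, i.e., iff $L[i] = c$. Since $P$ is a substring of $T$ we have $\SA[i] \ge 1$, and since $cP$ is a substring of $T$ with $c \ne \$$, the occurrence positions in question are all at least $2$, so the wrap-around case of $\LF$ is harmless. Consequently, the map $i \mapsto \LF(i)$ from $S$ lands inside the set of ranks whose sa-values are occurrences of $cP$, which is precisely $[b', e']$. For surjectivity, any $j \in [b', e']$ satisfies $T[\SA[j]..\SA[j]+|cP|-1] = cP$, so $\SA[j]+1$ is an occurrence of $P$; hence there is a unique $i \in [b,e]$ with $\SA[i] = \SA[j]+1$, and $L[i] = T[\SA[j]] = c$, giving $i \in S$ and $\LF(i) = j$.

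Next I would verify the order-preserving property. For $i_1 < i_2$ with $L[i_1] = L[i_2] = c$, the formula $\LF(i_k) = C[c] + \rank(L, c, i_k)$ and the strict monotonicity of $\rank(L, c, \cdot)$ at positions carrying a $c$ immediately give $\LF(i_1) < \LF(i_2)$. Combined with the bijection $S \to [b', e']$, this forces $\LF$ to map $\min S = \hat{b}$ to $\min [b', e'] = b'$ and $\max S = \hat{e}$ to $\max [b', e'] = e'$. The final equality $\SA[b'] = \SA[\hat{b}] - 1$ is then just the defining property $\SA[\LF(\hat{b})] = \SA[\hat{b}] - 1$, valid because $\SA[\hat{b}] \ne 1$ (as noted above).

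The main obstacle, and the place where I would be most careful, is the surjectivity half of the bijection: one has to rule out that some occurrence of $cP$ fails to arise as $\SA[i]-1$ for $i \in [b,e]$. This amounts to checking that the map from positions in $\Occ(T, cP)$ to $\Occ(T, P)$ given by $p \mapsto p+1$ lands in the sa-interval of $P$, which is clear set-theoretically but needs the assumption that $cP$ is actually a substring so that no boundary or wrap-around case arises. Once this is clean, the remaining identities are immediate consequences of the $\LF$ definition.
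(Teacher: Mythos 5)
Your proof is correct; the paper cites this lemma (the standard FM-index backward-search step) without giving its own proof. Your argument — showing that $\LF$ restricted to $S=\{i\in[b,e]\mid L[i]=c\}$ is an order-preserving bijection onto $[b',e']$, and then reading off the endpoints and the sa-value identity — is exactly the classical argument, and you correctly flag the only delicate points (ruling out the $\LF$ wrap-around via $c\neq\$$, and surjectivity via $p\mapsto p+1$ mapping $\Occ(T,cP)$ into $\Occ(T,P)$).
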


Figure~\ref{fig:bwt} illustrates the BWT, SA, LF function, $F$, $L$ and sorted circular strings of a string $T = baababaabaabab\$$. 
For example, let $P = ab$, $c = b$, and $T = baababaabaabab\$$. 
Then $[b,e] = [5, 9]$, $[b',e'] = [14, 15]$, $\hat{b} = 5$, and $\hat{e} = 6$~(see also Figure~\ref{fig:bwt}). 
Moreover, $b' = \LF(\hat{b})$ and $e' = \LF(\hat{e})$ hold by Lemma~\ref{lem:backward_search}.

%The RLBWT of $T$ is the BWT encoded by the run-length encoding, 
%i.e., RLBWT is a partition of $L$ into $r$ substrings $\rlbwt(L) = L_{1}, L_{2}, \ldots, L_{r}$ 
The RLBWT of $T$ is a BWT encoded by run-length encoding; 
i.e., it is a partition of $L$ into $r$ substrings $\rlbwt(L) = L_{1}, L_{2}, \ldots, L_{r}$ 
such that each substring $L_{i}$ is a maximal repetition of the same character in $L$~(i.e., 
$L_{i}[1] = L_{i}[2] = \cdots = L_{i}[|L_{i}|]$ and 
$L_{i-1}[1] \neq L_{i}[1] \neq L_{i+1}[1]$).
%Each $L_{i}$ is called \emph{run}. 
Each $L_{i}$ is called a \emph{run}. 
%Let $\ell_{i}$ be the starting position of $i$-th run on BWT $L$, 
Let $\ell_{i}$ be the starting position of the $i$-th run of BWT $L$, 
i.e., $\ell_{1} = 1$, $\ell_{i} = \ell_{i-1} + |L_{i-1}|$ for $i \in [2, r]$. 
Let $\ell_{r+1} = n+1$. 
%RLBWT is represented as $r$ pairs $(L_{1}[1], \ell_{1})$, $(L_{2}[1], \ell_{2})$, $\ldots$, $(L_{r}[1], \ell_{r})$ using $2r$ words.
The RLBWT is represented as $r$ pairs $(L_{1}[1], \ell_{1})$, $(L_{2}[1], \ell_{2})$, $\ldots$, $(L_{r}[1], \ell_{r})$ using $2r$ words.
For example, 
$\rlbwt(L) = bbbbbb, aaaaaa, \$, aa$ for BWT $L$ illustrated in Figure~\ref{fig:bwt}. 
The RLBWT is represented as $(b, 1), (a, 7), (\$, 13)$, and $(a, 14)$.

%Let $\delta$ be the permutation of $[1,r]$ satisfying $\LF(\ell_{\delta[1]}) < \LF(\ell_{\delta[2]}) < \cdots < \LF(\ell_{\delta[r]})$. 
Let $\delta$ be a permutation of $[1,r]$ satisfying $\LF(\ell_{\delta[1]}) < \LF(\ell_{\delta[2]}) < \cdots < \LF(\ell_{\delta[r]})$. 
The LF function has the following properties on RLBWT. 
\begin{lemma}[e.g., Lemma 2.1 in \cite{DBLP:conf/soda/Kempa19}]\label{lem:LF_property}
The following two statements hold: 
(i) Let $x$ be the integer satisfying $\ell_{x} \leq i < \ell_{x+1}$ for some $i \in [1, n]$. 
Then, $\LF(i) = \LF(\ell_{x}) + (i - \ell_{x})$;  
(ii) $\LF(\ell_{\delta[1]}) = 1$ and 
$\LF(\ell_{\delta[i]}) = \LF(\ell_{\delta[i-1]}) + |L_{\delta[i-1]}|$ for all $i \in [2, r]$. 
\end{lemma}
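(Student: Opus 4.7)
The plan is to derive statement (i) directly from the standard $C$-and-rank formula for $\LF$, and then use (i) as the main combinatorial tool to prove (ii) via a partition argument on $[1, n]$.

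For (i), I would start from the identity $\LF(j) = C[L[j]] + \rank(L, L[j], j)$ recalled in Section~\ref{sec:bwt}. Since $\ell_x \leq i < \ell_{x+1}$ and the $x$-th run is a maximal block of a common character, $L[\ell_x] = L[\ell_x + 1] = \cdots = L[i] = c$ for some $c \in \Sigma$. Substituting $j = i$ and $j = \ell_x$ into the identity and subtracting gives
\[
\LF(i) - \LF(\ell_x) = \rank(L, c, i) - \rank(L, c, \ell_x) = i - \ell_x,
\]
which is the claim.

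For (ii), the key consequence of (i) is that $\LF$ sends the $x$-th run $[\ell_x, \ell_{x+1} - 1]$ bijectively onto the contiguous block $B_x := [\LF(\ell_x),\ \LF(\ell_x) + |L_x| - 1]$. Since $\LF$ is a bijection of $[1,n]$ (from its definition in terms of $\SA$) and the preimages $[\ell_x, \ell_{x+1} - 1]$ are pairwise disjoint with $\sum_{x=1}^{r} |L_x| = n$, the blocks $B_1, \ldots, B_r$ are pairwise disjoint intervals that partition $[1,n]$. Ordering these intervals by starting position is precisely the definition of $\delta$, so $B_{\delta[1]}$ is the leftmost interval, forcing $\LF(\ell_{\delta[1]}) = 1$, and for $i \geq 2$ the block $B_{\delta[i]}$ starts immediately after $B_{\delta[i-1]}$ ends, giving $\LF(\ell_{\delta[i]}) = \LF(\ell_{\delta[i-1]}) + |L_{\delta[i-1]}|$.

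The one point that deserves explicit care is the claim that $\LF$ is a bijection on $[1,n]$; this is a one-line verification from the definition, since $\LF$ factors as $\SA^{-1}$ composed with the cyclic-shift permutation $v \mapsto v - 1$ (with $1 \mapsto n$) acting on sa-values. Beyond this, the argument uses only the $C$-plus-rank formula and elementary interval counting, so I do not expect any substantive obstacle.
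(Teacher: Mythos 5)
Your proof is correct and takes essentially the same route as the paper's: part (i) via the $C$-plus-rank formula applied across a single run, and part (ii) via the observation that (i) together with the bijectivity of $\LF$ forces the images of the runs to be contiguous blocks partitioning $[1,n]$, ordered by $\delta$. The only small difference is that the paper establishes $\LF(\ell_{\delta[1]}) = 1$ by identifying $\delta[1]$ as the run containing $\$$, whereas you deduce it uniformly from the partition property (the leftmost block must start at $1$) — a slightly cleaner base case, but the underlying argument is the same.
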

\begin{proof}
See Appendix A.1.
\end{proof}

%Sequence $u_{1}, u_{2}, \ldots$, $u_{r+1}$ consists of the sa-values such that 
%(i) $\{ u_{1}, u_{2}, \ldots$, $u_{r+1} \} = \{ \SA[\ell_{1} + |L_{1}| - 1], \SA[\ell_{2} + |L_{2}| - 1], \ldots, \SA[\ell_{r} + |L_{r}| - 1], n+1 \}$ holds, 
%and (ii) $u_{1} < u_{2} < \cdots < u_{r+1} = n+1$ holds. 
The sequence $u_{1}, u_{2}, \ldots$, $u_{r+1}$ consists of sa-values such that 
(i) $\{ u_{1}, u_{2}, \ldots$, $u_{r+1} \} = \{ \SA[\ell_{1} + |L_{1}| - 1], \SA[\ell_{2} + |L_{2}| - 1], \ldots, \SA[\ell_{r} + |L_{r}| - 1], n+1 \}$, 
and (ii) $u_{1} < u_{2} < \cdots < u_{r+1} = n+1$. 
Let $\delta'$ be a permutation of $[1,r]$ satisfying $\phi^{-1}(u_{\delta'[1]}) < \phi^{-1}(u_{\delta'[2]}) < \cdots < \phi^{-1}(u_{\delta'[r]})$. 
$\phi^{-1}$ has the following properties on RLBWT. 
\begin{lemma}[Lemma 3.5 in \cite{10.1145/3375890}]\label{lem:phi_property}
The following three statements hold: 
(i) 
Let $x$ be the integer satisfying $u_{x} \leq i < u_{x+1}$ for some integer $i \in [1, n]$. 
Then $\phi^{-1}(i) = \phi^{-1}(u_{x}) + (i - u_{x})$;  
(ii) 
$\phi^{-1}(u_{\delta'[1]}) = 1$ and $\phi^{-1}(u_{\delta'[i]}) = \phi^{-1}(u_{\delta'[i-1]}) + d$ for all $i \in [2, r]$, 
where $d = u_{\delta'[i-1]+1} - u_{\delta'[i-1]}$; 
(iii) $u_{1} = 1$.
\end{lemma}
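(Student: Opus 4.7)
The plan is to prove the three parts in the order (i), (iii), (ii), since part (ii) will follow by combining the first two with the fact that $\phi^{-1}$ is a permutation of $[1, n]$.

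For part (i), the key observation I would exploit is that an sa-value $j$ lying strictly between $u_{x}$ and $u_{x+1}$ cannot equal $\SA[\ell_{y} + |L_{y}| - 1]$ for any $y$; equivalently, setting $i = \SA^{-1}(j)$, position $i$ is not the last position of its BWT-run, so $L[i] = L[i+1]$. The rank-based formula for $\LF$ then yields $\LF(i+1) = \LF(i) + 1$, and therefore
\[
    \phi^{-1}(j - 1) \;=\; \SA[\LF(i) + 1] \;=\; \SA[\LF(i+1)] \;=\; \SA[i+1] - 1 \;=\; \phi^{-1}(j) - 1.
\]
A short induction on $j - u_{x}$, with trivial base case $j = u_{x}$, then gives the claimed identity.

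For part (iii), I would consider the sa-position $k = \SA^{-1}(1)$. By the wraparound rule for $\LF$, $L[k] = T[\SA[\LF(k)]] = T[n] = \$$, and since $\$$ occurs exactly once in $T$ it forms a length-one BWT-run, so $k$ is simultaneously the first and the last position of its run. Thus $1 = \SA[k] \in \{u_{1}, \ldots, u_{r}\}$, and since every sa-value is at least $1$ this forces $u_{1} = 1$.

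For part (ii), I would use part (i) to observe that $\phi^{-1}$ maps each source interval $[u_{x}, u_{x+1} - 1]$ bijectively onto an image interval $[\phi^{-1}(u_{x}), \phi^{-1}(u_{x}) + d_{x} - 1]$ of the same length $d_{x} = u_{x+1} - u_{x}$. By part (iii) and the convention $u_{r+1} = n+1$, the source intervals partition $[1, n]$, and since $\phi^{-1}$ is a permutation of $[1, n]$ the image intervals must partition $[1, n]$ as well. Sorting the indices by $\phi^{-1}(u_{x})$, which is exactly what $\delta'$ does by definition, arranges these image intervals from left to right: the leftmost starts at $1$, giving $\phi^{-1}(u_{\delta'[1]}) = 1$, and each subsequent image interval begins immediately after the previous one ends, yielding $\phi^{-1}(u_{\delta'[i]}) = \phi^{-1}(u_{\delta'[i-1]}) + d_{\delta'[i-1]}$ as required.

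The main obstacle will be the bookkeeping around the wraparound cases in the definitions of $\LF$ and $\phi^{-1}$. In particular, the induction step for (i) needs $\SA[\LF(i) + 1]$ to be well-defined, i.e.\ $\LF(i) < n$; this is automatic because $L[i] = L[i+1]$ already implies $\LF(i+1) = \LF(i) + 1 \leq n$. For (ii), one must check that the image intervals truly tile $[1, n]$ with no gaps or overlaps, which follows from strict monotonicity of the $u_{x}$, the boundary values $u_{1} = 1$ and $u_{r+1} = n+1$, and the fact that $\phi^{-1}$ is a bijection of $[1, n]$.
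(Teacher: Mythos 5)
Your proof is correct and follows essentially the same route as the paper: part (i) via the same LF-preserves-run-intervals observation (you just iterate downward from $i$ where the paper iterates upward from $u_x$), part (iii) via the same length-one $\$$-run argument, and part (ii) via the same tiling-of-$[1,n]$ argument. The one small difference is that you obtain $\phi^{-1}(u_{\delta'[1]})=1$ as a byproduct of the tiling, whereas the paper derives it directly by exhibiting the specific $u_q$ (the sa-value just before position $\SA^{-1}(1)$) whose image is $1$; both are valid, and yours is marginally more uniform.
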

\begin{proof}
See Appendix A.2.
\end{proof}
See also Appendix A.3 for examples of Lemmas~\ref{lem:LF_property} and \ref{lem:phi_property}. 
\section{Novel data structures for computing LF and \texorpdfstring{$\phi^{-1}$}{inverse phi} functions}\label{sec:move_section}
In this section, we present two new data structures  for computing LF and $\phi^{-1}$ functions in constant time with $O(r)$ words of space. 
%Our key idea is to (i) divide the domains and ranges of two functions into at least $r$ non-overlapped intervals on RLBWT and 
Our key idea is to (i) divide the domains and ranges of two functions into at least $r$ non-overlapping intervals on RLBWT and 
%(ii) compute two functions for each domain and range by a linear search in a constant time. 
(ii) compute two functions for each domain and range by a linear search in constant time.
First, we introduce a notion named \emph{disjoint interval sequence} that is used for a function with non-overlapping intervals for its domain and range. 
%Then, we present \emph{move query} for computing a function on each disjoint interval sequence and a novel data structure for efficiently computing move query. 
Then, we present a \emph{move query} for computing a function on each disjoint interval sequence and a novel data structure for efficiently computing move queries. 
%Finally, we show that LF and $\phi^{-1}$ functions can be computed on two disjoint interval sequences using move queries. 
Finally, we show that LF and $\phi^{-1}$ can be computed on two disjoint interval sequences using move queries. 

\subsection{Disjoint interval sequence and move query}\label{sec:move_function}
Let $I = (p_{1}, q_{1}), (p_{2}, q_{2}), \ldots, (p_{k}, q_{k})$ be a sequence of $k$ pairs of integers. 
We introduce a permutation $\pi$ of $[1,k]$ and sequence $d_{1}, d_{2}, \ldots, d_{k}$ for $I$. 
%The permutation $\pi$ satisfies $q_{\pi[1]} \leq q_{\pi[2]} \leq \cdots \leq q_{\pi[k]}$, 
$\pi$ satisfies $q_{\pi[1]} \leq q_{\pi[2]} \leq \cdots \leq q_{\pi[k]}$, 
and $d_{i} = p_{i+1} - p_{i}$ for $i \in [1, k]$, 
where $p_{k+1} = n+1$. 
%We call the sequence $I$ \emph{disjoint interval sequence} if $I$ satisfies the following three conditions: 
We call the sequence $I$ a \emph{disjoint interval sequence} if it satisfies the following three conditions: 
(i) $p_{1} = 1 < p_{2} < \cdots < p_{k} \leq n$ holds, 
(ii) $q_{\pi[1]} = 1$, 
and (iii) $q_{\pi[i]} = q_{\pi[i-1]} + d_{\pi[i-1]}$ holds for each $i \in [2, k]$. 

\begin{wrapfigure}[12]{r}[1mm]{60mm}
\vspace{-2\baselineskip}
\begin{center}
		\includegraphics[width=0.4\textwidth]{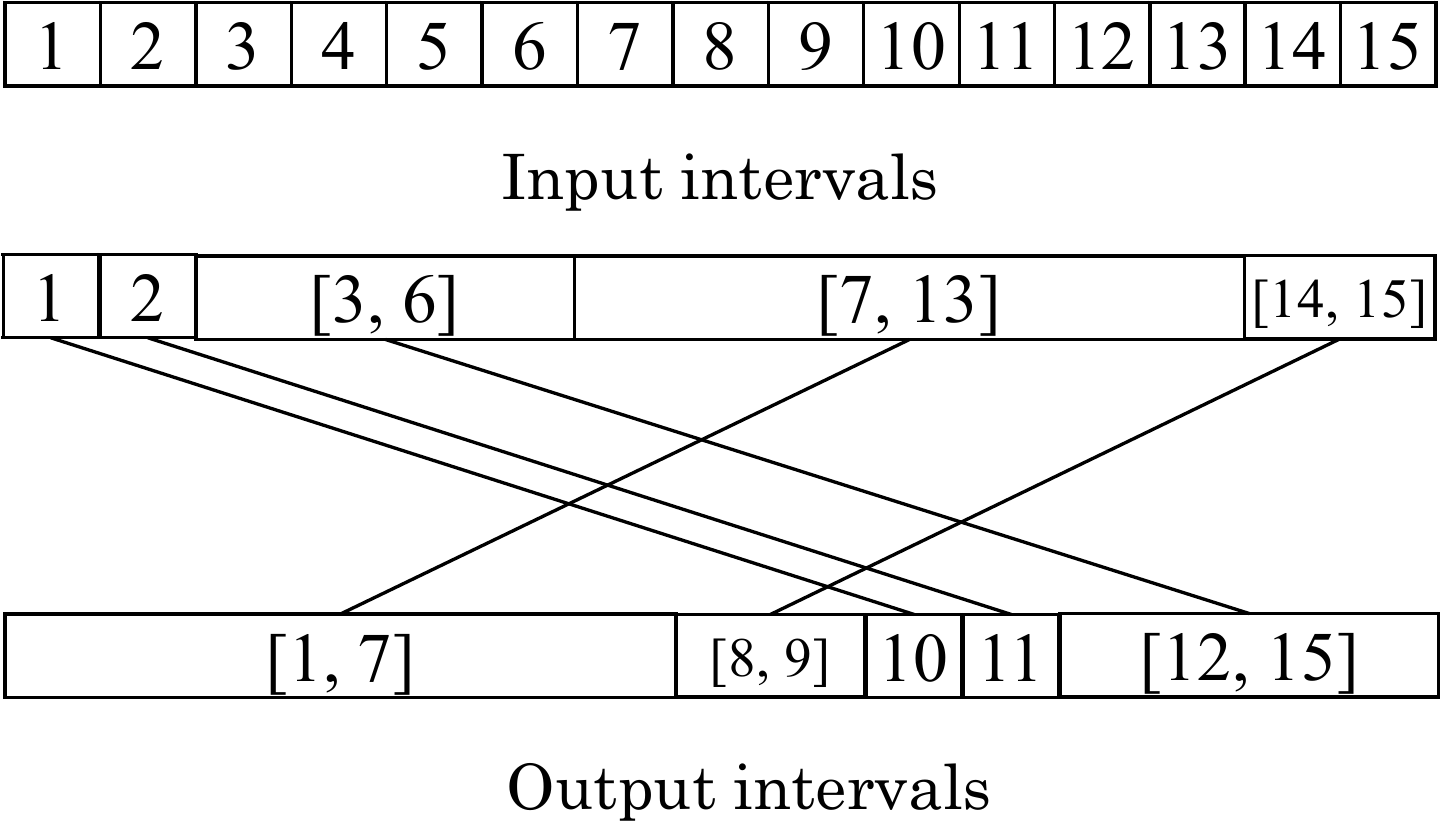}
%\caption{The input and output intervals created by $I = (1, 10)$, $(2, 11)$, $(3, 12)$, $(7, 1)$, $(14, 8)$. 
%Each $i$-th input and output intervals are connected by a black line.}
\caption{Input and output intervals created by $I = (1, 10)$, $(2, 11)$, $(3, 12)$, $(7, 1)$, $(14, 8)$. 
The $i$-th input and output intervals are connected by a black line.}
 \label{fig:intervals}
\end{center}
\end{wrapfigure}

%We call two intervals $[p_{i}, p_{i} + d_{i} - 1]$ and $[q_{i}, q_{i} + d_{i} - 1]$ $i$-th \emph{input and output intervals} 
%of the disjoint interval sequence $I$, respectively, for each $i \in [1, k]$. 
We call the two intervals $[p_{i}, p_{i} + d_{i} - 1]$ and $[q_{i}, q_{i} + d_{i} - 1]$ the $i$-th \emph{input and output intervals} 
of the disjoint interval sequence $I$, respectively, for each $i \in [1, k]$. 
%Input intervals $[p_{1}, p_{1} + d_{1} - 1]$, $[p_{2}, p_{2} + d_{2} - 1]$, $\ldots$, $[p_{k}, p_{k} + d_{k} - 1]$ do not overlap each other, 
%i.e., $[p_{i}, p_{i} + d_{i} - 1] \cap [p_{j}, p_{j} + d_{j} - 1] = \emptyset$ holds for any pair of two distinct integers $i, j \in [1, k]$. 
The input intervals $[p_{1}, p_{1} + d_{1} - 1]$, $[p_{2}, p_{2} + d_{2} - 1]$, $\ldots$, $[p_{k}, p_{k} + d_{k} - 1]$ do not overlap, 
i.e., $[p_{i}, p_{i} + d_{i} - 1] \cap [p_{j}, p_{j} + d_{j} - 1] = \emptyset$ holds for any pair of two distinct integers $i, j \in [1, k]$.
%Hence, the union of the input intervals is equal to interval $[1, n]$, i.e., $\bigcup_{i=1}^{k} [p_{i}, p_{i} + d_{i} - 1] = [1, n]$ holds. 
Hence, the union of the input intervals is equal to the interval $[1, n]$, i.e., $\bigcup_{i=1}^{k} [p_{i}, p_{i} + d_{i} - 1] = [1, n]$. 
%Similarly, output intervals $[q_{1}, q_{1} + d_{1} - 1]$, $[q_{2}, q_{2} + d_{2} - 1]$, $\ldots$, $[q_{k}, q_{k} + d_{k} - 1]$) do not overlap each other, 
Similarly, the output intervals $[q_{1}, q_{1} + d_{1} - 1]$, $[q_{2}, q_{2} + d_{2} - 1]$, $\ldots$, $[q_{k}, q_{k} + d_{k} - 1]$) do not overlap,  
%and the union of the output intervals is equal to interval $[1, n]$. 
and their union is equal to $[1, n]$.

%Move query $\movef(I, i, x)$ returns a pair $(i', x')$ on a disjoint interval sequence $I$ for a position $i \in [1, n]$ 
A move query $\movef(I, i, x)$ returns a pair $(i', x')$ on a disjoint interval sequence $I$ for a position $i \in [1, n]$ 
and the index $x$ of the input interval of $I$ containing the position $i$~(i.e., $x$ is the integer satisfying $i \in [p_{x}, p_{x}+d_{x}-1]$).
Here, $i' = q_{x} + (i - p_{x})$ and $x'$ is the index of the input interval of $I$ containing $i'$.
We can represent a bijective function using a disjoint interval sequence and move query. 
Formally, let $f_{I}(i) = i'$ for an integer $i \in [1, n]$, 
where $i'$ is the first value of the pair outputted by $\movef(I, i, x)$. 
%$f_{I}$ maps each $j$-th input interval into $j$-th output interval~(i.e., $f_{I}(i) = q_{j} + (i - p_{j})$ for $i \in [p_{j}, p_{j} + d_{j} - 1]$). 
$f_{I}$ maps the $j$-th input interval into the $j$-th output interval~(i.e., $f_{I}(i) = q_{j} + (i - p_{j})$ for $i \in [p_{j}, p_{j} + d_{j} - 1]$). 
Hence, $f_{I}$ is a bijective function from $[1, n]$ to $[1, n]$.

Figure~\ref{fig:intervals} illustrates the input and output intervals of 
the disjoint interval sequence $I = (1, 10)$, $(2, 11)$, $(3, 12)$, $(7, 1)$, $(14, 8)$, 
where $n = 15$. 
The input intervals created by $I$ are $[1,1], [2,2], [3, 6], [7,13]$, and $[14, 15]$. 
The output intervals created by $I$ are $[10, 10], [11,11], [12, 15], [1,7]$, and $[8, 9]$. 
For example, $\movef(I, 3, 3) = (12, 4)$, $\movef(I, 5, 3) = (14, 5)$, and $\movef(I, 8, 4) = (2, 2)$.

\subsection{Move data structure}\label{sec:graph_and_move_data_structure}
\begin{wrapfigure}[16]{r}[1mm]{60mm}
\vspace{-2\baselineskip}
\begin{center}
		\includegraphics[width=0.35\textwidth]{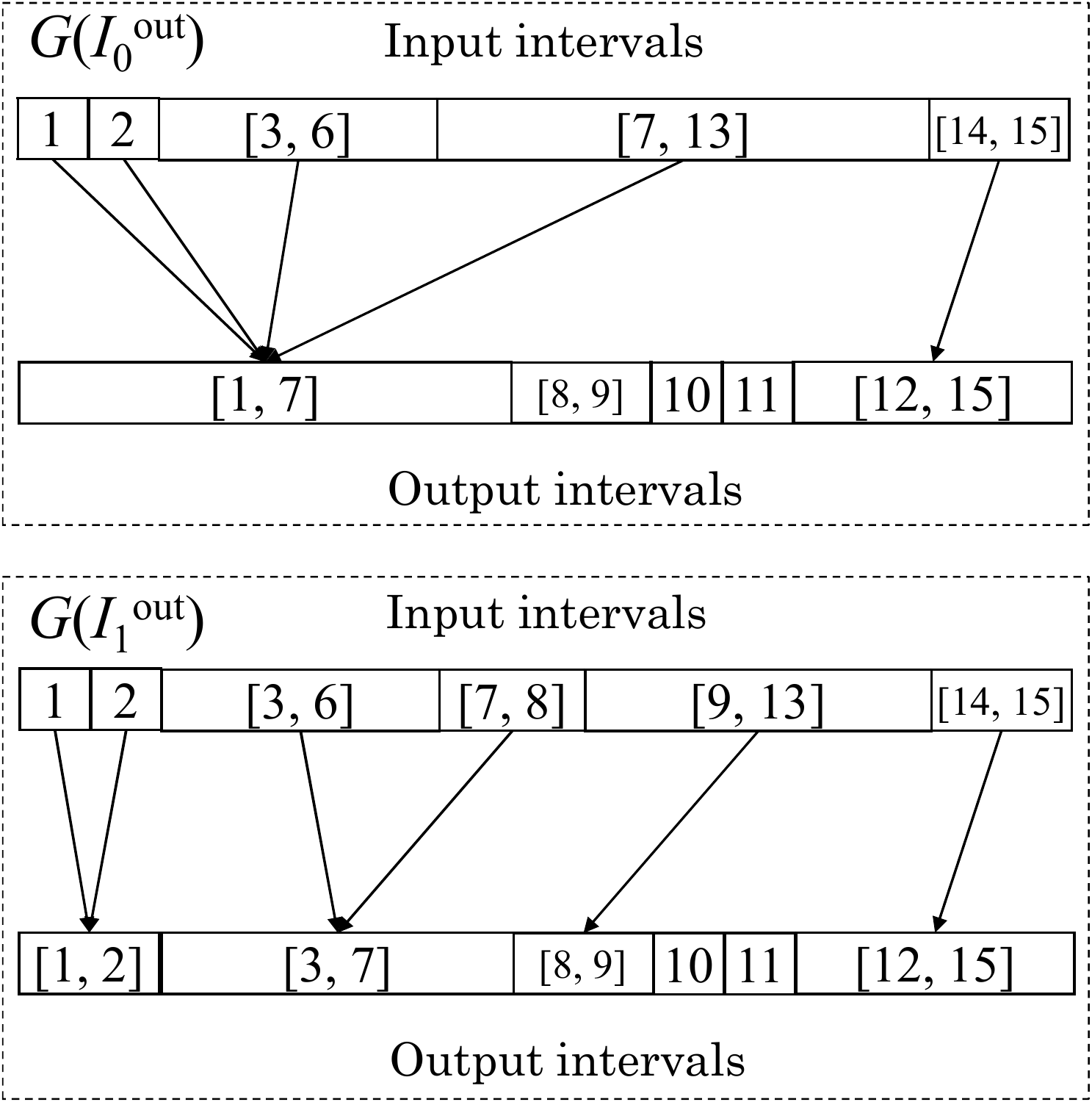}
\caption{
Two permutation graphs $G(I^{\mathsf{out}}_{0})$ and $G(I^{\mathsf{out}}_{1})$ for $I$. 
Here, $I$ is the disjoint interval sequence illustrated in Figure~\ref{fig:intervals}. 
}
 \label{fig:permutation_graph}
\end{center}
\end{wrapfigure}
%In this section, we present a data structure called \emph{move data structure} for computing move query in constant time. 
In this section, we present a data structure called \emph{move data structure} for computing move queries in constant time. 
%We introduce three notions called \emph{permutation graph}, \emph{split interval sequence}, and \emph{balanced interval sequence} to define move data structure. 
To do so, we introduce three notions, i.e., the \emph{permutation graph}, \emph{split interval sequence}, and \emph{balanced interval sequence}. 
%Permutation graph $G(I)$ is a directed graph for a disjoint interval sequence $I$. 
A permutation graph $G(I)$ is a directed graph for a disjoint interval sequence $I$. 
%The number of nodes in the graph $G(I)$ is $2k$, 
The number of nodes in $G(I)$ is $2k$, 
%and the nodes correspond one-by-one to the input and output intervals of $I$. 
and the nodes correspond one-by-one with the input and output intervals of $I$. 
Each input interval $[p_{i}, p_{i} + d_{i} - 1]$ has a single outgoing edge pointing to 
the output interval $[q_{j}, q_{j} + d_{j} - 1]$ containing $p_{i}$; i.e., $j$ is the integer satisfying $p_{i} \in [q_{j}, q_{j} + d_{j} - 1]$. 
Hence, $G(I)$ has $k$ edges. 
We say that $I$ is \emph{out-balanced} if 
every output interval has at most three incoming edges. 

%Split interval sequence $I^{\mathsf{out}}_{t}$ is a disjoint interval sequence for a disjoint interval sequence $I$ and an integer $t \geq 0$. 
A split interval sequence $I^{\mathsf{out}}_{t}$ is a disjoint interval sequence for a disjoint interval sequence $I$ and an integer $t \geq 0$. 
Let $I^{\mathsf{out}}_{0} = I$. 
For $t \geq 1$, 
we define $I^{\mathsf{out}}_{t}$ using $I^{\mathsf{out}}_{t-1}$ and two integers $j, d$  if $I^{\mathsf{out}}_{t-1}$ is not out-balanced. 
Let (i) $I^{\mathsf{out}}_{t-1} = (p'_{1}, q'_{1}), (p'_{2}, q'_{2}), \ldots, (p'_{k'}, q'_{k'})$, 
(ii) $j$ be the smallest integer such that 
%$j$-th output interval of $I^{\mathsf{out}}_{t-1}$ has at least four incoming edges in $G(I^{\mathsf{out}}_{t-1})$, 
the $j$-th output interval of $I^{\mathsf{out}}_{t-1}$ has at least four incoming edges in $G(I^{\mathsf{out}}_{t-1})$, 
and (iii) $d$ be the largest integer satisfying $|[q_{j}, q_{j} + d -1] \cap \{ p_{1}, p_{2}, \ldots, p_{k'} \}| = 2$.
Then, $I^{\mathsf{out}}_{t}$ is defined as $(p'_{1}, q'_{1})$, $(p'_{2}, q'_{2})$, $\ldots$, $(p'_{j-1}, q'_{j-1})$, 
$(p'_{j}, q'_{j})$, $(p'_{j}+d, q'_{j}+d)$, $\ldots$, $(p'_{k'}, q'_{k'})$. 
In other words, 
%$I^{\mathsf{out}}_{t}$ is created by splitting $j$-th pair $(p'_{j}, q'_{j})$ of $I^{\mathsf{out}}_{t-1}$ 
$I^{\mathsf{out}}_{t}$ is created by splitting the $j$-th pair $(p'_{j}, q'_{j})$ of $I^{\mathsf{out}}_{t-1}$ 
into two pairs $(p'_{j}, q'_{j})$ and $(p'_{j}+ d, q'_{j} + d)$.
Let $\tau \geq 0$ be the smallest integer such that $I^{\mathsf{out}}_{\tau}$ is out-balanced. 

Figure~\ref{fig:permutation_graph} illustrates two permutation graphs $G(I^{\mathsf{out}}_{0})$ and $G(I^{\mathsf{out}}_{1})$, 
where $I$ is the disjoint interval sequence illustrated in Figure~\ref{fig:intervals}, i.e., 
$I = (1, 10)$, $(2, 11)$, $(3, 12)$, $(7, 1)$, $(14, 8)$. 
The fourth output interval $[1, 7]$ of $I^{\mathsf{out}}_{0}$ has four incoming edges, 
and the other output intervals have at most one incoming edge in $G(I^{\mathsf{out}}_{0})$. 
Hence, $I^{\mathsf{out}}_{1} = (1, 10)$, $(2, 11)$, $(3, 12)$, $(7, 1)$, $(9, 3)$, $(14, 8)$ holds by $j=4$ and $d=2$. 
$I^{\mathsf{out}}_{1}$ is out-balanced, and hence $\tau = 1$ holds.

The split interval sequence has the following four properties for each $t \in [0, \tau]$: 
(i) $I^{\mathsf{out}}_{t}$ consists of $k + t$ pairs. 
(ii) $I^{\mathsf{out}}_{t}$ consists of at least $2t$ pairs. 
(iii) 
Let $d'_{i} = p'_{i+1} - p'_{i}$ for $i \in [1, k']$ and $p'_{k'+1} = n+1$. 
Both output intervals $[q'_{j}, q'_{j} + d -1]$ and $[q'_{j} + d, q'_{j} + d'_{j} -1]$ have at least two incoming edges in $G(I^{\mathsf{out}}_{t})$.  
(iv) Let $f_{I}$ and $f^{t}_{I}$ be the two bijective functions represented by $I$ and $I^{\mathsf{out}}_{t}$, respectively. 
Then, $f_{I}(i) = f^{t}_{I}(i)$ holds for $i \in [1,n]$. 
%Formally, we obtain the second property by the following lemma. 
Formally, we obtain the second property from the following lemma. 
\begin{lemma}\label{lem:balance_size}
$|I^{\mathsf{out}}_{t}| \geq 2t$ holds for any $t \in [0, \tau]$.
\end{lemma}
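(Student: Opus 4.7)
The plan is to reduce the target $|I^{\mathsf{out}}_t| \geq 2t$ to the bound $t \leq k$, where $k = |I^{\mathsf{out}}_0|$, and then prove that bound by a double-counting argument on incoming edges in $G(I^{\mathsf{out}}_t)$. Since each split operation adds exactly one pair to the sequence, we have $|I^{\mathsf{out}}_t| = k + t$ (this is property~(i) listed just above the lemma), so the inequality $|I^{\mathsf{out}}_t| \geq 2t$ is equivalent to $t \leq k$.

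To bound $t$ by $k$, I would partition the $k + t$ output intervals of $I^{\mathsf{out}}_t$ into two classes according to the natural ``split forest'' whose $k$ roots are the original output intervals of $I^{\mathsf{out}}_0$ and in which every split creates two children. Let $a_t$ be the number of original output intervals that have not yet been chosen to be split, and $b_t$ the number of output intervals that were born as one of the two halves of some earlier split. Every member of $I^{\mathsf{out}}_t$ falls into exactly one of these classes, so $a_t + b_t = k + t$, and clearly $a_t \leq k$.

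The key step is to show that each of the $b_t$ split-produced output intervals carries at least two incoming edges in $G(I^{\mathsf{out}}_t)$. I plan to combine two observations: (a) at the step $s \leq t$ when such an interval is created, property~(iii) guarantees it has at least two incoming edges in $G(I^{\mathsf{out}}_s)$; and (b) between step $s$ and step $t$ the set of incoming edges of an unsplit output interval can only grow, because every subsequent split leaves the boundaries of unsplit intervals unchanged and merely inserts one new input interval (hence at most one new outgoing edge landing in any given unsplit interval). Together these give the claim. Since the total number of edges in $G(I^{\mathsf{out}}_t)$ equals $|I^{\mathsf{out}}_t| = k + t$, summing over the $b_t$ split-produced leaves gives $2 b_t \leq k + t$, and together with $a_t \leq k$ this yields $k + t = a_t + b_t \leq k + (k+t)/2$, hence $t \leq k$, which proves the lemma.

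The main obstacle I anticipate is pinning down the monotonicity claim (b) with care, because a single split simultaneously (i) inserts a brand-new input interval whose outgoing edge lands somewhere (possibly in the very interval being tracked) and (ii) replaces one output interval by two halves whose boundaries differ from the old one. I have to verify that for every output interval other than the one being split, both its position/length and the set of input-interval start positions it contains are monotone in $t$, so that an edge count of at least two established at birth is preserved up to step $t$. Once this monotonicity is established, the counting step collapsing $2b_t + a_t \leq k + t$ with $a_t \leq k$ is purely arithmetic.
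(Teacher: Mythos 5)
Your proof is correct and essentially matches the paper's argument: both rest on property~(iii) (each split yields two children with at least two incoming edges), the monotonicity of incoming edges for output intervals that are not being split (the set of input-interval start positions only grows), and the observation that the total number of edges equals $|I^{\mathsf{out}}_t|$. Where the paper tracks the set $\mathsf{Edge}_2(I^{\mathsf{out}}_t)$ of output intervals with at least two incoming edges and proves $|\mathsf{Edge}_2(I^{\mathsf{out}}_t)| \geq t$ by induction, you instead track the split-produced intervals (a subset of $\mathsf{Edge}_2$) and obtain $b_t \geq t$ directly from $a_t \leq k$ and $a_t + b_t = k + t$, but the concluding double-count of edges is identical.
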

\begin{proof}
See Appendix B.1.
\end{proof}

A balanced interval sequence $B(I)$ is defined as $I^{\mathsf{out}}_{\tau}$ for a disjoint interval sequence $I$. 
%We obtain the following lemma by the four properties of $I^{\mathsf{out}}_{\tau}$. 
We obtain the lemma below from the four properties of $I^{\mathsf{out}}_{\tau}$. 
\begin{lemma}\label{lem:balanced_sequence2}
Let $f_{I}$ and $f_{B(I)}$ be the two bijective functions represented by $I$ and $B(I)$, respectively for a disjoint interval sequence $I$ of length $k$. 
The following three statements hold: 
(i) $|B(I)| \leq 2k$, (ii) $B(I)$ is out-balanced, 
and (iii) the two disjoint interval sequences $I$ and $B(I)$ represent the same bijective function, i.e., 
$f_{I}(i) = f_{B(I)}(i)$ for $i \in [1,n]$.   
\end{lemma}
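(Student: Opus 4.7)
The plan is to derive all three claims directly from the four properties of $I^{\mathsf{out}}_{t}$ listed just before the lemma, specialized to $t=\tau$, together with Lemma~\ref{lem:balance_size}. Because those properties already encode most of the structural content of the balanced interval sequence, the argument reduces to short bookkeeping, and I would present the three items in order from easiest to hardest.

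I would first dispatch claims (ii) and (iii), since they require no calculation. By construction, $\tau$ is the smallest nonnegative integer for which $I^{\mathsf{out}}_{\tau}$ is out-balanced, so $B(I) = I^{\mathsf{out}}_{\tau}$ is out-balanced, giving (ii). For (iii), property (iv) asserts $f_{I}(i) = f^{t}_{I}(i)$ for every $i \in [1,n]$ and every $t \in [0,\tau]$; instantiating at $t=\tau$ gives $f_{I}(i) = f_{B(I)}(i)$ for all $i \in [1,n]$, as required.

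For claim (i), I would combine property (i) and Lemma~\ref{lem:balance_size} at $t = \tau$. Property (i) states $|B(I)| = |I^{\mathsf{out}}_{\tau}| = k + \tau$, and Lemma~\ref{lem:balance_size} supplies the lower bound $|I^{\mathsf{out}}_{\tau}| \geq 2\tau$. Putting these together gives $k + \tau \geq 2\tau$, hence $\tau \leq k$, and substituting back yields $|B(I)| = k + \tau \leq 2k$. As a byproduct, this also certifies that $\tau$ is finite, so that $B(I)$ is well defined.

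The hard part of the whole argument has effectively been pushed into Lemma~\ref{lem:balance_size}, which is where the combinatorial content actually lives: the lower bound $|I^{\mathsf{out}}_{t}| \geq 2t$ is driven by property (iii) of the split interval sequence, which guarantees that each splitting step leaves both halves with at least two incoming edges, so that enough pairs accumulate relative to the number of splits. Once that lemma is granted, the present lemma is essentially a one-line combination of stated facts, and I would expect no additional obstacle.
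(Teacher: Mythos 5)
Your proposal is correct and follows essentially the same route as the paper: parts (ii) and (iii) follow directly from the definition of $\tau$ and property (iv) of $I^{\mathsf{out}}_{t}$, and part (i) combines property (i) ($|I^{\mathsf{out}}_{t}| = k+t$) with Lemma~\ref{lem:balance_size} ($|I^{\mathsf{out}}_{t}| \geq 2t$) to conclude $\tau \leq k$ and hence $|B(I)| \leq 2k$, exactly as the paper does. Your added remark that this bound also certifies termination of the splitting process (so $\tau$ is finite and $B(I)$ is well defined) is a worthwhile observation the paper leaves implicit.
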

\begin{proof}
%(i) We obtain an inequality $\tau \leq k$ by the first and second properties of $I^{\mathsf{out}}_{t}$, 
(i) We obtain an inequality $\tau \leq k$ from the first and second properties of $I^{\mathsf{out}}_{t}$, 
because $k + t \geq 2t$ must hold for any $t \in [0, \tau]$. 
Hence, $I^{\mathsf{out}}_{\tau}$ consists of at most $2k$ pairs; i.e., $|B(I)| \leq 2k$ holds.
\end{proof}

%Move data structure $F(I)$ is built on balanced interval sequence 
The move data structure $F(I)$ is built on a balanced interval sequence 
$B(I) = (p_{1}, q_{1})$, $(p_{2}, q_{2})$, $\ldots$, $(p_{k'}, q_{k'})$ for a disjoint interval sequence $I$, 
and it supports move queries on $B(I)$. 
The move data structure consists of two arrays $D_{\mathsf{pair}}$ and $D_{\mathsf{index}}$ of size $k'$. 
$D_{\mathsf{pair}}[i]$ stores the $i$-th pair $(p_{i}, q_{i})$ of $B(I)$ for each $i \in [1, k']$. 
$D_{\mathsf{index}}[i]$ stores the index $j$ of the input interval containing $q_{i}$.  
Hence, the space usage is $3k'$ words in total. 

%We explain an algorithm solving a move query $\movef(B(I), i, x) = (i', x')$ on $B(I)$, 
Now let us describe an algorithm for solving a move query $\movef(B(I), i, x) = (i', x')$ on $B(I)$, 
where $x$ and $x'$ are the indexes of the two input intervals of $B(I)$ containing $i$ and $i'$, respectively, 
and $i' = q_{x} + (i - p_{x})$. 
The algorithm consists of three steps. 
%At the first step, the algorithm computes $i' = q_{x} + (i - p_{x})$. 
In the first step, the algorithm computes $i' = q_{x} + (i - p_{x})$. 
%At the second step, 
In the second step, 
the algorithm finds the $x'$-th input interval by a linear search on the input intervals of $B(I)$. 
Let $b = D_{\mathsf{index}}[x]$.
The linear search starts at the $b$-th input interval $[p_{b}, p_{b+1}-1]$, 
reads the input intervals in the left-to-right order, 
and stops if the input interval containing position $i'$ is found~(i.e., the $x'$-th input interval). 
The linear search is always successful~(i.e., $x' \geq b$), 
because $i' \geq q_{x}$ holds. 
In the third step, the algorithm returns the pair $(i', x')$. 
The running time of the algorithm is $O(x' - b + 1)$ in total. 

%We analyze the running time. 
The running time is computed as follows. 
Let $i_{\mathsf{beg}}$ and $i_{\mathsf{end}}$ be the indexes of the first and last input intervals that are connected to the $x$-th output interval in $G(B(I))$. 
The $x$-th output interval has at most three incoming edges, 
and hence, $i_{\mathsf{end}} - i_{\mathsf{beg}} + 1 \leq 3$ holds. 
Since $b$ is the index of an input interval that overlaps the $x$-th output interval, 
$i_{\mathsf{beg}} - 1 \leq b \leq i_{\mathsf{end}}$. 
Similarly, $i_{\mathsf{beg}} - 1 \leq x' \leq i_{\mathsf{end}}$. 
Therefore, $x' - b \leq 3$ and 
we can solve the move query in constant time. 

\subsection{Computing LF and \texorpdfstring{$\phi^{-1}$}{inverse phi} functions using move data structures}\label{sec:lf_phi_move}
%We show that we can compute LF function using a move data structure. 
Here, we show that we can compute the LF function using a move data structure. 
Recall that 
$\ell_{i}$ is the starting position of the $i$-th run on BWT $L$ for $i \in [1, r]$, 
and $\delta$ is the permutation of $[1, r]$ introduced in Section~\ref{sec:bwt}.
The sequence $I_{\mathsf{LF}}$ is defined as 
$r$ pairs $(\ell_{1}, \LF(\ell_{1}))$, $(\ell_{2}, \LF(\ell_{2}))$, $\ldots$, $(\ell_{r}, \LF(\ell_{r}))$. 
$I_{\mathsf{LF}}$ satisfies the three conditions of a disjoint interval sequence by Lemma~\ref{lem:LF_property}, i.e., 
(i) $\ell_{1} = 1 < \ell_{2} < \cdots < \ell_{r} \leq n$, 
(ii) $\LF(\ell_{\delta[1]}) = 1$, 
and (iii) $\LF(\ell_{\delta[i]}) = \LF(\ell_{\delta[i-1]}) + |L_{\delta[i-1]}|$ holds for each $i \in [2, r]$. 
Hence $I_{\mathsf{LF}}$ is a disjoint interval sequence.

Let $f_{\mathsf{LF}}$ be the bijective function represented by the disjoint interval sequence $I_{\mathsf{LF}}$. 
Then, $f_{\mathsf{LF}}(i) = \LF(\ell_{x}) + (i - \ell_{x})$ holds, 
where $x$ is the integer such that $\ell_{x} \leq i < \ell_{x+1}$ holds. 
%On the other hand, $\LF(i) = \LF(\ell_{x}) + (i - \ell_{x})$ holds by Lemma~\ref{lem:LF_property}(i). 
On the other hand, we have $\LF(i) = \LF(\ell_{x}) + (i - \ell_{x})$ by Lemma~\ref{lem:LF_property}(i). 
Hence, $f_{\mathsf{LF}}$ and LF are the same function, i.e., $\LF(i) = f_{\mathsf{LF}}(i)$ for $i \in [1, n]$.  

Let $F(I_{\mathsf{LF}})$ be the move data structure built on the balanced interval sequence $B(I_{\mathsf{LF}})$ for $I_{\mathsf{LF}}$. 
By Lemma~\ref{lem:balanced_sequence2}, 
the move data structure requires $O(r)$ words of space, 
and $\LF(i) = i'$ holds for a move query $\movef(B(I), i, x) = (i', x')$ on $B(I_{\mathsf{LF}})$. 
Hence, we have proven the following theorem. 
\begin{theorem}\label{theo:lf_inv_theorem}
Let $x$ and $x'$ be the indexes of the two input intervals of $B(I_{\mathsf{LF}})$ containing an integer $i \in [1, n]$ and $\LF(i)$, respectively. 
We can compute $\LF(i)$ and $x'$ in constant time 
by using $F(I_{\mathsf{LF}})$ and $(i, x)$. 
%Here, 
\end{theorem}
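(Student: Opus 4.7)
The plan is to derive the theorem as a direct consequence of the move-query algorithm described in Section~\ref{sec:graph_and_move_data_structure}, once we verify that $B(I_{\mathsf{LF}})$ is both a faithful representation of $\LF$ and out-balanced. The argument has essentially three moving parts: (a) identify $f_{I_{\mathsf{LF}}}$ with $\LF$, (b) transport this identification to $B(I_{\mathsf{LF}})$, and (c) run the move query and observe that both outputs are exactly what the theorem asks for.

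First I would use Lemma~\ref{lem:LF_property} to check the three defining conditions of a disjoint interval sequence for $I_{\mathsf{LF}} = (\ell_1, \LF(\ell_1)), \ldots, (\ell_r, \LF(\ell_r))$: $\ell_1 = 1 < \ell_2 < \cdots < \ell_r \leq n$ is immediate from the definition of runs, while conditions (ii) and (iii) are exactly parts (ii) of Lemma~\ref{lem:LF_property} applied along the permutation $\delta$. With $I_{\mathsf{LF}}$ recognised as a disjoint interval sequence, its induced bijection $f_{I_{\mathsf{LF}}}$ sends any $i$ lying in the $x$-th input interval to $\LF(\ell_x) + (i - \ell_x)$, and part (i) of the same lemma says this equals $\LF(i)$. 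Hence $f_{I_{\mathsf{LF}}} \equiv \LF$ on $[1,n]$.

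Next I would invoke Lemma~\ref{lem:balanced_sequence2} to pass from $I_{\mathsf{LF}}$ to its balanced version $B(I_{\mathsf{LF}})$. The lemma supplies the three facts we need in one stroke: $|B(I_{\mathsf{LF}})| \leq 2r$, so the move data structure $F(I_{\mathsf{LF}})$ occupies $O(r)$ words; $B(I_{\mathsf{LF}})$ is out-balanced, which is exactly the prerequisite for the constant-time move-query algorithm; and $f_{B(I_{\mathsf{LF}})} = f_{I_{\mathsf{LF}}} = \LF$, so the move query on $B(I_{\mathsf{LF}})$ still computes LF. Finally, given the input $(i, x)$ promised by the theorem, I would execute $\movef(B(I_{\mathsf{LF}}), i, x)$: compute $i' = q_x + (i - p_x)$ from $D_{\mathsf{pair}}[x]$, then linear-scan the input intervals from $b = D_{\mathsf{index}}[x]$ until $i'$ is contained, and return $(i', x')$. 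Out-balancedness bounds the scan by at most three steps, so the whole procedure is $O(1)$; by the identification above, $i' = \LF(i)$, and $x'$ is by construction the index of the input interval of $B(I_{\mathsf{LF}})$ containing $\LF(i)$.

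There is no genuine technical obstacle here — the heavy lifting has already been done in Lemmas~\ref{lem:LF_property} and \ref{lem:balanced_sequence2} and in the analysis of the move-query algorithm. The only subtle point to watch is the \emph{bookkeeping} for the second output coordinate: one must be careful to define $x'$ relative to the input intervals of the \emph{balanced} sequence $B(I_{\mathsf{LF}})$ rather than of $I_{\mathsf{LF}}$ itself, since the splitting procedure can change the number and boundaries of input intervals; after this identification, $x'$ is simply what the move-query algorithm returns.
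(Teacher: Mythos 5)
Your proposal follows the paper's own argument step for step: recognise $I_{\mathsf{LF}}$ as a disjoint interval sequence via Lemma~\ref{lem:LF_property}, identify $f_{I_{\mathsf{LF}}}$ with $\LF$, transfer to $B(I_{\mathsf{LF}})$ via Lemma~\ref{lem:balanced_sequence2}, and read off $\LF(i)$ and $x'$ from the constant-time move query of Section~\ref{sec:graph_and_move_data_structure}. This is correct and essentially identical to the paper's proof; the closing remark about $x'$ being defined relative to the balanced sequence's input intervals is a fair bookkeeping observation that the paper handles implicitly.
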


%Similarly, we show that we can compute $\phi^{-1}$ function using a move data structure. 
Similarly, we can show that we can compute $\phi^{-1}$ by using a move data structure. 
%A sequence $I_{\mathsf{SA}}$ is $r$ pairs $(u_{1}, \phi^{-1}(u_{1}))$, $(u_{2}, \phi^{-1}(u_{2}))$, $\ldots$, $(u_{r}$, $\phi^{-1}(u_{r}))$, 
A sequence $I_{\mathsf{SA}}$ consists of $r$ pairs $(u_{1}, \phi^{-1}(u_{1}))$, $(u_{2}, \phi^{-1}(u_{2}))$, $\ldots$, $(u_{r}$, $\phi^{-1}(u_{r}))$, 
where $u_{1}, u_{2}, \ldots, u_{r}$ are the integers introduced in Section~\ref{sec:bwt}. 
$I_{\mathsf{SA}}$ satisfies the three conditions of a disjoint interval sequence by Lemma~\ref{lem:phi_property},  
and $\phi^{-1}$ is equal to the bijective function represented by $I_{\mathsf{SA}}$~(see Appendix B.2 for details). 
Let $F(I_{\mathsf{SA}})$ be the move data structure built on the balanced interval sequence $B(I_{\mathsf{SA}})$ for $I_{\mathsf{SA}}$. 
Then, the result of a move query on $B(I_{\mathsf{SA}})$ contains $\phi^{-1}(i)$ for $i \in [1, n]$, 
and hence, we have proven (i) of the following theorem. 
\begin{theorem}\label{theo:phi_inv_theorem}
Let $x$, $x'$, $\hat{x}$ be the indexes of the three input intervals of $B(I_{\mathsf{SA}})$ containing an integer $i \in [1, n]$, $\phi^{-1}(i)$, and $i-1$, respectively. 
Then, the following two statements hold: 
(i) We can compute $\phi^{-1}(i)$ and $x'$ in constant time using data structure $F(I_{\mathsf{SA}})$ and the pair $(i, x)$. 
(ii) We can compute the index $\hat{x}$ using $F(I_{\mathsf{SA}})$ and $(i, x)$.
\end{theorem}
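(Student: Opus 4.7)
The plan is to mirror the development already carried out for $\LF$ in Theorem~\ref{theo:lf_inv_theorem} and then handle the extra statement (ii) with a short case analysis. For (i), the first step is to verify that $I_{\mathsf{SA}}=(u_1,\phi^{-1}(u_1)),(u_2,\phi^{-1}(u_2)),\ldots,(u_r,\phi^{-1}(u_r))$ satisfies the three defining conditions of a disjoint interval sequence. The conditions $p_1=1<p_2<\cdots<p_k\leq n$ and $q_{\pi[1]}=1$ translate into $u_1=1<u_2<\cdots<u_r\leq n$ and $\phi^{-1}(u_{\delta'[1]})=1$, both of which follow immediately from the strict ordering of the $u_i$'s combined with parts (ii) and (iii) of Lemma~\ref{lem:phi_property}. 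The remaining recurrence $q_{\pi[i]}=q_{\pi[i-1]}+d_{\pi[i-1]}$, read with $\pi=\delta'$ and $d_i=u_{i+1}-u_i$, is exactly the second half of Lemma~\ref{lem:phi_property}(ii).

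Once $I_{\mathsf{SA}}$ is certified as a disjoint interval sequence, the next step is to identify its bijective function $f_{I_{\mathsf{SA}}}$ with $\phi^{-1}$. This is an immediate consequence of Lemma~\ref{lem:phi_property}(i): on the $x$-th input interval $[u_x,u_{x+1}-1]$ we have $\phi^{-1}(i)=\phi^{-1}(u_x)+(i-u_x)=f_{I_{\mathsf{SA}}}(i)$. Therefore invoking the constant-time move query $\movef(B(I_{\mathsf{SA}}),i,x)$ on the move data structure $F(I_{\mathsf{SA}})$, which occupies $O(r)$ words by Lemma~\ref{lem:balanced_sequence2}, returns the pair $(\phi^{-1}(i),x')$ in $O(1)$ time; this is exactly (i), and the argument is the verbatim $\phi^{-1}$-analogue of Theorem~\ref{theo:lf_inv_theorem}.

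For (ii), I would exploit the fact that the input intervals of $B(I_{\mathsf{SA}})$ partition $[1,n]$ into consecutive blocks $[p_1,p_2-1],[p_2,p_3-1],\ldots,[p_{k'},n]$, so the block containing $i-1$ is either the same as the block containing $i$ or its immediate left neighbour. Concretely, read $p_x$ from $D_{\mathsf{pair}}[x]$ in $O(1)$ time and compare it with $i$: if $i>p_x$ then $i-1\in[p_x,p_{x+1}-1]$ so $\hat{x}=x$; otherwise $i=p_x$ and $\hat{x}=x-1$. The only real bookkeeping step in the whole proof is the alignment between the abstract disjoint-interval-sequence conditions and the indexing by $\delta'$ inherited from Lemma~\ref{lem:phi_property}(ii); after that, (i) is a direct analogue of the $\LF$ argument and (ii) is a single comparison, so there is no genuinely hard step.
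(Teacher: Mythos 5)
Your proof is correct and follows essentially the same route as the paper: for (i) you certify that $I_{\mathsf{SA}}$ is a disjoint interval sequence via Lemma~\ref{lem:phi_property}, identify its bijective function with $\phi^{-1}$, and invoke a single constant-time move query on $F(I_{\mathsf{SA}})$, exactly as in the paper's discussion preceding the theorem and in Appendix~B.2; for (ii) the paper likewise reads $p_x$ from the move data structure and sets $\hat{x}=x$ when $p_x\neq i$ and $\hat{x}=x-1$ otherwise, which is the same one-comparison case analysis you give.
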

\begin{proof}
(ii) 
Let $B(I_{\mathsf{SA}}) = (p_{1}, q_{1})$, $(p_{2}, q_{2})$, $\ldots$, $(p_{k'}, q_{k'})$. 
The $x$-th input interval is $[p_{x}, p_{x+1}-1]$, which contains $i$. 
$\hat{x} = x$ holds if $p_{x} \neq i$; otherwise, $\hat{x} = x-1$. 
We can verify $p_{x} \neq i$ holds in constant time by using $F(I_{\mathsf{SA}})$. 
\end{proof}

Theorems~\ref{theo:lf_inv_theorem} and \ref{theo:phi_inv_theorem} indicate that 
we can compute the position obtained by recursively applying LF and $\phi^{-1}$ to a position $i \in [1, n]$ $t$ times in $O(t)$ time 
if we know the index of the input interval containing $i$. 
For example, let $x, x', x''$, and $x'''$ be the indexes of the four input intervals of $B(I_{\mathsf{LF}})$ 
containing $i$, $\LF(i)$, $\LF(\LF(i))$, and $\LF(\LF(\LF(i)))$, respectively. 
$\LF(\LF(\LF(i)))$ can be computed by computing three move queries $\movef(B(I_{\mathsf{LF}}), i, x) = (\LF(i), x')$, $\movef(B(I_{\mathsf{LF}}), \LF(i), x') = (\LF(\LF(i)), x'')$, and $\movef(B(I_{\mathsf{LF}}), \LF(\LF(i)), x'') = (\LF(\LF(\LF(i))), x''')$.

\section{New data structure for backward searches}\label{sec:backward_search}
%We present a modified version of \emph{backward search}~\cite{DBLP:journals/jacm/FerraginaM05,DBLP:journals/tcs/BannaiGI20}, 
Here, we present a modified version of the \emph{backward search}~\cite{DBLP:journals/jacm/FerraginaM05,DBLP:journals/tcs/BannaiGI20}, 
which we call \emph{backward search query for OptBWTR}~(BSR query), 
%for computing the sa-interval of $cP$ for given string $P$ and character $c$. 
for computing the sa-interval of $cP$ for a given string $P$ and character $c$. 
%We introduce a tuple named \emph{balanced sa-interval} to define BSR query. 
To define the BSR query, we will introduce a new tuple: a \emph{balanced sa-interval} of a string $P$ is a 6-tuple $(b, e, \SA[b], i, j, v)$. 
%Balanced sa-interval of a string $P$ is a 6-tuple $(b, e, \SA[b], i, j, v)$. 
Here, (i) $[b,e]$ is the sa-interval of $P$;  
(ii) $i$ and $j$ are the indexes of the two input intervals of $B(I_{\mathsf{LF}})$ containing $b$ and $e$, respectively;  
(iii) $v$ is the index of the input interval of $B(I_{\mathsf{SA}})$ containing $\SA[b]$. 
The balanced sa-interval of $P$ is undefined if the sa-interval of $P$ is $\emptyset$~(i.e., $P$ is not a substring of $T$). 
%The input of BSR query is the balanced sa-interval $(b, e, \SA[b], i, j, v)$ of a string $P$ and a character $c$. 
The input of the BSR query is the balanced sa-interval $(b, e, \SA[b], i, j, v)$ of a string $P$ and a character $c$. 
%The output of BSR query is the balanced sa-interval $(b', e', \SA[b'], i', j', v')$ of string $cP$ if the sa-interval of $cP$ is not the empty set; 
The output of the BSR query is the balanced sa-interval $(b', e', \SA[b'], i', j', v')$ of string $cP$ if the sa-interval of $cP$ is not the empty set; 
otherwise BSR outputs a mark $\perp$. 

\begin{wrapfigure}[12]{r}[1mm]{60mm}
\vspace{-2\baselineskip}
\begin{center}
		\includegraphics[width=0.4\textwidth]{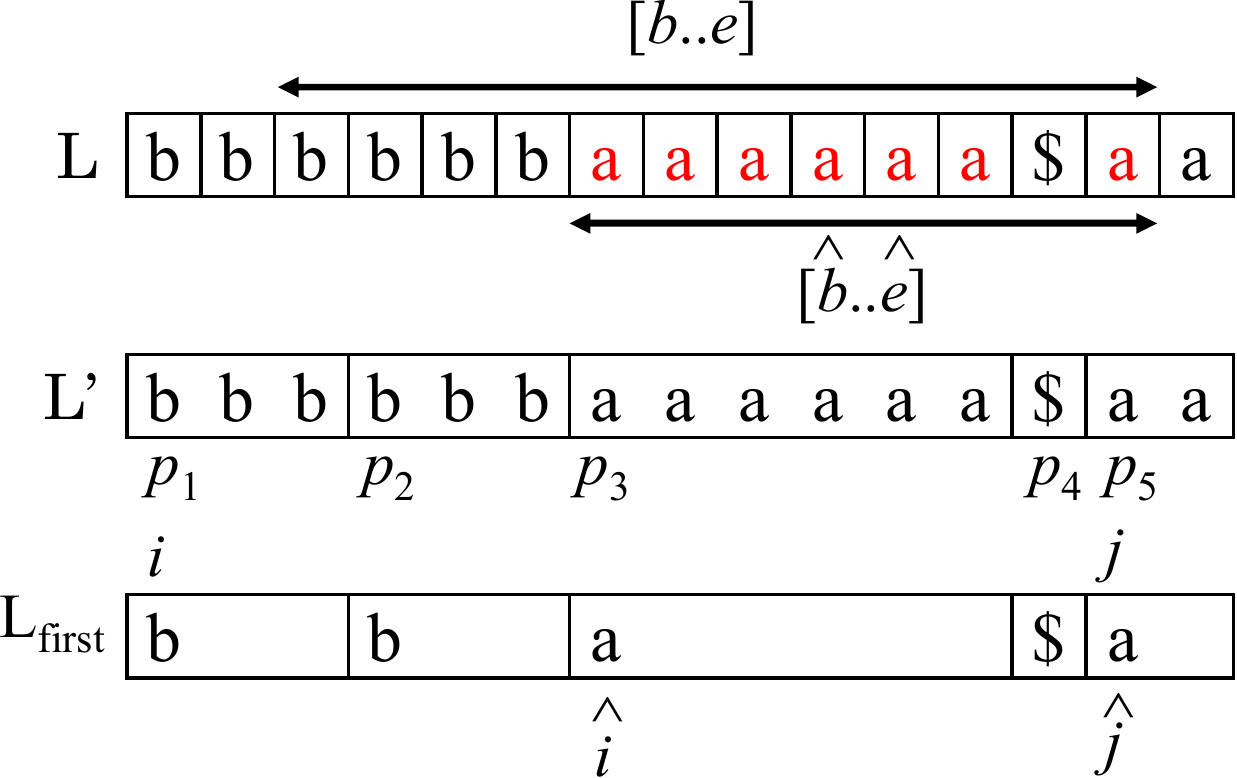}
%\caption{
%An example of modified toehold lemma. 
%}
\caption{Example of modified toehold lemma. }
 \label{fig:bsr}
\end{center}
\end{wrapfigure}

%We present a data structure called \emph{BSR data structure}. 
Now, we will present a data structure called the \emph{BSR data structure}. 
%The BSR data structure supports BSR query in $O(\log \log_{w} \sigma)$ time. 
The BSR data structure supports BSR queries in $O(\log \log_{w} \sigma)$ time. 
%The data structure consists of five data structures $F(I_{\mathsf{LF}})$, 
It consists of five data structures $F(I_{\mathsf{LF}})$,
$F(I_{\mathsf{SA}})$, $R(\sbwt)$, $\SA^{+}$, and $\SA^{+}_{\mathsf{index}}$. 
Here, $F(I_{\mathsf{LF}})$ and $F(I_{\mathsf{SA}})$ are 
the two move data structures introduced in Section~\ref{sec:lf_phi_move}. 
Let $B(I_{\mathsf{LF}}) = (p_{1}, q_{1}), (p_{2}, q_{2}), \ldots, (p_{k}, q_{k})$. 
Then $\sbwt$ is the string satisfying $\sbwt = L[p_{1}], L[p_{2}], \ldots, L[p_{k}]$. 
%$R(\sbwt)$ is a data structure called \emph{rank-select data structure} built on $\sbwt$. 
$R(\sbwt)$ is a data structure called the \emph{rank-select data structure} built on $\sbwt$. 
$R(\sbwt)$ requires $O(|\sbwt|)$ words of space, 
and it supports rank and select queries on $\sbwt$ in $O(\log\log_{w} \sigma)$ and $O(1)$ time, respectively. 
%See Appendix C.1 for details of $R$. 
See Appendix C.1 for the details of $R$. 
$\SA^{+}$ is an array of size $k$ such that 
%$\SA^{+}[x]$ stores the sa-value at the starting position of $x$-th input interval of $B(I_{\mathsf{LF}})$ for each $x \in [1, k]$~(i.e., 
$\SA^{+}[x]$ stores the sa-value at the starting position of the $x$-th input interval of $B(I_{\mathsf{LF}})$ for each $x \in [1, k]$~(i.e., 
$\SA^{+}[x] = \SA[p_{x}]$). 
Let $B(I_{\mathsf{SA}}) = (p'_{1}, q'_{1}), (p'_{2}, q'_{2}), \ldots, (p'_{k'}, q'_{k'})$. 
$\SA^{+}_{\mathsf{index}}$ is an array of size $k$ such that 
%$\SA^{+}_{\mathsf{index}}[x]$ stores the index $y$ of the input interval of $B(I_{\mathsf{SA}})$ containing position $\SA^{+}[x]$~(i.e., 
$\SA^{+}_{\mathsf{index}}[x]$ stores the index $y$ of the input interval of $B(I_{\mathsf{SA}})$ containing the position $\SA^{+}[x]$~(i.e., 
$y$ is the integer satisfying $\SA^{+}[x] \in [p'_{y}, p'_{y+1} - 1]$).
The space usage of the five data structures is $O(|B(I_{\mathsf{LF}})| + |B(I_{\mathsf{SA}})|)$ words, 
and $|B(I_{\mathsf{LF}})|, |B(I_{\mathsf{SA}})| = O(r)$ holds by Lemma~\ref{lem:balanced_sequence2}(i).

%Next, we explain a key observation for BSR query, which is based on toehold lemma~(e.g., 
%\cite{DBLP:journals/algorithmica/PolicritiP18,10.1145/3375890,DBLP:journals/tcs/BannaiGI20}). 
Next, we will present a key observation on BSR queries, which is based on the toehold lemma~(see, e.g., 
\cite{DBLP:journals/algorithmica/PolicritiP18,10.1145/3375890,DBLP:journals/tcs/BannaiGI20}).
%Let $L'$ be the sequence of $k$ substrings $L[p_{1}..p_{2}-1], L[p_{2}..p_{3}-1], \ldots, L[p_{k}..p_{k+1}-1]$ of BWT $L$, 
Let $L'$ be a sequence of $k$ substrings $L[p_{1}..p_{2}-1], L[p_{2}..p_{3}-1], \ldots, L[p_{k}..p_{k+1}-1]$ of BWT $L$, 
where $p_{k+1} = n+1$. 
Then, $L'$ has the following properties: 
(i) $L'$ represents a partition of $L$.  
(ii) Each string of $L'$ consists of a repetition of the same character.  
(iii) Each character $\sbwt[t]$ corresponds to the first character of the $t$-th string of $L'$. 
(iv) The $i$-th and $j$-th strings of $L'$ contain the $b$-th and $e$-th characters of BWT $L$, respectively. 
(v) 
Let $\hat{b}$ and $\hat{e}$ be the first and last occurrences of $c$ in $L[b..e]$~(i.e., 
$\hat{b} = \min \{ t \mid t \in [b, e] \mbox{ s.t. } L[t] = c \}$ and $\hat{e} = \max \{ t \mid t \in [b, e] \mbox{ s.t. } L[t] = c \}$). 
Similarly, let $\hat{i}$ and $\hat{j}$ be the indexes of the two strings of $L'$ containing 
the $\hat{b}$-th and $\hat{e}$-th characters of BWT $L$, respectively. 
Then $\hat{i}$ and $\hat{j}$ are equal to the first and last occurrences of $c$ in $\sbwt[i..j]$. 
We obtain the following relations among the four positions $b$, $\hat{b}$, $e$, and $\hat{e}$ by using the above five properties: 
(i) $\hat{b} = b$ if $\sbwt[i] = c$; 
otherwise, $\hat{b} = p_{\hat{i}}$. 
(ii) Similarly, $\hat{e} = e$ if $\sbwt[j] = c$; otherwise $\hat{b} = p_{\hat{j}+1} - 1$. 
We call these two relations the \emph{modified toehold lemma}.

Let $\hat{v}$ be the index of the input interval of $B(I_{\mathsf{SA}})$ containing position $\SA[\hat{b}]$. 
By the modified toehold lemma, $\hat{v} = v$ and $\SA[\hat{b}] = \SA[b]$ hold if $\hat{b} = b$; 
otherwise, $\hat{v} = \SA^{+}_{\mathsf{index}}[\hat{i}]$ and $\SA[\hat{b}] = \SA^{+}[\hat{i}]$. 
We can compute the balanced sa-interval of $cP$ by using $F(I_{\mathsf{LF}})$ and $F(I_{\mathsf{SA}})$ after computing the six integers 
$\hat{b}, \hat{e}, \hat{i}, \hat{j}, \hat{v}, \SA[\hat{b}]$, 
because $b' = \LF(\hat{b})$, $e' = \LF(\hat{e})$, and $\SA[b'] = \SA[\hat{b}] - 1$ hold by Lemma~\ref{lem:backward_search}. 

Figure~\ref{fig:bsr} illustrates an example of the modified toehold lemma for a BWT $L = bbbbbbaaaaaa\$aa$. 
In this example, (i) $k = 5$, (ii) $(p_{1}, p_{2}, p_{3}, p_{4}, p_{5}) = (1, 4, 7, 13, 14)$, (iii) $c = a$,  
(iv) $L' = bbb, bbb, aaaaaa, \$, aa$, (v) $\sbwt = bba\$a$, and (vi) $(b, e, \hat{b}, \hat{e}, i, j, \hat{i}, \hat{j}) = (3, 14, 7, 14, 1, 5, 3, 5)$. 
The $i$-th string of $L'$ is not a repetition of the character $c$, and 
the $\hat{i}$-th string of $L'$ contains the $\hat{b}$-th character of $L$. 
Hence $\hat{b} = p_{\hat{i}} = 7$ holds by the modified toehold lemma. 
Similarly, the $j$-th string of $L'$ is a repetition of $c$, 
and hence $\hat{e} = e$ holds by the modified toehold lemma. 

We solve a BSR query in four steps. 
In the first step, we verify whether 
$\sbwt[i..j]$ contains character $c$ by computing two rank queries $\rank(\sbwt, c, j)$ and $\rank(\sbwt, c, i)$. 
By Lemma~\ref{lem:first_last_rank}(i), $\sbwt[i..j]$ contains $c$ if $\rank(\sbwt, c, j) - \rank(\sbwt, c, i) \geq 1$; 
otherwise, $cP$ is not a substring of $T$, and hence BSR outputs a mark $\perp$. 
In the second step, we compute two integers $\hat{i}$ and $\hat{j}$ using rank and select queries on the string $\sbwt$. 
$\hat{i} = \select(\sbwt, c, \rank(\sbwt, c, i-1) + 1)$ and $\hat{j} = \select(\sbwt, c, \rank(\sbwt, c, j))$ hold by Lemma~\ref{lem:first_last_rank}(ii). 
In the third step, 
we compute $\hat{b}$, $\hat{e}$, $\hat{v}$, and $\SA[\hat{b}]$ by the modified toehold lemma. 
In the fourth step, we compute the balanced sa-interval of $cP$ by processing the six integers $\hat{b}, \hat{e}, \hat{i}, \hat{j}, \hat{v}, \SA[\hat{b}]$, 
i.e., 
%(i) we compute two pairs $(b', i')$ and $(e', j')$ using two move queries on $B(I_{\mathsf{LF}})$ for two pairs $(\hat{b}, \hat{i})$ and $(\hat{e}, \hat{j})$, respectively,  
we compute (i) the pair $(b', i')$ using a move query on $B(I_{\mathsf{LF}})$ for the pair $(\hat{b}, \hat{i})$, 
(ii) the pair $(e', j')$ using a move query on $B(I_{\mathsf{LF}})$ for the pair $(\hat{e}, \hat{j})$, 
and (iii) the pair $(\SA[v'], v')$ by Theorem~\ref{theo:phi_inv_theorem}(ii). 
%(ii) we compute  and $(e', j')$ using two move queries on $B(I_{\mathsf{LF}})$ for two pairs $(\hat{b}, \hat{i})$ and $(\hat{e}, \hat{j})$, 
%and (ii) 
%we compute the pair $(\SA[v'], v')$ by using Theorem~\ref{theo:phi_inv_theorem}(ii). 
The running time is $O(\log \log_{w} \sigma)$ in total.

\section{OptBWTR}\label{sec:OptBWTR}
%We present OptBWTR that supports optimal-time queries for polylogarithmic alphabets by leveraging data structures for computing LF and $\phi^{-1}$ functions. 
Here, we present OptBWTR, which supports optimal-time queries for polylogarithmic alphabets by leveraging data structures for computing LF and $\phi^{-1}$ functions. 
%Let $P$ be a given string of length $m$ with count or locate query and $occ = |\Occ(T, P)|$. 
Let $P$ be a string of length $m$ in a count or locate query and $occ = |\Occ(T, P)|$. 
The goal of this section is to prove the following theorem. 
\begin{theorem}\label{theo:OptBWTR}
OptBWTR requires $O(r)$ words, 
and it supports count and locate queries on a string $T$ in $O(m \log \log_{w} \sigma)$ and $O(m \log \log_{w} \sigma + occ)$ time, respectively. 
We can construct OptBWTR in $O(n + r \log r)$ time and $O(r)$ words by processing the RLBWT of $T$.
\end{theorem}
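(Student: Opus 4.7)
\medskip
\noindent\textbf{Proof plan for Theorem~\ref{theo:OptBWTR}.}
The plan is to take OptBWTR to be exactly the BSR data structure of Section~\ref{sec:backward_search}, whose space is $O(|B(I_{\mathsf{LF}})|+|B(I_{\mathsf{SA}})|)=O(r)$ words by Lemma~\ref{lem:balanced_sequence2}(i), together with a stored balanced sa-interval $(1,n,\SA[1],i_{\varepsilon},j_{\varepsilon},v_{\varepsilon})$ of the empty string $\varepsilon$. Since $\SA[1]=n$, these three extra indices are fixed constants determined by $B(I_{\mathsf{LF}})$ and $B(I_{\mathsf{SA}})$, so they cost $O(1)$ additional words. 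No new query machinery is introduced.

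\medskip
\noindent\textbf{Count query.} Starting from the balanced sa-interval of $\varepsilon$, I would apply the BSR query of Section~\ref{sec:backward_search} iteratively with the characters $P[m],P[m-1],\ldots,P[1]$, obtaining after $m$ steps either $\perp$ (in which case the count is $0$) or the balanced sa-interval $(b,e,\SA[b],i,j,v)$ of $P$. The answer is $e-b+1$ (or $0$). Each BSR step costs $O(\log\log_{w}\sigma)$, giving total time $O(m\log\log_{w}\sigma)$.

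\medskip
\noindent\textbf{Locate query.} After the $m$ BSR steps, I would enumerate $\SA[b],\SA[b+1],\ldots,\SA[e]$ by the identity $\SA[b+t]=\phi^{-1}(\SA[b+t-1])$. Crucially, Theorem~\ref{theo:phi_inv_theorem}(i) computes $\phi^{-1}(\cdot)$ and the new enclosing input interval index in constant time, provided we already know the current enclosing index. Since the last component $v$ of the balanced sa-interval supplies that index for $\SA[b]$, the enumeration proceeds in $O(occ)$ time, for a total of $O(m\log\log_{w}\sigma+occ)$.

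\medskip
\noindent\textbf{Construction.} This is the step I expect to be the main obstacle, since we must produce $F(I_{\mathsf{LF}})$, $F(I_{\mathsf{SA}})$, $R(\sbwt)$, $\SA^{+}$, and $\SA^{+}_{\mathsf{index}}$ in $O(n+r\log r)$ time and $O(r)$ words starting from the RLBWT of $T$. First I would read the RLBWT once to obtain the pairs $(L_i[1],\ell_i)$ and the character array $C$, taking $O(r)$ time and space. Then $\LF(\ell_i)$ for every $i$ is obtained by sorting the run heads by $(L_i[1],i)$ and applying Lemma~\ref{lem:LF_property}(ii); this is where the $O(r\log r)$ term enters. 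This yields $I_{\mathsf{LF}}$ in $O(r)$ words. The balanced sequence $B(I_{\mathsf{LF}})$ is built by iteratively splitting overloaded output intervals as in Section~\ref{sec:graph_and_move_data_structure}: since Lemma~\ref{lem:balanced_sequence2}(i) bounds the final size by $2r$, I would maintain the pairs in a doubly linked list together with a counter of incoming edges per output interval and process splits in $O(r)$ total time. The rank-select data structure $R(\sbwt)$ is then built on the length-$|B(I_{\mathsf{LF}})|$ string $\sbwt$; by the construction cited in Appendix~C.1 this also fits in $O(r\log r)$ time and $O(r)$ words. For the suffix-array side, I would compute $\SA[\ell_i+|L_i|-1]$ for all $i$ by running the LF-based traversal of the RLBWT, which visits each text position once in $O(n)$ time using only $O(r)$ working memory via the just-built $F(I_{\mathsf{LF}})$; this is the source of the $O(n)$ term. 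Sorting these $r$ sa-values gives the sequence $u_1,\dots,u_r$ and the values $\phi^{-1}(u_i)$ via Lemma~\ref{lem:phi_property}(ii), whence $I_{\mathsf{SA}}$, $B(I_{\mathsf{SA}})$, and $F(I_{\mathsf{SA}})$ are built in the same way as for LF in $O(r\log r)$ time. Finally $\SA^{+}$ is filled in during the LF traversal (one value per input interval of $B(I_{\mathsf{LF}})$), and $\SA^{+}_{\mathsf{index}}$ is filled by a single linear scan over $\SA^{+}$ sorted by value, each position located in $B(I_{\mathsf{SA}})$ in $O(1)$ amortized by merging. The initial balanced sa-interval of $\varepsilon$ is computed by locating $1$ and $n$ in $B(I_{\mathsf{LF}})$ and $\SA[1]=n$ in $B(I_{\mathsf{SA}})$. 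Summing the pieces yields $O(n+r\log r)$ time and $O(r)$ words, completing the construction.
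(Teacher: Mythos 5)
Your query arguments (count, locate, and the $O(r)$-words space bound via the BSR data structure and Lemma~\ref{lem:balanced_sequence2}) match the paper's own proof in Section~\ref{sec:OptBWTR} essentially verbatim. The construction outline also follows the same overall plan as the paper's Appendix D.1: build $I_{\mathsf{LF}}$ from the RLBWT, balance it, build $\sbwt$ and $R(\sbwt)$, derive $I_{\mathsf{SA}}$ via an LF traversal, and finally fill $\SA^{+}$ and $\SA^{+}_{\mathsf{index}}$. Your choice of comparison sort for the run heads ($O(r\log r)$) versus the paper's radix sort ($O(n)$ via Lemma~\ref{lem:sort}) and your merge-based construction of $\SA^{+}_{\mathsf{index}}$ versus the paper's binary search are harmless variations; both fit the $O(n+r\log r)$ budget.

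Two sub-claims deserve a closer look. First, your claim that $B(I_{\mathsf{LF}})$ can be built in $O(r)$ total time by maintaining a doubly linked list with per-output-interval in-degree counters is not substantiated. When a split creates a new input interval starting at $p'_j+d$, one must determine which output interval contains that position; a plain linked list gives no efficient way to locate it, and the walking distance can be unbounded. The paper instead uses self-adjusting balanced binary trees (Lemma~\ref{lem:balanced_sequence_construction}) to perform these lookups in $O(\log k')$ time each, for $O(r\log r)$ total --- which is what the theorem's budget actually requires, so your overall bound survives if you substitute that approach, but as written the $O(r)$ claim is likely false. Second, your statement that "sorting these $r$ sa-values gives \dots the values $\phi^{-1}(u_i)$ via Lemma~\ref{lem:phi_property}(ii)" is circular: the recurrence in that lemma is indexed by the permutation $\delta'$, which is defined by the order of the very values $\phi^{-1}(u_i)$ you are trying to compute. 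The paper's Lemma~\ref{lem:construction_I_SA} avoids this by having the LF traversal record SA values at \emph{both} the first and last positions of every run, and then matching a last-position sa-value $u_i$ with the sa-value at the immediately following position; you should do the same rather than appeal to Lemma~\ref{lem:phi_property}(ii). Neither issue breaks the theorem, but both should be repaired to make the construction argument airtight.
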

\begin{proof}
See Appendix D.1 for the proof of the construction time and working space in Theorem~\ref{theo:OptBWTR}.
\end{proof}

%OptBWTR consists of the five data structures for BSR data structure, i.e., $F(I_{\mathsf{LF}})$, $F(I_{\mathsf{SA}})$, $R(\sbwt)$, $\SA^{+}$, and $\SA^{+}_{\mathsf{index}}$. 
OptBWTR consists of the five data structures composing the BSR data structure, 
i.e., $F(I_{\mathsf{LF}})$, $F(I_{\mathsf{SA}})$, $R(\sbwt)$, $\SA^{+}$, and $\SA^{+}_{\mathsf{index}}$. 
%We present an algorithm for count query using OptBWTR. 
First, we present an algorithm for a count query using OptBWTR that consists of two phases. 
In the first phase, 
the algorithm computes the balanced sa-interval of $P$ by iterating BSR query $m$ times. 
The input of the $i$-th BSR query is the $(m-i+1)$-th character of $P$~(i.e., $P[m-i+1]$) and the balanced sa-interval of $P[m-i+2..m]$ for each $i \in [1, m]$. 
Here, $P[m+1..m]$ is defined as the empty string $\varepsilon$. 
The balanced sa-interval of $\varepsilon$ is $(1, n, n, 1, |B(I_{\mathsf{LF}})|, |B(I_{\mathsf{SA}})|)$, 
because (i) the sa-interval of the empty string is $[1, n]$, 
and (ii) $\SA[1] = n$ holds because $T$ contains the special character $\$$. 
The $i$-th BSR query outputs the sa-interval of $P[m-i+1..m]$ if $P[m-i+1..m]$ is a substring of $T$; 
otherwise it outputs a mark $\perp$. 
If a BSR query outputs $\perp$, 
the pattern $P$ does not occur in $T$. 
In this case, the algorithm stops and returns $0$ as the solution for the count query.
In the second phase, the algorithm returns the length of the sa-interval $[b,e]$ of $P$~(i.e., $e -b + 1$) as the solution for the count query, 
because $occ = e-b+1$ holds. 
%The sa-interval of $P$ is contained in the balanced sa-interval of $P$, 
%and hence the running time is $O(m \log \log_{w} \sigma)$ in total. 
The sa-interval of $P$ is contained in the balanced sa-interval of $P$; 
hence, the running time is $O(m \log \log_{w} \sigma)$ in total. 

Next, we present an algorithm for a locate query using OptBWTR. 
Let $v_{t}$ be the index of the input interval of $B(I_{\mathsf{SA}})$ containing $\SA[b+t]$ for $t \in [0, e-b]$. 
Then $\SA[b+1..e]$ can be computed by computing $(e-b)$ move queries $\movef(B(I_{\mathsf{SA}}), \SA[b], v_{0}) = (\SA[b+1], v_{1})$, 
$\movef(B(I_{\mathsf{SA}}), \SA[b+1], v_{1}) = (\SA[b+2], v_{2})$, $\ldots$, $\movef(B(I_{\mathsf{SA}}), \SA[e-1], v_{e-b}) = (\SA[e], v_{e-b+1})$ on $B(I_{\mathsf{SA}})$. 
The first sa-value $\SA[b]$ and the index $v_{0}$ are stored in the balanced sa-interval of $P$. 

The algorithm for a locate query also consists of two phases. 
In the first phase, 
the algorithm computes the balanced sa-interval of $P$ by iterating BSR query $m$ times. 
In the second phase, 
it computes $(e-b)$ move queries $\movef(B(I_{\mathsf{SA}}), \SA[b], v_{0})$, $\movef(B(I_{\mathsf{SA}}), \SA[b+1], v_{1})$, 
$\ldots$, $\movef(B(I_{\mathsf{SA}}), \SA[e-1], v_{e-b})$ by using the move data structure $F(I_{\mathsf{SA}})$, 
and outputs $\SA[b..e]$. 
Hence, we can solve a locate query in $O(m \log \log_{w} \sigma + occ)$ time.

\section{Applications} \label{sec:applications}
In this section, we show that OptBWTR can support extract, decompression, and prefix search queries in optimal time.

\emph{Extract query.}
%Let string $T$ of length $n$ have $b$ marked positions $i_{1}, i_{2}, \ldots, i_{b} \in [1, n]$. 
Let a string $T$ of length $n$ have $b$ marked positions $i_{1}, i_{2}, \ldots, i_{b} \in [1, n]$. 
%An extract query (also known as the bookmarking problem) is to return substring $T[i_{j}..i_{j}+d-1]$ for a given integer $j \in [1, b]$ and $d \in [1, n-i_{j}+1]$. 
An extract query (also called the bookmarking problem) is to return substring $T[i_{j}..i_{j}+d-1]$ for a given integer $j \in [1, b]$ and $d \in [1, n-i_{j}+1]$. 

%We leverage \emph{FL function} to solve extract query. 
We will use the \emph{FL function} to solve extract queries.
%FL function $\FL$ is the inverse function of LF function, 
$\FL$ is the inverse function of LF function, 
i.e., $\FL(\LF(i)) = i$ holds for $i \in [1, n]$. 
%We also introduce a function $\FL_{x}$ and integers $h_{1}, h_{2}, \ldots, h_{b}$. 
We will also use the function $\FL_{x}$ and integers $h_{1}, h_{2}, \ldots, h_{b}$. 
%The function $\FL_{x}(i)$ returns the position obtained by recursively applying FL function to a given integer $i$ $x$ times, 
$\FL_{x}(i)$ returns the position obtained by recursively applying the FL function to a given integer $i$ $x$ times, 
i.e., $\FL_{0}(i) = i$ and $\FL_{x}(i) = \FL_{x-1}(\FL(i))$ for $x \geq 1$. 
$h_{j}$ is the position with sa-value $i_{j}$ on SA~(i.e., $\SA[h_{j}] = i_{j}$). 
%FL function returns the position with the sa-value $x+1$ on SA for a given position with sa-value $x$, 
The FL function returns the position with the sa-value $x+1$ on SA for a given position with sa-value $x$, 
and hence $T[i_{j}..i_{j}+d-1] = F[\FL_{0}(h_{j})], F[\FL_{1}(h_{j})], \ldots, F[\FL_{d-1}(h_{j})]$ holds for $j \in [1, b]$, 
%where $F$ is the string introduced in Section~\ref{sec:bwt}. 
where $F$ is the string described in Section~\ref{sec:bwt}. 
%We construct a data structure of $O(r)$ words to compute FL function in constant time by modifying Theorem~\ref{theo:lf_inv_theorem} 
We can construct a data structure of $O(r)$ words to compute FL function in constant time by modifying Theorem~\ref{theo:lf_inv_theorem} 
%and solve extract query in linear time using the data structure. 
and can solve an extract query in linear time by using the data structure. 
See Appendix E.1 for details of our data structure for solving extract queries. 
\begin{theorem}\label{theo:bookmark}
There exists a data structure of $O(r + b)$ words that solves the bookmarking problem for a string $T$ and 
$b$ positions $i_{1}, i_{2}, \ldots, i_{b}$~($1 \leq i_{1} < i_{2} < \cdots < i_{b} \leq n$). 
We can construct this data structure in $O(n)$ time and $O(r + b)$ words of space by processing the RLBWT and positions $i_{1}, i_{2}, \ldots, i_{b}$. 
\end{theorem}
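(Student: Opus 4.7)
The plan is to mirror the $\LF$ move construction of Section~\ref{sec:move_section} for the inverse function $\FL$, attach to every input interval the character of $F$ that covers it, and precompute a starting pair $(h_j, x_j)$ for every bookmark so that extraction becomes a sequence of constant-time move queries.

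First, I would define the disjoint interval sequence $I_{\mathsf{FL}}$ whose pairs are obtained by swapping the components of $I_{\mathsf{LF}}$; after sorting by the first coordinate, its $i$-th pair is $(\LF(\ell_{\delta[i]}), \ell_{\delta[i]})$. Lemma~\ref{lem:LF_property} immediately yields the three axioms of a disjoint interval sequence, because $\LF(\ell_{\delta[1]}) = 1 < \LF(\ell_{\delta[2]}) < \cdots$ and the accumulated run lengths reappear as the second-coordinate gaps. The bijection represented by $I_{\mathsf{FL}}$ is $\FL$, so balancing it into $B(I_{\mathsf{FL}})$ and forming the move data structure $F(I_{\mathsf{FL}})$ as in Section~\ref{sec:graph_and_move_data_structure} yields an $O(r)$-word structure that, by the direct analogue of Theorem~\ref{theo:lf_inv_theorem}, computes $\FL$ in constant time whenever the current input-interval index is known.

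Second, I would attach to each input interval $[p_x, p_{x+1}-1]$ of $B(I_{\mathsf{FL}})$ the single character $F[p_x]$. This attachment is well-defined because the $i$-th output interval of $I_{\mathsf{LF}}$ is $[\LF(\ell_i), \LF(\ell_i)+|L_i|-1]$, over which $F$ is constantly equal to $L_i[1]$; the balancing procedure only refines output intervals, so $F$ remains constant on every refined input interval of $B(I_{\mathsf{FL}})$. The extra space is $O(r)$ words. To preprocess the $b$ bookmarks I perform a single sweep in sa-value order: I start at position $1$, which has sa-value $n$ because of the sentinel, and apply $\FL$ iteratively, so that after $k$ steps I sit at the position with sa-value $k$. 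A pointer into the sorted list $i_1 < \cdots < i_b$ lets me record $(h_j, x_j)$ as soon as the running sa-value equals $i_j$. This phase runs in $O(n)$ time and uses $O(r+b)$ working space.

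An extract query for $(j, d)$ then starts at $(h_j, x_j)$ and performs $d$ move queries on $B(I_{\mathsf{FL}})$, emitting the stored character at each visited interval; by the identity $T[i_j..i_j+d-1] = F[\FL_0(h_j)], \ldots, F[\FL_{d-1}(h_j)]$ recalled just before the theorem statement, this produces the required substring in $O(d)$ time, i.e., constant time per character. The main point that needs verification is that the $\LF$-side machinery transposes cleanly to $\FL$: swapping pairs must preserve the disjoint-interval-sequence axioms (it does, by Lemma~\ref{lem:LF_property}) and $F$ must be constant on every refined input interval of $B(I_{\mathsf{FL}})$ (observed above). Once these structural facts are in hand, the $O(r+b)$ space bound, the $O(n)$ construction time, and the $O(1)$-per-character query time all follow from the corresponding results for $F(I_{\mathsf{LF}})$ together with the $O(n+b)$ bookmark sweep.
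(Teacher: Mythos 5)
Your construction of $I_{\mathsf{FL}}$ by swapping $I_{\mathsf{LF}}$, your observation that $F$ is constant on every refined input interval of $B(I_{\mathsf{FL}})$, and your FL-sweep for precomputing the bookmark pairs $(h_j, x_j)$ all match the paper's approach in its first case. The query-time and space analyses are sound. However, the claimed $O(n)$ construction time has a genuine gap. You write that it ``follows from the corresponding results for $F(I_{\mathsf{LF}})$,'' but the corresponding result (Theorem~\ref{theo:OptBWTR}, via Lemma~\ref{lem:balanced_sequence_construction}) bounds that construction by $O(n + r\log r)$, not $O(n)$: producing the balanced interval sequence $B(I_{\mathsf{FL}})$ requires repeatedly locating which output interval the new split point falls into, and the paper only shows how to do this with balanced binary trees in $O(\log k')$ time per split, for $O(r)$ splits. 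When $r = \Theta(n)$ this is $\Theta(n\log n)$, violating the theorem's bound.

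The paper fixes this with a case split you omit. When $r < n/\log n$, the $r\log r$ term is $O(n)$ and your plan works verbatim. When $r \ge n/\log n$, the paper abandons the balanced sequence and the move data structure entirely: it keeps the \emph{unbalanced} $I_{\mathsf{FL}}$, stores the set $S' = \{\LF(\ell_{\delta[1]}), \ldots, \LF(\ell_{\delta[r]})\}$ in a Belazzougui--Navarro predecessor structure, and computes each $\FL$ step by one arithmetic operation plus one predecessor query (Lemma~\ref{lem:predecessor}). That query costs $O(\log\log_w(n/r))$, which is $O(1)$ precisely because $r \ge n/\log n$ forces $n/r \le \log n$; and the whole construction, including the $n$-step sweep to fill $G'$, runs in $O(n\log\log_w(n/r)) = O(n)$. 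Without this second case you cannot establish the $O(n)$ construction time for all $r$, so you should add it (or find a way to balance $I_{\mathsf{FL}}$ in $O(n)$ time uniformly, which the paper does not do).
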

\begin{proof}
See Appendix E.1.
\end{proof}

\emph{Decompression of RLBWT.}
%Theorem~\ref{theo:bookmark} with position $1$ ensures that 
%our data structure for extract query can return string $T[1..n]$ in $O(n)$ time (i.e., 
%the data structure can recover the original string $T$ from the RLBWT of string $T$ in linear time to the length of $T$). 
We apply Theorem~\ref{theo:bookmark} to $T[1..n]$ with marked position $1$. 
Then, our data structure for extract queries can return the string $T$ in $O(n)$ time (i.e., 
the data structure can recover $T$ from the RLBWT of $T$ in linear time to $n$).
%The $O(n)$ time decompression is the fastest among other decompression algorithms on compressed indexes in $O(r)$ words of space. We obtain the following theorem. 
The $O(n)$ time decompression is the fastest among other decompression algorithms on compressed indexes in $O(r)$ words of space, as the following theorem shows. 
\begin{theorem}
We can compute the characters of $T$ in left-to-right order~(i.e., $T[1], T[2]$, $\ldots$, $T[n]$) in $O(n)$ time and 
$O(r)$ words of space by processing the RLBWT of string $T$.
\end{theorem}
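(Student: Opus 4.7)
The plan is to obtain the theorem as an almost immediate corollary of Theorem~\ref{theo:bookmark}. The strategy is to instantiate the bookmarking data structure on $T$ with a single marked position $i_{1} = 1$, and then run one extract query of length $d = n$ starting at that position. Since $b = 1$, the space bound of $O(r + b)$ in Theorem~\ref{theo:bookmark} collapses to $O(r)$ words, and the construction time from the RLBWT is $O(n)$ as stated.

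First I would invoke Theorem~\ref{theo:bookmark} to build the extract data structure from the RLBWT in $O(n)$ time and $O(r)$ words of working space, treating $i_{1} = 1$ as the unique bookmarked position. Then I would run the extract query with $j = 1$ and $d = n$, which, by the theorem, produces the characters $T[1], T[2], \ldots, T[n]$ in left-to-right order at $O(1)$ time per character. Summed over all $n$ characters, the query costs $O(n)$ time. Combined with the $O(n)$ construction, the total time remains $O(n)$, and the space never exceeds $O(r)$ words since we retain only the bookmarking data structure (plus a constant number of auxiliary pointers for the query).

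There is essentially no obstacle beyond verifying that the bookmarking machinery from Theorem~\ref{theo:bookmark} is usable as a pure streaming decompressor: each character can be emitted immediately upon being produced by the $\FL$-based traversal described in the extract-query algorithm, so no $\Omega(n)$ output buffer needs to be maintained. The only subtlety I would flag is that we must not store the output inside the claimed $O(r)$ working space; we output it to a stream. With that accounting convention, the theorem follows directly.
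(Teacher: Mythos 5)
Your proposal matches the paper's own argument exactly: the paper also obtains this theorem by applying Theorem~\ref{theo:bookmark} to $T$ with the single marked position $1$ and running a full-length extract query, so that $b = 1$ collapses the $O(r+b)$ bound to $O(r)$ words and the $O(n)$ construction plus $O(1)$ per-character extraction gives $O(n)$ total time. Your additional remark about streaming the output rather than buffering it is a reasonable accounting clarification and does not change the argument.
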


\emph{Prefix search.}
%Prefix search for a set of strings $D = \{ T_{1}, T_{2}, \ldots, T_{d} \}$
The prefix search for a set of strings $D = \{ T_{1}, T_{2}, \ldots, T_{d} \}$
returns the indexes of the strings in $D$ that include a given string $P$ as their prefixes~(i.e., 
$\{ i \mid i \in [1,d] \mbox{ s.t. } T_{i}[1..|P|] = P \}$). 
%We construct a data structure supporting the prefix search by combining Theorem~\ref{theo:bookmark} with \emph{compact trie}~\cite{DBLP:journals/jacm/Morrison68}. 
We can construct a data structure supporting the prefix search by combining Theorem~\ref{theo:bookmark} with \emph{compact trie}~\cite{DBLP:journals/jacm/Morrison68}.

A compact trie for a set of strings $D$ is a trie for $D$ such that all unary paths are collapsed, 
and each node represents the string by concatenating labels on the path from the root to the node. 
For simplicity, we will assume that any string in $D$ is not a prefix of any other string, 
and hence each leaf in the compact trie represents a distinct string in $D$. 
Let $v$ be the node such that (i) $P$ is a prefix of the string represented by the node 
and (ii) $P$ is not a prefix of the string represented by its parent. 
Then, the leaves under $v$ are the output of the prefix search query for $P$. 

To find $v$, we decode the string on the path from the root to the node $v$ in linear time using exact queries for the path. 
After we find $v$, we traverse the subtree rooted at $v$ and output all the leaves in the subtree. 
This procedure runs in $O(|P| + occ')$ time, where $occ'$ is the number of leaves under the lowest node. 
See Appendix E.2 for the details of our data structure for solving prefix search queries. 

\begin{theorem}\label{theo:prefix_search}
Let $r'$ be the number of runs in the RLBWT of a string $T$ containing all the strings in $D = \{ T_{1}, T_{2}, \ldots, T_{d} \}$. 
There exists a data structure that supports a prefix search on $D$ in $O(|P| + occ')$ time and $O(r' + d)$ words of space for a string $P$.
The data structure also returns the number of the strings in $D$ that include $P$ as their prefixes in $O(|P|)$ time.
\end{theorem}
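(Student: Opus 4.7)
The plan is to realize the sketch preceding Theorem~\ref{theo:prefix_search}: combine the bookmarking data structure of Theorem~\ref{theo:bookmark} with a compact trie for $D$, storing edge labels implicitly as (position, length) references into $T$ rather than as explicit strings.

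First, I would fix a string $T$ whose RLBWT has $r'$ runs and that contains every $T_{i} \in D$ as a substring; if necessary I append a unique sentinel character to each $T_{i}$ before concatenation so that no string in $D$ is a prefix of another, keeping $T$ within the promised size. Then I would build (i) the bookmarking data structure of Theorem~\ref{theo:bookmark} over $T$, and (ii) the compact trie of $D$, which has at most $d$ leaves and hence $O(d)$ internal nodes and edges. For each edge I would bookmark the starting position in $T$ of one fixed occurrence of its label, contributing $O(d)$ bookmarks; together with the bookmarks at the start of each $T_{i}$ used to identify leaves, this still fits in $O(r' + d)$ words by Theorem~\ref{theo:bookmark}. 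Each internal node is augmented with (a) a static dictionary from the first character of each outgoing edge to that edge supporting $O(1)$ lookup in space linear in its fan-out, and (b) the number of leaves in its subtree.

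For a query with pattern $P$, I would descend from the root as follows: at a node $u$, use the next unread character of $P$ to select an outgoing edge in $O(1)$ time via the static dictionary, then compare $P$ against that edge label character by character by extracting characters one at a time via the bookmark at the label's start, using Theorem~\ref{theo:bookmark} to obtain each character in $O(1)$ time. The descent halts either on a mismatch, in which case no string of $D$ has $P$ as a prefix, or when $P$ is exhausted, possibly in the middle of an edge; in the latter case let $v$ be the deeper endpoint of that edge. Every string of $D$ with $P$ as a prefix corresponds to a leaf in the subtree rooted at $v$, so a traversal of that subtree reports all answers. Each character of $P$ is read at most once during the descent, so navigation costs $O(|P|)$, and since every internal node of a compact trie has at least two children the subtree has $O(occ')$ nodes, giving the $O(|P| + occ')$ bound. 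For the counting variant I would simply return the leaf count stored at $v$ immediately after the descent, in $O(|P|)$ time.

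The main obstacle is achieving the $O(|P|)$ descent within the $O(r' + d)$ space budget: the per-character cost along edges is already handled by Theorem~\ref{theo:bookmark}, so the only remaining issue is constant-time child selection at each node without inflating the trie's auxiliary space. Static perfect hashing over the first characters of the outgoing edges, with size linear in the fan-out, sums to $O(d)$ over the whole trie and resolves this. A secondary subtlety is ensuring that the bookmarks required at edge-label starts and at each $T_{i}$'s starting position can all coexist within the claimed space; since Theorem~\ref{theo:bookmark} admits any $b$ preselected positions at an additive $O(b)$ word cost, our $O(d)$ bookmarks keep the total within $O(r' + d)$, completing the argument.
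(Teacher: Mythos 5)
Your proposal is correct and follows essentially the same route as the paper: a compact trie over $D$ with perfect-hash child dispatch and per-node leaf counts, combined with the bookmarking data structure of Theorem~\ref{theo:bookmark} to read edge labels in $O(1)$ time per character. The only cosmetic difference is that the paper enumerates the answer leaves via stored leftmost/rightmost-leaf pointers and a right-sibling chain rather than a subtree traversal, but both yield the same $O(|P|+occ')$ bound.
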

\begin{proof}
See Appendix E.2.
\end{proof}

\section{Conclusion}
We presented OptBWTR, the first string index that can support count and locate queries on RLBWT in optimal time with $O(r)$ words of space for polylogarithmic alphabets.
OptBWTR also supports extract queries and prefix searches on RLBWT in optimal time for any alphabet size. 
In addition, we presented the first decompression algorithm working in optimal time and $O(r)$ words of working space. 
This is the first optimal-time decompression algorithm working in $O(r)$ words of space. 

We presented a new data structure of $O(r)$ words for computing LF and $\phi^{-1}$ functions 
in constant time by using a new data structure named move data structure, provided that we use an additional input. 
We also showed that the backward search works in optimal time for polylogarithmic alphabets with $O(r)$ words of space using the data structure. 
The two functions and the backward search are 
general and applicable to various queries on RLBWT. 

%We have the following open problem: 
The following problems remain open:
Does there exist a string index of $O(r)$ words supporting locate queries in optimal time for any alphabet size? 
We assume $\sigma = O(\polylog n)$ for supporting locate queries in optimal time with $O(r)$ words. 
As mentioned in Section~\ref{sec:introduction}, 
a faster version of r-index can support locate queries in optimal time with $O(r \log \log_{w} (\sigma + (n/r)))$ words. 
%Thus, improving OptBWTR for supporting locate queries in optimal time with $O(r)$ words for any alphabet size is an important future work. 
Thus, improving OptBWTR so that it can support locate queries in optimal time with $O(r)$ words for any alphabet size is an important future work. 
For this goal, 
one needs to solve a rank query on a string of length $\Theta(r)$ in constant time and $O(r)$ words of space. 
However, this seems impossible because any data structure of $O(r)$ words requires $\Omega(\log\log_{w} \sigma)$ time 
to compute a rank query on a string of length $r$~\cite{DBLP:journals/talg/BelazzouguiN15}. 
Perhaps, we may be able to compute the sa-interval of a given pattern in $O(m)$ time and $O(r)$ words of space without using rank queries. 
After computing the sa-interval of the pattern, we can solve the locate query in optimal time by using our data structure for the $\phi^{-1}$ function.

%
%
%
%%
%% Bibliography
%%

%\bibliographystyle{alpha}
%\bibliographystyle{splncs04}
%\clearpage
\bibliographystyle{plainurl}% the mandatory bibstyle
\bibliography{ref}
\clearpage
%% Please use bibtex, 
%\bibliography{lipics-v2018-sample-article}
\section*{Appendix A}

\subsection*{A.1: Proof of Lemma~\ref{lem:LF_property}}
\begin{proof}
(i) Let $y = (i-\ell_{x})$ and $c = L[\ell_{x} + (i-\ell_{x})]$.  
$\LF(\ell_{x} + y) = C[c] + \rank(L, c, \ell_{x} + y)$ holds 
%by the property of BWT introduced in Section~\ref{sec:bwt}. 
by the BWT property described in Section~\ref{sec:bwt}. 
$\rank(L, c, \ell_{x} + y) = \rank(L, c, \ell_{x}) + y$ holds because the $x$-th run $L_{x}$ is a repetition of the character $c$. 
Hence $\LF(\ell_{x} + y) = C[c] + \rank(L, c, \ell_{x} + y) = C[c] + \rank(L, c, \ell_{x}) + y = \LF(\ell_{x}) + y$ holds. 
By $i = \ell_{x} + y$, 
$\LF(i) = \LF(\ell_{x}) + (i - \ell_{x})$ holds. 

(ii) 
Let $p$ be the integer satisfying $L_{p} = \$$. 
Then, $\LF(\ell_{p}) = 1$ holds, 
and hence $\LF(\ell_{\delta[1]}) = 1$ holds by $\delta[1] = p$. 
Next, $\LF(\ell_{\delta[i]}) = \LF(\ell_{\delta[i-1]}) + |L_{\delta[i-1]}|$ holds for any $i \in [2, r]$, 
because (a) 
the LF function maps the interval $[\ell_{\delta[i]}, \ell_{\delta[i]} + |L_{\delta[i]}| - 1]$ 
into the interval $[\LF(\ell_{\delta[i]}), \LF(\ell_{\delta[i]}) + |L_{\delta[i]}| - 1]$ by Lemma~\ref{lem:LF_property}(i) for any $i \in [1, r]$, 
%(b) LF function is a bijective function from $[1, n]$ to $[1, n]$, 
(b) LF is a bijection from $[1, n]$ to $[1, n]$, 
and (c) $\LF(\ell_{\delta[1]}) < \LF(\ell_{\delta[2]}) < \cdots < \LF(\ell_{\delta[r]})$ holds. 

\end{proof}
\subsection*{A.2: Proof of Lemma~\ref{lem:phi_property}}
\begin{proof}
(i) 
Lemma~\ref{lem:phi_property}(i) clearly holds for $i = u_{x}$. 
We show that Lemma~\ref{lem:phi_property}(i) holds for $i \neq u_{x}$~(i.e., $i > u_{x}$). 
Let $s_{t}$ be the position with sa-value $u_{x} + t$ for an integer $t \in [1, y]$~(i.e., $\SA[s_{t}] = u_{x} + t$), 
where $y = i - u_{x}$.
Two adjacent positions $s_{t}$ and $s_{t}+1$ are contained in an interval $[\ell_{v}, \ell_{v} + |L_{v}| - 1]$ on SA~(i.e., 
$s_{t}, s_{t}+1 \in [\ell_{v}, \ell_{v} + |L_{v}| - 1]$), which corresponds to the $v$-th run $L_{v}$ of $L$. 
This is because $s_{t}$ is not the ending position of a run, i.e., 
$(u_{x} + t) \not \in \{ u_{1}, u_{2}, \ldots, u_{r} \}$. 
The LF function maps $s_{t}$ into $s_{t-1}$, 
where $s_{0}$ is the position with sa-value $u_{x}$. 
LF also maps $s_{t}+1$ into $s_{t-1}+1$ by Lemma~\ref{lem:LF_property}(i). 
%The two mapping relationship by LF function produces  
The two mapping relationships established by LF produce
$y$ equalities $\phi^{-1}(\SA[s_{1}]) = \phi^{-1}(\SA[s_{0}]) + 1$, $\phi^{-1}(\SA[s_{2}]) = \phi^{-1}(\SA[s_{1}]) + 1$, 
$\ldots$, $\phi^{-1}(\SA[s_{y}]) = \phi^{-1}(\SA[s_{y-1}]) + 1$. 
The equalities lead to $\phi^{-1}(\SA[s_{y}]) = \phi^{-1}(\SA[s_{0}]) + y$, 
which represents $\phi^{-1}(i) = \phi^{-1}(u_{x}) + (i - u_{x})$ by 
$\SA[s_{y}] = i$, $\SA[s_{0}] = u_{x}$, and $y = i - u_{x}$. 

(ii) 
Let $p$ be the integer satisfying $L_{p} = \$$. 
Then there exists an integer $q$ such that $u_{q}$ is the sa-value at position $\ell_{p} - 1$ if $p \neq 1$; 
otherwise $u_{q}$ is the sa-value at position $n$. 
$\phi^{-1}(u_{q}) = 1$, because $\SA[\ell_{p}] = 1$ always holds. 
Hence $\phi^{-1}(u_{\delta'[1]}) = 1$ holds by $\delta'[1] = q$. 

Next, $\phi^{-1}(u_{\delta'[i]}) = \phi^{-1}(u_{\delta'[i-1]}) + d$ holds for any $i \in [2, r]$, 
because (a) 
$\phi^{-1}$ maps the interval $[u_{\delta'[i]}, u_{\delta'[i]} + d - 1]$ 
into the interval $[\phi^{-1}(u_{\delta'[i]}), \phi^{-1}(u_{\delta'[i]}) + d - 1]$ by Lemma~\ref{lem:phi_property}(i) for any $i \in [1, r]$, 
(b) $\phi^{-1}$ is a bijection from $[1, n]$ to $[1, n]$, 
and (c) $\phi^{-1}(u_{\delta'[1]}) < \phi^{-1}(u_{\delta'[2]}) < \cdots < \phi^{-1}(u_{\delta'[r]})$ holds. 

(iii) Recall that $p$ is the integer satisfying $L_{p} = \$$. 
Then there exists an integer $q'$ such that $u_{q'}$ is the sa-value at position $\ell_{p}$, 
because the length of $L_{p}$ is $1$. 
Hence, $u_{1} = u_{q'} = 1$ holds.

\end{proof}

\subsection*{A.3: Examples of Lemmas~\ref{lem:LF_property} and~\ref{lem:phi_property}}
Here, we give an example of Lemma~\ref{lem:LF_property}.
In Figure~\ref{fig:bwt}, 
$(\ell_{1}, \ell_{2}, \ell_{3}, \ell_{4}) = (1, 7, 13, 14)$ and 
$(\LF(\ell_{1}), \LF(\ell_{2})$, $\LF(\ell_{3}), \LF(\ell_{4})) = (10, 2, 1, 8)$. 
Hence, $\LF(3) = \LF(\ell_{1}) + (3 - \ell_{1}) = 12$ and 
$\LF(8) = \LF(\ell_{2}) + (8 - \ell_{2}) = 3$ hold by Lemma~\ref{lem:LF_property}(i).

Next, we give an example of Lemma~\ref{lem:phi_property}. 
In Figure~\ref{fig:bwt}, 
$(u_{1}, u_{2}, u_{3}, u_{4}) = (1, 4, 5, 9)$ and 
$(\phi^{-1}(u_{1}), \phi^{-1}(u_{2})$, $\phi^{-1}(u_{3})$, $\phi^{-1}(u_{4})) = (12, 15, 8, 1)$. 
Hence 
$\phi^{-1}(3) = \phi^{-1}(u_{1}) + (3 - u_{1}) = 14$ and 
$\phi^{-1}(8) = \phi^{-1}(u_{3}) + (8 - u_{3}) = 11$ hold by Lemma~\ref{lem:phi_property}(i).

\section*{Appendix B}
\subsection*{B.1: Proof of Lemma~\ref{lem:balance_size} }
\begin{proof}
Let $\mathcal{Q}(I^{\mathsf{out}}_{t-1})$ be the set of the starting positions of input intervals in $G(I^{\mathsf{out}}_{t-1})$~(i.e., 
$\mathcal{Q}(I^{\mathsf{out}}_{t-1}) = \{ p'_{1}, p'_{2}, \ldots, p'_{k'} \}$). 
Then $\mathcal{Q}(I^{\mathsf{out}}_{t}) = \mathcal{Q}(I^{\mathsf{out}}_{t-1}) \cup \{ p'_{j} + d \}$ holds from the definition of $I^{\mathsf{out}}_{t}$. 
Next, let $\mathsf{Edge}_{2}(I^{\mathsf{out}}_{t-1})$ be the set of output intervals such that 
each output interval has at least two incoming edges in $G(I^{\mathsf{out}}_{t-1})$, 
i.e., $\mathsf{Edge}_{2}(I^{\mathsf{out}}_{t-1}) = \{ [q'_{i}, q'_{i} + d'_{i} -1] \mid i \in [1, k'] \mbox{ s.t. } |[q'_{i}, q'_{i} + d'_{i} -1] \cap \mathcal{Q}(I^{\mathsf{out}}_{t-1})| \geq 2 \}$, 
where $d'_{i} = p'_{i+1} - p'_{i}$. 
$[q'_{i}, q'_{i} + d'_{i} -1] \in \mathsf{Edge}_{2}(I^{\mathsf{out}}_{t})$ holds if 
$[q'_{i}, q'_{i} + d'_{i} - 1] \in \mathsf{Edge}_{2}(I^{\mathsf{out}}_{t-1})$ for any integer $i \in [1, k'] \setminus \{ j \}$.
This is because (i) $[q'_{i}, q'_{i} + d'_{i} -1]$ is also an output interval of $I^{\mathsf{out}}_{t}$, 
and (ii) $([q'_{i}, q'_{i} + d'_{i} -1] \cap \mathcal{Q}(I^{\mathsf{out}}_{t-1})) \subseteq ([q'_{i}, q'_{i} + d'_{i} -1] \cap \mathcal{Q}(I^{\mathsf{out}}_{t}))$ holds by $\mathcal{Q}(I^{\mathsf{out}}_{t-1}) \subseteq \mathcal{Q}(I^{\mathsf{out}}_{t})$. 
$[q'_{j}, q'_{j} + d -1], [q'_{j} + d, q'_{j+1} - 1] \in \mathsf{Edge}_{2}(I^{\mathsf{out}}_{t})$ also holds by the third property of $I^{\mathsf{out}}_{t}$. 
Hence, we obtain an inequality $|\mathsf{Edge}_{2}(I^{\mathsf{out}}_{t})| \geq |\mathsf{Edge}_{2}(I^{\mathsf{out}}_{t-1})| + 1$ for any integer $t \in [1, \tau]$.  
The inequality $|\mathsf{Edge}_{2}(I^{\mathsf{out}}_{t})| \geq |\mathsf{Edge}_{2}(I^{\mathsf{out}}_{t-1})| + 1$ 
guarantees that $|\mathsf{Edge}_{2}(I^{\mathsf{out}}_{t})| \geq t$ holds for any integer $t \in [0, \tau]$. 
The inequality $|\mathsf{Edge}_{2}(I^{\mathsf{out}}_{t})| \geq t$ indicates that 
$I^{\mathsf{out}}_{t}$ consists of at least $2t$ pairs, 
because each output interval in $\mathsf{Edge}_{2}(I^{\mathsf{out}}_{t})$ has at least two incoming edges from distinct input intervals. 
Hence, Lemma~\ref{lem:balance_size} holds. 
\end{proof}

\subsection*{B.2: Details of $I_{\mathsf{SA}}$}
$I_{\mathsf{SA}}$ has the following three properties: 
(i) $u_{1} = 1 < u_{2} < \cdots < u_{r} \leq n$ holds by Lemma~\ref{lem:phi_property}(iii), 
(ii) $\phi^{-1}(u_{\delta'[1]}) = 1$ by Lemma~\ref{lem:phi_property}(ii), 
and (iii) 
$\phi^{-1}(u_{\delta'[i]}) = \phi^{-1}(u_{\delta'[i-1]}) + (u_{\delta'[i-1]+1} - u_{\delta'[i-1]})$ holds by Lemma~\ref{lem:phi_property}(ii) for each $i \in [2, r]$, 
where $\delta'$ is the permutation of $[1,r]$ introduced in Section~\ref{sec:bwt}. 
Hence $I_{\mathsf{SA}}$ satisfies the three conditions of the disjoint interval sequence, which were described in Section~\ref{sec:move_function}. 

Let $f_{\mathsf{SA}}$ be the bijective function represented by the disjoint interval sequence $I_{\mathsf{SA}}$. 
Then $f_{\mathsf{SA}}(i) = \phi^{-1}(u_{x}) + (i - u_{x})$ holds, 
where $x$ is the integer such that $u_{x} \leq i < u_{x+1}$ holds. 
On the other hand, $\phi^{-1}(i) = \phi^{-1}(u_{x}) + (i - u_{x})$ holds by Lemma~\ref{lem:phi_property}(i). 
Hence $f_{\mathsf{SA}}$ and $\phi^{-1}(i)$ are the same function.

\section*{Appendix C}
\subsection*{C.1: Rank-select data structure}
%We introduce a set $\{ c_{1}, c_{2}, \ldots, c_{\sigma'} \}$ and function $\gamma$ for a string $S$. 
Here, we describe the set $\{ c_{1}, c_{2}, \ldots, c_{\sigma'} \}$ and function $\gamma$ for a string $S$.
$c_{1}, c_{2}, \ldots, c_{\sigma'}$ are all the distinct characters in $S$, 
i.e., $\{ c_{1}, c_{2}, \ldots, c_{\sigma'} \} = \{ S[i] \mid i \in [1, |S|] \}$~($c_{1} < c_{2} < \cdots < c_{\sigma'}$). 
The function $\gamma$ returns the rank of a given character $c \in \Sigma$ in a string $S$; i.e., 
$\gamma(S, c) = j$ if there exists an integer $j$ such that $c = c_{j}$ holds; otherwise $\gamma(S, c) = -1$.

A rank-select data structure $R(S)$ consists of three data structures $R_{\mathsf{rank}}$, $R_{\mathsf{select}}$, and  $R_{\mathsf{map}}$. 
$R_{\mathsf{rank}}$ is a \emph{rank data structure} for solving 
a rank query on a string $S$ in $O(\log\log_{w} \sigma)$ time and with $O(|S|)$ words of space~\cite{DBLP:journals/talg/BelazzouguiN15}. 
$R_{\mathsf{select}}$ consists of $\sigma'$ arrays $H_{1}, H_{2}, \ldots, H_{\sigma'}$. 
The size of $H_{j}$ is $|\Occ(S, c_{j})|$ for each $j \in \{ 1, 2, \ldots, \sigma' \}$, 
and $H_{j}[i]$ stores $\select(S, c_{j}, i)$ for each $i \in [1, |\Occ(S, c_{j})|]$. 
$R_{\mathsf{map}}$ is a \emph{deterministic dictionary}~\cite{DBLP:conf/icalp/Ruzic08} storing the mapping function $\gamma$ for $S$. 
The deterministic dictionary can compute $\gamma(S, c)$ for a given character $c$ in constant time, 
and its space usage is $O(\sigma')$ words. 
The space usage of the rank-select data structure is $O(|S|)$ words in total, 
because $\sigma' \leq |S|$ holds.
We can compute a given select query $\select(S, c, i)$ in two steps: 
(i) compute $j = \gamma(S, c)$; and  
(ii) return $-1$ if $j = -1$ or $|H_{j}| < i$; otherwise, return $H_{j}[i]$. 
Hence, the rank-select data structure can support rank and select queries on $S$ in $O(\log\log_{w} \sigma)$ and $O(1)$ time, respectively. 

\section*{Appendix D}
\subsection*{D.1: Proof of the construction time and working space in Theorem~\ref{theo:OptBWTR}}
%Recall that OptBWTR consists of five data structures $F(I_{\mathsf{LF}})$, $F(I_{\mathsf{SA}})$, $R(\sbwt)$, $\SA^{+}$, and $\SA^{+}_{\mathsf{index}}$. 
Recall that OptBWTR consists of five data structures, i.e., $F(I_{\mathsf{LF}})$, $F(I_{\mathsf{SA}})$, $R(\sbwt)$, $\SA^{+}$, and $\SA^{+}_{\mathsf{index}}$.
%We show that we can construct the five data structures in $O(n + r \log r)$ time and $O(r)$ words of working space 
%by processing the RLBWT of $T$. 
Here, we will show that we can construct the five data structures in $O(n + r \log r)$ time and $O(r)$ words of working space 
by processing the RLBWT of $T$. 

\subsubsection*{Construction of $F(I_{\mathsf{LF}})$}
We use the following four lemmas. 
\begin{lemma}\label{lem:sort}
Let $K$ be a sequence of $k$ integers $p_{1}, p_{2}, \ldots, p_{k}$ such that $p_{i} = n^{O(1)}$ holds for each $i \in [1, k]$, 
where $k = O(n)$. 
We can sort the sequence $K$ in increasing order of those integers in $O(n)$ time and $O(k)$ words of working space. 
\end{lemma}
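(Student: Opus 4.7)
The plan is to prove Lemma~\ref{lem:sort} by integer radix sort with base tuned to $k$. Because every $p_{i} = n^{O(1)}$ fits into $O(1)$ machine words under $w = \Theta(\log n)$, arithmetic on a single $p_{i}$, and in particular extraction of an individual base-$b$ digit when $b$ is a power of two, costs $O(1)$ on the word RAM.

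Concretely, I would handle $k \leq 2$ trivially and assume $k \geq 3$. Choose $b = 2^{\lceil \log_2 k \rceil}$, so that $b = \Theta(k)$ and $b$ is a power of two, and write each $p_{i}$ in base $b$, giving $d = O(\log n / \log b) = O(\log n / \log k)$ digits. Run $d$ stable counting-sort passes from the least to the most significant digit. Each pass maintains a counter array of size $b$ together with an output buffer of size $k$, so uses $O(k + b) = O(k)$ words of working space and runs in $O(k + b) = O(k)$ time; stability of counting sort then yields the standard correctness of multi-pass radix sort.

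The total running time is $O(d \cdot k) = O(k \log n / \log k)$ and the working space is $O(k)$. To see the time is $O(n)$, use that $x/\log x$ is monotonically increasing on $[e, \infty)$: since $3 \leq k \leq n$, one has $k / \log k \leq n / \log n$, hence $k \log n / \log k \leq n$.

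The delicate point in the plan is exactly the choice $b = \Theta(k)$. A small base (say $b = 2$) would force $\Theta(\log n)$ passes costing $\Theta(k \log n)$ total time, which can exceed $n$; a large base (say $b = \Theta(n)$, as in ordinary $O(n)$-time counting sort) would blow the counter array up to $\Theta(n)$ words and violate the $O(k)$ working-space bound. Balancing these two forces $b = \Theta(k)$, after which the monotonicity of $x/\log x$ on $[e,\infty)$ closes the time bound and completes the argument.
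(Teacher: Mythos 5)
Your proof is correct, and it takes a slightly different route from the paper's. The paper splits into two cases: when $k < n/\log n$ it falls back to a comparison sort in $O(k \log k) = O(n)$ time; when $k \geq n/\log n$ it runs LSD radix sort with a fixed bucket size of $n/\log n$, which gives $O(n/\log n + k) = O(k)$ time and space per pass and $O(1)$ passes because $\log M / \log(n/\log n) = O(1)$ for $M = n^{O(1)}$. You instead run radix sort uniformly with base $b = \Theta(k)$, which gives $O(k)$ time and space per pass and $O(\log n / \log k)$ passes, and then close the $O(n)$ time bound by observing that $x/\log x$ is increasing on $[e,\infty)$. Both are valid; your version avoids the case split, trading the paper's fixed $O(1)$-pass structure for a variable pass count that is nonetheless bounded by the monotonicity argument. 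One small wrinkle: you write $k \leq n$, but the hypothesis is only $k = O(n)$; this does not affect the big-O conclusion since $k \leq cn$ still yields $k/\log k = O(n/\log n)$, but it is worth stating that you absorb the constant. The trade-off analysis you give at the end correctly identifies why $b = \Theta(k)$ is the right knob, which matches the intuition behind the paper's choice of $n/\log n$ in the large-$k$ regime.
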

\begin{proof}
We sort the sequence $K$ by using a standard sorting algorithm if $k < \frac{n}{\log n}$. 
The sorting algorithm takes $O(k \log k) = O(n)$ time and $O(k)$ word of working space. 
%Otherwise, we use an LSD radix sort with a bucket size of $\frac{n}{\log n}$, i.e., 
Otherwise, we use an LSD radix sort with a bucket size of $\frac{n}{\log n}$; i.e., 
we sort the integers of $K$ with a bucket sort in $O(\frac{n}{\log n} + k)$ time by using the $\log (n / \log n)$ bits 
starting at position $(i-1)\log (n / \log n)$ 
for each step $i = 1, 2, \ldots, \frac{\log M}{\log (n / \log n)}$, 
where $M = \max\{ p_{1}, p_{2}, \ldots, p_{k} \}$. 
The space usage of the radix sort is $O(\frac{n}{\log n} + k) = O(k)$ words, and the running time is $O(n)$ time in total,  
because $\frac{\log M}{\log (n / \log n)} = O(1)$ holds by $M = O(n^{O(1)})$. 
Hence, we obtain Lemma~\ref{lem:sort}. 
\end{proof}

\begin{lemma}\label{lem:balanced_sequence_construction}
We can construct $B(I)$ in $O(k \log k)$ time and $O(k)$ words 
of working space for a given disjoint interval sequence $I$ of size $k$.
\end{lemma}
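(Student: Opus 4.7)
The plan is to simulate the iterative definition of $I^{\mathsf{out}}_t$ directly, maintaining enough auxiliary structure so that each elementary split costs $O(\log k)$ and the total number of splits is $O(k)$. By Lemma~\ref{lem:balanced_sequence2}(i), the balanced sequence $B(I) = I^{\mathsf{out}}_\tau$ has at most $2k$ pairs, so $\tau \leq k$; hence it suffices to perform at most $k$ splits starting from $I^{\mathsf{out}}_0 = I$.

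I would maintain two balanced search trees over the current pairs $(p'_i, q'_i)$ of $I^{\mathsf{out}}_{t-1}$: a $P$-tree keyed by the input-interval starts $p'_i$ (which also encodes the sequence order), and a $Q$-tree keyed by the output-interval starts $q'_i$. Both support $O(\log k)$ order-statistic and rank queries. For a given pair $j$, the number of incoming edges to its output interval equals the number of $p'_i$ lying in $[q'_j, q'_j + d'_j - 1]$, computable by two rank queries in the $P$-tree. I would augment each node of the $P$-tree with the minimum sequence index in its subtree that currently belongs to a \emph{bad} pair (one with $\geq 4$ incoming edges), so the smallest bad index can be located in $O(\log k)$ time.

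Each split step proceeds as follows. After locating the smallest bad index $j$, compute $d$ by querying the $P$-tree for the third smallest $p'_i$ inside $[q'_j, q'_j + d'_j - 1]$; then $d$ is that value minus $q'_j$. Insert the new pair $(p'_j + d, q'_j + d)$ into both trees. The edge-count changes are local: the original output interval $[q'_j, q'_j + d'_j - 1]$ is replaced by two halves inheriting counts $2$ and $c - 2$ (where $c$ was the original count), and in addition the output interval containing the newly inserted boundary $p'_j + d$ gains one incoming edge. This latter output interval is identified by a single predecessor query in the $Q$-tree. For each pair whose count crosses the threshold $4$, I update its bad-pair status in the augmentation of the $P$-tree.

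The total running time is $O(k \log k)$ because at most $k$ splits occur and each performs $O(1)$ tree operations of cost $O(\log k)$; the working space is $O(k)$ words since all data structures have $O(k)$ nodes throughout. The main subtlety I expect to navigate is the bookkeeping around the newly inserted boundary $p'_j + d$: besides the two halves of the split interval, it creates exactly one new incoming edge to some other output interval, and that output interval must be identified and its bad-pair status recomputed correctly in $O(\log k)$. Once this bookkeeping is in place, termination follows from $|B(I)| \leq 2k$ and the stated time and space bounds follow immediately.
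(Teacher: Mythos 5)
Your proposal is correct and matches the paper's approach in all essentials: both simulate the sequence $I^{\mathsf{out}}_0, I^{\mathsf{out}}_1, \ldots$ directly, maintain balanced search trees over the current pairs keyed by input-start and output-start, locate the next pair to split, compute $d$ by an order-statistics/predecessor search, and observe that only $O(1)$ output intervals change their in-degree per split, so each of the $\tau \leq k$ splits costs $O(\log k)$. The only cosmetic difference is that the paper keeps a third tree $\mathcal{T}^{e}_{t}$ holding the currently over-full pairs, whereas you fold the same information into a subtree-minimum augmentation of the $P$-tree; these are interchangeable.
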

\begin{proof}
%We introduce three self-adjusting balanced binary trees $\mathcal{T}^{\mathsf{in}}_{t}$, $\mathcal{T}^{\mathsf{out}}_{t}$, and $\mathcal{T}^{e}_{t}$ for $I^{\mathsf{out}}_{t}$, 
We define three self-adjusting balanced binary trees $\mathcal{T}^{\mathsf{in}}_{t}$, $\mathcal{T}^{\mathsf{out}}_{t}$, and $\mathcal{T}^{e}_{t}$ for $I^{\mathsf{out}}_{t}$, 
where $t$ is an integer in $[0, \tau]$.
The balanced binary tree $\mathcal{T}^{\mathsf{in}}_{t}$~(respectively, $\mathcal{T}^{\mathsf{out}}_{t}$) stores the pairs of $I^{\mathsf{out}}_{t}$ 
%in increasing order of the starting positions of their input intervals~(respectively, their output intervals). 
in increasing order of starting position of their input intervals~(respectively, their output intervals). 
The balanced binary tree $\mathcal{T}^{e}_{t}$ stores the pairs of $I^{\mathsf{out}}_{t}$ 
such that the output interval of each pair has at least four incoming edges in $G(I^{\mathsf{out}}_{t})$. 
%In $\mathcal{T}^{e}_{t}$, the pairs of $I^{\mathsf{out}}_{t}$ are sorted in increasing order of the starting positions of their input intervals. 
In $\mathcal{T}^{e}_{t}$, the pairs of $I^{\mathsf{out}}_{t}$ are sorted in increasing order of starting position of their input intervals. 
$\mathcal{T}^{e}_{t}$ is empty if and only if $t = \tau$ for any $t \in [0, \tau]$. 

%We explain an algorithm 
%changing $\mathcal{T}^{\mathsf{in}}_{t-1}$ and $\mathcal{T}^{\mathsf{out}}_{t-1}$ into 
%$\mathcal{T}^{\mathsf{in}}_{t}$ and $\mathcal{T}^{\mathsf{out}}_{t}$, respectively, using $\mathcal{T}^{e}_{t-1}$. 
Now let us explain an algorithm 
for changing $\mathcal{T}^{\mathsf{in}}_{t-1}$ into $\mathcal{T}^{\mathsf{in}}_{t}$ 
and $\mathcal{T}^{\mathsf{out}}_{t-1}$ into $\mathcal{T}^{\mathsf{out}}_{t}$ by using $\mathcal{T}^{e}_{t-1}$.
Let $I^{\mathsf{out}}_{t-1} = (p'_{1}, q'_{1}), (p'_{2}, q'_{2}), \ldots, (p'_{k'}, q'_{k'})$ 
and $I^{\mathsf{out}}_{t} = (p'_{1}, q'_{1})$, $(p'_{2}, q'_{2})$, $\ldots$, $(p'_{j-1}, q'_{j-1})$, 
$(p'_{j}, q'_{j})$, $(p'_{j}+d, q'_{j}+d)$, $\ldots$, $(p'_{k'}, q'_{k'})$. 
After finding the $j$-th pair and integer $d$, 
we can update $\mathcal{T}^{\mathsf{in}}_{t-1}$ and $\mathcal{T}^{\mathsf{out}}_{t-1}$ in $O(\log k')$ time. 
The $j$-th pair $(p'_{j}, q'_{j})$ is the first pair stored in $\mathcal{T}^{e}_{t}$, 
and $d = p'_{x+1} - q'_{j} + 1$ holds. 
Here, $x$ is the smallest integer satisfying $p'_{x} \geq q'_{j}$. 
We can find $p'_{x+1}$ in $O(\log k')$ time by performing a binary search on $\mathcal{T}^{\mathsf{in}}_{t-1}$, 
and hence 
we can change $\mathcal{T}^{\mathsf{in}}_{t-1}$ and $\mathcal{T}^{\mathsf{out}}_{t-1}$ into 
$\mathcal{T}^{\mathsf{in}}_{t}$ and $\mathcal{T}^{\mathsf{out}}_{t}$, respectively, in $O(\log k')$ time. 

%Next, we explain an algorithm changing $\mathcal{T}^{e}_{t-1}$ into $\mathcal{T}^{e}_{t}$ using 
%$\mathcal{T}^{\mathsf{in}}_{t}$, $\mathcal{T}^{\mathsf{out}}_{t}$, $j$, and $d$. 
Next, let us explain an algorithm for changing $\mathcal{T}^{e}_{t-1}$ into $\mathcal{T}^{e}_{t}$ by using 
$\mathcal{T}^{\mathsf{in}}_{t}$, $\mathcal{T}^{\mathsf{out}}_{t}$, $j$, and $d$.
Let $y$ be the integer such that the $y$-th output interval $[q'_{y}, q'_{y} + d'_{y}-1]$ contains the position $p'_{j} + d$, 
and let $d'_{j} = p'_{j+1} - p'_{j}$.  
%Two new input intervals $[p'_{j}, p'_{j} + d - 1]$ and $[p'_{j} + d, p'_{j} + d'_{j} - 1]$ are created while changing $I^{\mathsf{out}}_{t-1}$ into $I^{\mathsf{out}}_{t}$. 
Two new input intervals $[p'_{j}, p'_{j} + d - 1]$ and $[p'_{j} + d, p'_{j} + d'_{j} - 1]$ are created by changing $I^{\mathsf{out}}_{t-1}$ into $I^{\mathsf{out}}_{t}$. 
%In other words, the two new input intervals and their corresponding output intervals~(i.e., $[q'_{j}, q'_{j} + d - 1], [q'_{j} + d, q'_{j} + d'_{j} - 1]$) 
%are added into $G(I^{\mathsf{out}}_{t-1})$. 
In other words, the new input intervals and their corresponding output intervals~(i.e., $[q'_{j}, q'_{j} + d - 1], [q'_{j} + d, q'_{j} + d'_{j} - 1]$) 
are added to $G(I^{\mathsf{out}}_{t-1})$. 
The number of incoming edges of output interval $[q'_{y}, q'_{y} + d'_{y}-1]$ is also changed. 
%By the update, three pairs $(p'_{j}, q'_{j})$, $(p'_{j} + d, q'_{j} + d)$, and $(p'_{y}, q'_{y})$ may be added into $\mathcal{T}^{e}_{t-1}$ 
%for changing $\mathcal{T}^{e}_{t-1}$ into $\mathcal{T}^{e}_{t}$. 
As a result of the update, three pairs $(p'_{j}, q'_{j})$, $(p'_{j} + d, q'_{j} + d)$, and $(p'_{y}, q'_{y})$ may be added to $\mathcal{T}^{e}_{t-1}$ 
for changing $\mathcal{T}^{e}_{t-1}$ into $\mathcal{T}^{e}_{t}$.
On the other hand, the $j$-th input and output interval are removed from $G(I^{\mathsf{out}}_{t-1})$. 
%By the update, $j$-th pair $(p'_{j}, q'_{j})$ is removed from $\mathcal{T}^{e}_{t-1}$. 
As a result of the update, the $j$-th pair $(p'_{j}, q'_{j})$ is removed from $\mathcal{T}^{e}_{t-1}$.

%We changes $\mathcal{T}^{e}_{t-1}$ into $\mathcal{T}^{e}_{t}$ by the following three steps: 
We can change $\mathcal{T}^{e}_{t-1}$ into $\mathcal{T}^{e}_{t}$ in three steps:
(i) remove the $j$-th pair $(p'_{j}, q'_{j})$ from $\mathcal{T}^{e}_{t-1}$; 
(ii) find the pair creating the output interval $[q'_{y}, q'_{y+1}-1]$ by performing a binary search on $\mathcal{T}^{\mathsf{out}}_{t}$; 
(iii) verify whether each output interval $[b,e] \in \{ [q'_{j}, q'_{j} + d - 1], [q'_{j} + d, q'_{j} + d'_{j} - 1], [q'_{y}, q'_{y+1}-1] \}$ 
has at least four incoming edges in $G(I^{\mathsf{out}}_{t})$, 
and add the pair creating the output interval $[b,e]$ to $\mathcal{T}^{e}_{t-1}$ if $[b, e]$ has at least four incoming edges in $G(I^{\mathsf{out}}_{t})$. 
We can verify whether an output interval contains at least four incoming edges in $G(I^{\mathsf{out}}_{t})$ by using a binary search on $\mathcal{T}^{\mathsf{in}}_{t}$. 
Hence, the algorithm runs in $O(\log k')$ time in total. 

%Next, we present a construction algorithm of $I^{\mathsf{out}}_{\tau}$. 
Next, we present an algorithm for constructing $I^{\mathsf{out}}_{\tau}$.
The construction algorithm consists of three phases: 
(i) construct $\mathcal{T}^{\mathsf{in}}_{0}$, $\mathcal{T}^{\mathsf{out}}_{0}$, and $\mathcal{T}^{e}_{0}$ by processing $I$ in $O(k \log k)$ time; 
(ii) construct $\mathcal{T}^{\mathsf{in}}_{\tau}$, $\mathcal{T}^{\mathsf{out}}_{\tau}$, and $\mathcal{T}^{e}_{\tau}$ 
using $\mathcal{T}^{\mathsf{in}}_{0}$, $\mathcal{T}^{\mathsf{out}}_{0}$, and $\mathcal{T}^{e}_{0}$;
and (iii) compute $I^{\mathsf{out}}_{\tau}$ using $\mathcal{T}^{\mathsf{in}}_{\tau}$. 
The second phase consists of $\tau$ steps. 
In the $t$-th step, we change $\mathcal{T}^{\mathsf{in}}_{t-1}$, $\mathcal{T}^{\mathsf{out}}_{t-1}$, and $\mathcal{T}^{e}_{t-1}$ 
into $\mathcal{T}^{\mathsf{in}}_{t}$, $\mathcal{T}^{\mathsf{out}}_{t}$, and $\mathcal{T}^{e}_{t}$, respectively, for $t \in [1, \tau]$. 
Hence, the construction algorithm of $I^{\mathsf{out}}_{\tau}$ takes $O(k \log k)$ time in total by $\tau \leq k$.
\end{proof}
\begin{lemma}\label{lem:balance_construction}
We can construct $F(I)$ in $O(k \log k)$ time and $O(k)$ words of working space 
for a given disjoint interval sequence $I$ of size $k$.
\end{lemma}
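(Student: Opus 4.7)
The plan is to reduce the construction of $F(I)$ to constructing $B(I)$ and then filling in the two arrays $D_{\mathsf{pair}}$ and $D_{\mathsf{index}}$ of the move data structure. First, I would apply Lemma~\ref{lem:balanced_sequence_construction} to compute $B(I) = (p_{1}, q_{1}), (p_{2}, q_{2}), \ldots, (p_{k'}, q_{k'})$ in $O(k \log k)$ time and $O(k)$ words of working space, where $k' \leq 2k$ by Lemma~\ref{lem:balanced_sequence2}(i). This immediately yields $D_{\mathsf{pair}}$ by setting $D_{\mathsf{pair}}[i] = (p_{i}, q_{i})$ for $i \in [1, k']$ in $O(k')$ additional time.

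The remaining task is to build $D_{\mathsf{index}}$, i.e., for each $i \in [1, k']$ to determine the index $j$ of the input interval of $B(I)$ containing $q_{i}$, namely the unique $j$ with $p_{j} \leq q_{i} < p_{j+1}$ (where $p_{k'+1} = n+1$). Since $p_{1} < p_{2} < \cdots < p_{k'}$, I would compute each $j$ by binary search on the sorted array $p_{1}, p_{2}, \ldots, p_{k'}$, giving $O(k' \log k') = O(k \log k)$ total time and $O(k)$ working space. Alternatively, one can obtain an $O(k)$-time construction by first sorting the pairs $(q_{i}, i)$ in increasing order of $q_{i}$ using Lemma~\ref{lem:sort} (this applies because $q_{i} \in [1, n]$ and $k' = O(k) = O(n)$) and then performing a single synchronized linear sweep through the sorted $q$-values and the sorted $p$-values; either implementation meets the stated bound.

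The correctness of $D_{\mathsf{index}}$ is immediate from its definition, and combined with $D_{\mathsf{pair}}$ it constitutes the entire move data structure $F(I)$ as described in Section~\ref{sec:graph_and_move_data_structure}. The main obstacle in the proof is really encapsulated inside Lemma~\ref{lem:balanced_sequence_construction}, where maintaining the balanced binary trees $\mathcal{T}^{\mathsf{in}}_{t}$, $\mathcal{T}^{\mathsf{out}}_{t}$, and $\mathcal{T}^{e}_{t}$ throughout the $\tau \leq k$ split operations already supplies the $O(k \log k)$-time construction of $B(I)$; once $B(I)$ is available, filling the two auxiliary arrays is routine. Hence $F(I)$ is built within the claimed $O(k \log k)$ time and $O(k)$ words of working space.
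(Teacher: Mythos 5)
Your proof is correct and follows the same route as the paper: invoke Lemma~\ref{lem:balanced_sequence_construction} to build $B(I)$ in $O(k\log k)$ time and $O(k)$ words, then populate $D_{\mathsf{pair}}$ and $D_{\mathsf{index}}$. The paper leaves step (ii) implicit; your binary-search (or sort-and-sweep) construction of $D_{\mathsf{index}}$ is a valid way to fill in that detail and stays within the claimed bounds.
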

\begin{proof}
Recall that $F(I)$ consists of two arrays $D_{\mathsf{pair}}$ and $D_{\mathsf{index}}$. 
%We construct $F(I)$ by the following two steps: 
We construct $F(I)$ in two steps: 
(i) compute $B(I)$ by Lemma~\ref{lem:balanced_sequence_construction}; 
(ii) construct $D_{\mathsf{pair}}$ and $D_{\mathsf{index}}$ by processing $B(I)$. 
The construction takes $O(k \log k)$ time and $O(k)$ words of working space. 
\end{proof}

\begin{lemma}\label{lem:construction_I_LF}
We can compute $I_{\mathsf{LF}}$ in $O(n)$ time and $O(r)$ words of working space for a given RLBWT of $T$.
\end{lemma}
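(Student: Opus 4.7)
The plan is to exploit Lemma~\ref{lem:LF_property}(ii): if $\delta$ is the permutation of $[1,r]$ satisfying $\LF(\ell_{\delta[1]}) < \cdots < \LF(\ell_{\delta[r]})$, then $\LF(\ell_{\delta[1]}) = 1$ and $\LF(\ell_{\delta[i]}) = \LF(\ell_{\delta[i-1]}) + |L_{\delta[i-1]}|$ for each $i \in [2, r]$. So once $\delta$ is known, all $r$ values $\LF(\ell_1), \ldots, \LF(\ell_r)$ can be produced by a single left-to-right pass in $O(r)$ time, and $I_{\mathsf{LF}}$ is obtained immediately.

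First, I would load the RLBWT $(L_1[1], \ell_1), \ldots, (L_r[1], \ell_r)$ into $O(r)$ words and compute each run length $|L_i| = \ell_{i+1} - \ell_i$ (with $\ell_{r+1} = n+1$) in $O(r)$ time.

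The central step is computing $\delta$. Since $\LF(\ell_i) = C[L_i[1]] + \rank(L, L_i[1], \ell_i)$, two runs $i, j$ sharing the same head character satisfy $\LF(\ell_i) < \LF(\ell_j)$ if and only if $i < j$, while runs with distinct head characters are $\LF$-ordered in the same way as those characters. Hence $\delta$ coincides with the permutation produced by a \emph{stable} sort of $1, 2, \ldots, r$ by head character. Because $\sigma = n^{O(1)}$ and $r = O(n)$, Lemma~\ref{lem:sort} applies to the integer sequence $L_1[1], \ldots, L_r[1]$ carrying original indices as satellite data, and its underlying LSD radix sort is stable; this yields $\delta$ in $O(n)$ time and $O(r)$ words of working space.

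With $\delta$ in hand, I would execute the recurrence of Lemma~\ref{lem:LF_property}(ii) in a single sweep, writing $\LF(\ell_{\delta[i]})$ into slot $\delta[i]$ of an output array of size $r$, and then emit $I_{\mathsf{LF}}$ by pairing each $\ell_i$ with $\LF(\ell_i)$. The hardest part will be justifying that the stable character-sort yields exactly $\delta$ and that the $C$ array (whose size $\sigma$ could exceed the $O(r)$ budget) need not be materialised explicitly---both are handled by the two observations above. The total running time is $O(n)$, dominated by the radix sort, and the working space stays within $O(r)$ words throughout.
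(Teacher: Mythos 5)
Your proof is essentially the paper's own proof: both identify $\delta$ as the permutation given by sorting runs by head character with original run index as tie-breaker, both invoke Lemma~\ref{lem:sort} to achieve this in $O(n)$ time and $O(r)$ words, and both then sweep the recurrence from Lemma~\ref{lem:LF_property}(ii) to fill in all $\LF(\ell_i)$ without materialising $C$. The one place where you diverge is the tie-breaking mechanism, and there is a small gap there: you rely on the sort from Lemma~\ref{lem:sort} being \emph{stable} so that a sort by head character alone reproduces $\delta$, but that lemma as proved uses an unspecified ``standard sorting algorithm'' when $k < n/\log n$ and only an LSD radix sort (which \emph{is} stable) otherwise, so stability is not actually guaranteed in all branches. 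The paper sidesteps this cleanly by packing the pair $(L[\ell_i], i)$ into a single $(\log \sigma + \log r)$-bit integer $b_i$ and sorting those; then the tie-break is baked into the key and no stability assumption is needed. Your argument would be fully correct if you either adopted the paper's packed-key trick or explicitly chose a stable comparison sort (e.g.\ merge sort) for the small-$k$ branch.
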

\begin{proof}
Recall that $I_{\mathsf{LF}}$ consists of $r$ pairs $(\ell_{1}, \LF(\ell_{1})), (\ell_{2}, \LF(\ell_{2})), \ldots, (\ell_{r}, \LF(\ell_{r}))$. 
$\LF(\ell_{\delta[1]}) = 1$, and 
$\LF(\ell_{\delta[i]}) = \LF(\ell_{\delta[i-1]}) + |L_{\delta[i-1]}|$ holds by Lemma~\ref{lem:LF_property}(ii) for any $i \in [2, r]$, 
%where $\delta$ is the permutation of $[1,r]$ introduced in Section~\ref{sec:bwt}. 
where $\delta$ is the permutation of $[1,r]$ described in Section~\ref{sec:bwt}. 

%We introduce $r$ integers $b_{1}, b_{2}, \ldots, b_{r}$. 
We define $r$ integers $b_{1}, b_{2}, \ldots, b_{r}$. 
$b_{i}$ consists of $\log \sigma + \log r$ bits for each $i \in [1, r]$.  
The upper $\log \sigma$ bits and lower $\log r$ bits of $b_{i}$ represent the character $L[\ell_{i}]$ and $i$, respectively. 
$b_{\delta[1]} < b_{\delta[2]} < \cdots < b_{\delta[r]}$ holds, 
because 
$\LF(\ell_{i}) < \LF(\ell_{j})$ if and only if either of the following conditions holds: 
(i) $L[\ell_{i}] \prec L[\ell_{j}]$ or (ii) $L[\ell_{i}] = L[\ell_{j}]$ and $i < j$ for any pair of integers $i, j \in [1, r]$. 

We construct $I_{\mathsf{LF}}$ in five steps: 
(i) compute $\ell_{1}, \ell_{2}, \ldots, \ell_{r}$ and $b_{1}, b_{2}, \ldots, b_{r}$ by processing the RLBWT of $T$;
(ii) sort $b_{1}, b_{2}, \ldots, b_{r}$ by Lemma~\ref{lem:sort}; 
(iii) construct the permutation $\delta$ using the sorted $r$ integers $b_{1}, b_{2}, \ldots, b_{r}$; 
(iv) compute $\LF(\ell_{\delta[1]}), \LF(\ell_{\delta[2]}), \ldots, \LF(\ell_{\delta[r]})$ using 
$\ell_{1}, \ell_{2}, \ldots, \ell_{r}$ and the permutation $\delta$; 
and (v) construct $I_{\mathsf{LF}}$ using $\ell_{1}, \ell_{2}, \ldots, \ell_{r}$ and $\LF(\ell_{1}), \LF(\ell_{2}), \ldots, \LF(\ell_{r})$.
The construction time is $O(n)$ in total. 
\end{proof}

We construct $F(I_{\mathsf{LF}})$ in two steps: 
(i) construct $I_{\mathsf{LF}}$ by Lemma~\ref{lem:construction_I_LF}, 
and (ii) construct $F(I_{\mathsf{LF}})$ by Lemma~\ref{lem:balance_construction}. 
Hence we can construct $F(I_{\mathsf{LF}})$ in $O(n + r \log r)$ time and $O(r)$ words of working space 
by processing the RLBWT of $T$. 

\subsubsection*{Construction of $F(I_{\mathsf{SA}})$}
%We use the following lemma. 
We will need the following lemma. 
\begin{lemma}\label{lem:construction_I_SA}
We can construct $I_{\mathsf{SA}}$ in $O(n + r \log r)$ and $O(r)$ words of working space using $F(I_{\mathsf{LF}})$ time for a given RLBWT of $T$.
\end{lemma}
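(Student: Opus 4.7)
The plan is to compute $I_{\mathsf{SA}}$ in three stages: extract the SA-value at every run-end (whose sorted sequence yields the $u_{i}$'s) and at every run-start (which yields the paired $\phi^{-1}(u_{i})$'s) via a single LF-walk that visits all $n$ positions of the BWT; sort the run-end SA-values; and finally assemble the pairs.

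For the walk, I would start at $p_{0} = 1$, because $\$$ is lexicographically smallest and so $\SA[1] = n$. Using $F(I_{\mathsf{LF}})$, each step $p_{s+1} = \LF(p_{s})$, together with the input-interval index of $p_{s+1}$ in $B(I_{\mathsf{LF}})$, is computed in $O(1)$ time by Theorem~\ref{theo:lf_inv_theorem}, so after $n$ steps every position of $[1,n]$ has been visited in strictly decreasing order of SA-value. I would simply maintain an implicit counter $v = n, n-1, \ldots, 1$ recording the SA-value of the currently visited position.

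The main obstacle is detecting run-boundaries in constant time during the walk, since the input intervals of $B(I_{\mathsf{LF}})$ can be strict refinements of the original BWT runs rather than coinciding with them. I would handle this by precomputing a boolean array $A$ indexed by the input intervals of $B(I_{\mathsf{LF}})$, with $A[x] = 1$ iff the starting position of the $x$-th input interval equals some run-start $\ell_{i}$. Since $\{\ell_{1}, \ldots, \ell_{r}\}$ is a subset of the starting positions of the input intervals of $B(I_{\mathsf{LF}})$ --- because $B$ only refines $I_{\mathsf{LF}}$ by splitting existing pairs --- the array $A$ has $O(r)$ entries and is built in $O(r)$ time by one linear scan over $B(I_{\mathsf{LF}})$ together with $\ell_{1}, \ldots, \ell_{r}$. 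Given the current position $j$ with input-interval index $x$, I can then decide in constant time whether $j$ is a run-start (by checking $j = p_{x}$ and $A[x] = 1$) or a run-end (by checking $j = p_{x+1} - 1$ together with either $x = |B(I_{\mathsf{LF}})|$ or $A[x+1] = 1$), and record $v$ in an $r$-element array indexed by run number.

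After the walk I have $\SA[\ell_{i} + |L_{i}| - 1]$ and $\SA[\ell_{i} + |L_{i}|]$ for every $i \in [1,r]$, with the cyclic convention $\phi^{-1}(\SA[n]) = \SA[1]$ of Lemma~\ref{lem:phi_property}(ii) covering the $i = r$ case. I would then sort the $r$ run-end SA-values into increasing order using Lemma~\ref{lem:sort} in $O(n)$ time and $O(r)$ working space (or a comparison sort in $O(r \log r)$ time as a fallback), carrying along the run-index so that the paired value $\phi^{-1}(u_{i}) = \SA[\ell_{i} + |L_{i}|]$ can be read off from the already tabulated run-start SA-values. Emitting the sorted pairs gives $I_{\mathsf{SA}}$. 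The total time is $O(n + r \log r)$ and the total working space is $O(r)$, as required.
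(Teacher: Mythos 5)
Your proposal is correct and follows essentially the same strategy as the paper's proof: a single LF-walk starting at the position with sa-value $n$ (visiting all $n$ positions in decreasing order of sa-value while tabulating sa-values at run-boundaries), followed by an $O(r\log r)$-time sort of the run-end sa-values and assembly of the pairs. The one place where you add useful detail is the $A$-array trick for detecting run boundaries in $O(1)$ per step; the paper's proof silently assumes that membership of the current position in $\mathcal{X}$ or $\mathcal{X}'$ can be tested during the walk, and your observation that $\{\ell_1,\dots,\ell_r\}$ is always a subset of the starting positions of the input intervals of $B(I_{\mathsf{LF}})$ (since splitting only refines) is exactly what makes that test constant-time. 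Two minor nits: the sentence ``$\phi^{-1}(u_i) = \SA[\ell_i+|L_i|]$'' conflates the sorted index $i$ of $u_i$ with the run index carried along (you clearly mean the run $j$ whose end has sa-value $u_i$), and you should spell out that the array indexed ``by run number'' is populated by precomputing, for each input-interval index $x$ of $B(I_{\mathsf{LF}})$, the run containing $p_x$ -- but both are straightforward from what you wrote.
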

\begin{proof}
%Let $\LF_{x}(i)$ be the position obtained by recursively applying LF function to $i$ $x$ times, 
Let $\LF_{x}(i)$ be the position obtained by recursively applying the LF function to $i$ $x$ times, 
i.e., $\LF_{0}(i) = i$ and $\LF_{x}(i) = \LF_{x-1}(\LF(i))$ for any integer $x \geq 1$. 
$\SA[\LF_{x}(p)] = n - x$ holds for any integer $x \in [0, n-1]$, 
where $p$ is the position on SA such that $\SA[p] = n$~(i.e., $p=1$). 
This is because $\SA[\LF(i)] = \SA[i] - 1$ holds unless $\SA[i] \neq 1$. 

Next, recall that $I_{\mathsf{SA}}$ consists of $r$ pairs $(u_{1}, \phi^{-1}(u_{1}))$, $(u_{2}, \phi^{-1}(u_{2}))$, $\ldots$, $(u_{r}$, $\phi^{-1}(u_{r}))$, 
%where $u_{1}, u_{2}, \ldots, u_{r}$ are the integers introduced in Section~\ref{sec:bwt}. 
where $u_{1}, u_{2}, \ldots, u_{r}$ are the integers described in Section~\ref{sec:bwt}. 
Let $x(1), x(2), \ldots, x(r)$ be distinct integers in $[0, n-1]$ such that 
$\LF_{x(i)}(p) \in \mathcal{X}$ holds for each $i \in [1, r]$, 
where $x(1) > x(2) > \cdots > x(r)$. 
Here, $\mathcal{X}$ is the set of the positions corresponding to the last characters of runs $L[1], L[2], \ldots, L[r]$, 
i.e., $\mathcal{X} = \{ \ell_{1} + |L_{1}| - 1, \ell_{2} + |L_{2}| - 1, \ldots, \ell_{r} + |L_{r}| - 1 \}$.
$u_{i} = n - x(i)$ holds, 
%because (i) $n - x(i)$ is the sa-value on position $\LF_{x(i)}(p)$, 
because (i) $n - x(i)$ is the sa-value at position $\LF_{x(i)}(p)$, 
(ii) $\LF_{x(i)}(p)$ is a position corresponding to the last character of a run, 
and (iii) $u_{1}, u_{2}, \ldots, u_{r}$ are the sa-values on the last characters of runs $L[1], L[2], \ldots, L[r]$. 

Next, let $x'(1), x'(2), \ldots, x'(r)$ be distinct integers in $[0, n-1]$ such that 
$\LF_{x'(i)}(p) \in \mathcal{X}'$ holds for each $i \in [1, r]$, 
where $x'(1) > x'(2) > \cdots > x'(r)$. 
Here, $\mathcal{X}'$ is the set of the positions corresponding to the first characters of runs $L[1], L[2], \ldots, L[r]$, 
i.e., $\mathcal{X}' = \{ \ell_{1}, \ell_{2}, \ldots, \ell_{r} \}$.
$\phi^{-1}(u_{i}) = n - x'(j)$ holds for each $i \in [1, r]$, 
%where $j$ is the integer such that $\LF_{x'(j)}(p) = \LF_{x(i)}(p) + 1$ if $\LF_{x(i)}(p) \neq n$ 
%and otherwise $\LF_{x'(j)}(p) = 1$. 
where $j$ is an integer such that $\LF_{x'(j)}(p) = \LF_{x(i)}(p) + 1$ if $\LF_{x(i)}(p) \neq n$ 
and $\LF_{x'(j)}(p) = 1$ otherwise. 
This is because (i) $\phi^{-1}$ returns the sa-value at position $x+1$ for the sa-value at a position $x$, 
(ii) $u_{1}, u_{2}, \ldots, u_{r}$ are the sa-values on the last characters of runs $L[1], L[2], \ldots, L[r]$, 
and (iii) $\phi^{-1}(u_{1}), \phi^{-1}(u_{2}), \ldots, \phi^{-1}(u_{r})$ are the sa-values at the first characters of runs $L[1], L[2], \ldots, L[r]$.
Hence, we can compute 
$u_{1}, u_{2}, \ldots, u_{r}$ and $\phi^{-1}(u_{1}), \phi^{-1}(u_{2}), \ldots, \phi^{-1}(u_{r})$ in $O(r \log r)$ time 
using a standard sorting algorithm 
after constructing two sets $U = \{ (x(1), \LF_{x(1)}(p))$, $(x(2), \LF_{x(2)}(p))$, $\ldots$, $(x(r), \LF_{x(r)}(p)) \}$ and 
$U' = \{ (x'(1), \LF_{x'(1)}(p)),$ $(x'(2), \LF_{x'(2)}(p))$, $\ldots$, $(x'(r)$, $\LF_{x'(r)}(p)) \}$. 
These two sets can be computed by using $n$ move queries on $B(I_{\mathsf{LF}})$. 

Our construction algorithm of $I_{\mathsf{SA}}$ consists of three steps: 
(i) construct two sets $U$ and $U'$ using $n$ move queries on $B(I_{\mathsf{LF}})$; 
(ii) compute $u_{1}, u_{2}, \ldots, u_{r}$ and $\phi^{-1}(u_{1})$, $\phi^{-1}(u_{2})$, $\ldots$, $\phi^{-1}(u_{r})$ in $O(r \log r)$ time 
using the two sets $U$ and $U'$; 
(iv) construct $I_{\mathsf{SA}}$ using $u_{1}, u_{2}, \ldots, u_{r}$ and $\phi^{-1}(u_{1})$, $\phi^{-1}(u_{2})$, $\ldots$, $\phi^{-1}(u_{r})$.
The construction time is $O(n + r \log r)$ in total.
\end{proof}

We construct $F(I_{\mathsf{LF}})$ in two steps: 
(i) construct $I_{\mathsf{SA}}$ by Lemma~\ref{lem:construction_I_SA}, 
and (ii) construct $F(I_{\mathsf{SA}})$ by Lemma~\ref{lem:balance_construction}. 
Hence we can construct $F(I_{\mathsf{SA}})$ in $O(n + r \log r)$ time and $O(r)$ words of working space 
by processing the RLBWT of $T$. 

\subsubsection*{Construction of $R(\sbwt)$}
Let $B(I_{\mathsf{LF}}) = (p'_{1}, q'_{1}), (p'_{2}, q'_{2}), \ldots, (p'_{r'}, q'_{r'})$. 
%We use the following two lemmas. 
We will need the following two lemmas. 
\begin{lemma}\label{lem:construction_sbwt}
We can compute $\sbwt$ in $O(r)$ time and $O(r)$ words of working space using $F(I_{\mathsf{LF}})$ for a given RLBWT of $T$.
\end{lemma}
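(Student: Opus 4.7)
The plan is to exploit the fact that the input intervals of $B(I_{\mathsf{LF}})$ refine the run partition of $L$, and then compute $\sbwt$ by a single linear merge against the run boundaries stored in the RLBWT.

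First I would establish the refinement property. By definition, $I_{\mathsf{LF}}^{\mathsf{out}}_{0} = I_{\mathsf{LF}}$, whose input intervals are exactly $[\ell_{1}, \ell_{2}-1], [\ell_{2}, \ell_{3}-1], \ldots, [\ell_{r}, n]$, i.e., the runs of $L$. Each split step changes $I_{\mathsf{LF}}^{\mathsf{out}}_{t-1}$ into $I_{\mathsf{LF}}^{\mathsf{out}}_{t}$ by replacing one pair $(p'_{j}, q'_{j})$ by two pairs $(p'_{j}, q'_{j})$ and $(p'_{j}+d, q'_{j}+d)$; on the input side this just splits one input interval into two, so the partition remains a refinement of the original run partition. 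By induction on $t$, the input intervals of $B(I_{\mathsf{LF}}) = I_{\mathsf{LF}}^{\mathsf{out}}_{\tau}$ refine the runs. Consequently, for each $x \in [1, k']$ there is a unique run index $y(x) \in [1, r]$ with $\ell_{y(x)} \le p'_{x} < \ell_{y(x)+1}$, and $L[p'_{x}] = L_{y(x)}[1]$.

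Second, I would compute $y(1), y(2), \ldots, y(k')$ by a standard linear merge. The arrays $D_{\mathsf{pair}}$ of $F(I_{\mathsf{LF}})$ give access to $p'_{1} < p'_{2} < \cdots < p'_{k'}$ in order, and the RLBWT directly stores $\ell_{1} < \ell_{2} < \cdots < \ell_{r}$ together with the run heads $L_{1}[1], \ldots, L_{r}[1]$. Initialize $y \gets 1$, and for $x = 1, 2, \ldots, k'$ advance $y$ while $\ell_{y+1} \le p'_{x}$, then set $\sbwt[x] \gets L_{y}[1]$. Since $y$ only increases and is bounded by $r$, and the outer loop runs $k' = O(r)$ times, the total work is $O(k' + r) = O(r)$. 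Storing $\sbwt$ as a length-$k'$ array uses $O(r)$ words, and no auxiliary structures beyond $O(r)$ words are required.

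The only subtle point is the refinement property; once that is established, the construction is just a two-pointer merge. Hence the lemma follows.
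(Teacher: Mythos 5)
Your proof is correct and takes essentially the same route as the paper's: define for each $x$ the run index of $p'_x$ (the paper's $s(i)$, your $y(x)$), compute these indices by a linear merge against the run boundaries, and read off the run heads from the RLBWT; the paper leaves the two-pointer merge implicit where you spell it out. One small note: the refinement property you establish (that each input interval of $B(I_{\mathsf{LF}})$ lies wholly inside a single run) is not actually needed for this lemma, since $\sbwt[x] = L[p'_x]$ is determined by the run containing the single position $p'_x$, and that follows just from $L$ being constant on runs; your argument is a bit stronger than necessary but harmless.
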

\begin{proof}
%Next, we explain a construction algorithm of $\sbwt$. 
Let $s(i)$ be the largest integer such that $\ell_{s(i)} \leq p'_{i}$ for each $i \in [1, r']$. 
Then $\sbwt[i] = L[\ell_{s(i)}]$ holds and 
$p'_{1}, p'_{2}, \ldots, p'_{r'}$ are stored in $F(I_{\mathsf{LF}})$. 
%Our construction algorithm of $\sbwt$ consists of the following two steps: 
Our algorithm for constructing $\sbwt$ consists of two steps: 
(i) compute $s(1), s(2), \ldots, s(r')$ using $F(I_{\mathsf{LF}})$ and the RLBWT of $T$; 
and (ii) construct $\sbwt$ using $s(1), s(2), \ldots, s(r')$ and the RLBWT of $T$. 
%The construction time is $O(r + r') = O(r)$ in total by $r' \leq 2r$. 
The construction time is $O(r + r') = O(r)$ in total, because $r' \leq 2r$. 
\end{proof}

\begin{lemma}\label{lem:construction_rank_select_data_structure}
The rank-select data structure for $\sbwt$ can be constructed in $O(n + r (\log \log r)^{2})$ time and $O(r)$ words of working space by processing $\sbwt$. 
\end{lemma}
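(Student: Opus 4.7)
The plan is to build the three components of $R(\sbwt)$ described in Appendix C.1---the select arrays $R_{\mathsf{select}} = H_{1}, \ldots, H_{\sigma'}$, the deterministic dictionary $R_{\mathsf{map}}$ storing $\gamma$, and the rank data structure $R_{\mathsf{rank}}$---in sequence. Recall that $|\sbwt| \leq |B(I_{\mathsf{LF}})| = O(r)$ by Lemma~\ref{lem:balanced_sequence2}(i), so every component will fit in $O(r)$ words.

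First I would encode each position $i \in [1, |\sbwt|]$ as a single integer whose upper $\log \sigma$ bits hold $\sbwt[i]$ and whose lower $\log |\sbwt|$ bits hold $i$, mirroring the trick used in the proof of Lemma~\ref{lem:construction_I_LF}. Because $\sigma = n^{O(1)}$, Lemma~\ref{lem:sort} sorts these $O(r)$ integers in $O(n)$ time and $O(r)$ working space, grouping positions by character and, within each character group, listing them in increasing order. A single scan of the sorted list then extracts simultaneously (a) the distinct characters in increasing order $c_{1} < c_{2} < \cdots < c_{\sigma'}$ with $\sigma' \leq |\sbwt|$, (b) the frequencies $|\Occ(\sbwt, c_{j})|$, and (c) the arrays $H_{1}, \ldots, H_{\sigma'}$ that constitute $R_{\mathsf{select}}$. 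This phase costs $O(n)$ time overall and produces the $O(n)$ term in the final bound.

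Second, I would build $R_{\mathsf{map}}$ by feeding the $\sigma' = O(r)$ key-value pairs $(c_{j}, j)$ into Ruzic's deterministic dictionary~\cite{DBLP:conf/icalp/Ruzic08}, which requires $O(r(\log\log r)^{2})$ time and $O(r)$ words. Third, I would invoke the Belazzougui--Navarro rank construction~\cite{DBLP:journals/talg/BelazzouguiN15} on the length-$O(r)$ string $\sbwt$ to obtain $R_{\mathsf{rank}}$; the resulting structure answers rank queries in $O(\log\log_{w} \sigma)$ time within $O(|\sbwt|) = O(r)$ words, and its construction falls within the same $O(r(\log\log r)^{2})$ budget. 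Summing the three contributions yields the claimed $O(n + r(\log\log r)^{2})$ total.

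The main obstacle is verifying that the two cited black boxes---Ruzic's deterministic dictionary and the Belazzougui--Navarro rank structure---can indeed be constructed within $O(r(\log\log r)^{2})$ time and $O(r)$ words of working space on inputs of size $O(r)$. Once these guarantees are granted, the remainder of the argument is simply the sort-and-scan orchestration above, with the $O(n)$ additive overhead arising solely from the radix sort of Lemma~\ref{lem:sort} applied to the potentially large alphabet $\Sigma$.
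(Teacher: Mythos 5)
Your proof is correct and follows the same overall strategy as the paper (use Lemma~\ref{lem:sort} for the $O(n)$-time sort, then cite Ruzic for $R_{\mathsf{map}}$ and Belazzougui--Navarro for $R_{\mathsf{rank}}$ on an input of size $O(r)$), but it differs in one detail worth noting: how $R_{\mathsf{select}}$ is produced. The paper sorts only the characters of $\sbwt$ to obtain the sorted distinct characters, builds $R_{\mathsf{map}}$, and then constructs $R_{\mathsf{select}}$ in a separate $O(r')$-time pass that repeatedly queries $\gamma(\sbwt, \sbwt[i])$ via $R_{\mathsf{map}}$ and appends $i$ to $H_{c'}$. You instead sort the packed pairs $(\sbwt[i], i)$ as single integers and then read off all three artifacts (the sorted distinct characters, their frequencies, and the $H_j$ arrays) in one scan, so $R_{\mathsf{select}}$ comes for free from the sort and never depends on $R_{\mathsf{map}}$. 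Your ordering also avoids the paper's reliance on $R_{\mathsf{map}}$ during the construction of $R_{\mathsf{select}}$, which is a small but genuine simplification of the dependency chain; the paper's ordering, on the other hand, keeps the sort cheaper (sorting $r'$ characters rather than $r'$ packed pairs, though both fit within Lemma~\ref{lem:sort}'s bound since the packed values are still $n^{O(1)}$). Both routes yield the claimed $O(n + r(\log\log r)^2)$ time and $O(r)$ words.
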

\begin{proof}
Recall that the rank-select data structure consists of $R_{\mathsf{rank}}$, $R_{\mathsf{select}}$, and $R_{\mathsf{map}}$. 
%First, we construct the deterministic dictionary $R_{\mathsf{map}}$ by the following two steps: 
First, we will construct the deterministic dictionary $R_{\mathsf{map}}$ in two steps: 
(i) sort the characters of $\sbwt$ by Lemma~\ref{lem:sort}; 
and (ii) construct $R_{\mathsf{map}}$ from the sorted characters. 
The second step takes $O(r' (\log \log r')^{2})$ time and $O(r')$ word of working space~\cite{DBLP:conf/icalp/Ruzic08}, 
and hence we can construct $R_{\mathsf{map}}$ in $O(n + r' (\log \log r')^{2})$ time by processing $\sbwt$.

Second, we construct $R_{\mathsf{rank}}$ for $\sbwt$ by processing $\sbwt$. 
The construction takes $O(r')$ time and words of working space~\cite{DBLP:journals/talg/BelazzouguiN15}. 
Third, we construct $R_{\mathsf{select}}$. 
Recall that $R_{\mathsf{select}}$ consists of $\sigma'$ arrays $H_{1}, H_{2}, \ldots, H_{\sigma'}$. 
Our construction algorithm of $R_{\mathsf{select}}$ consists of $r'$ steps. 
We create $\sigma'$ empty arrays $H_{1}, H_{2}, \ldots, H_{\sigma'}$ before the construction algorithm runs. 
At the $i$-th step, we compute $c' = \gamma(\sbwt, \sbwt[i])$ using $R_{\mathsf{map}}$ 
and push integer $i$ into $H_{c'}$. 
The construction algorithm takes $O(r')$ time. 
Hence, we can construct $R_{\mathsf{rank}}$, $R_{\mathsf{select}}$, and 
$R_{\mathsf{map}}$ in $O(n + r' (\log \log r')^{2})$ time and $O(r')$ words of working space. 
%Finally, we obtain Lemma~\ref{lem:construction_rank_select_data_structure} by $r' \leq 2r$. 
Finally, Lemma~\ref{lem:construction_rank_select_data_structure} follows from $r' \leq 2r$. 
\end{proof}

%We construct $R(\sbwt)$ by the following three steps: 
We construct $R(\sbwt)$ in three steps: 
(i) construct $F(I_{\mathsf{LF}})$ by processing the RLBWT of $T$, 
(ii) construct $\sbwt$ by Lemma~\ref{lem:construction_sbwt}, 
and (ii) construct $R(\sbwt)$ by Lemma~\ref{lem:construction_rank_select_data_structure}. 
Hence, we can construct $R(\sbwt)$ in $O(n + r \log r)$ time and $O(r)$ words of working space 
by processing the RLBWT of $T$.

\subsubsection*{Construction of $\SA^{+}$ and $\SA^{+}_{\mathsf{index}}$}
%We use the following lemma. 
We will need the following lemma. 
\begin{lemma}\label{lem:construction_sa_index}
We can construct $\SA^{+}$ and $\SA^{+}_{\mathsf{index}}$ in $O(n + r \log r)$ time and $O(r)$ words of working space 
using $F(I_{\mathsf{LF}})$ and $F(I_{\mathsf{SA}})$.
\end{lemma}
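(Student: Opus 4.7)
The plan is to exploit the fact that applying $\LF$ repeatedly starting from the position with sa-value $n$ enumerates all positions of $\SA$ in decreasing order of sa-value. Since $\SA[1] = n$ and $\SA[\LF(i)] = \SA[i] - 1$ whenever $\SA[i] \neq 1$, the orbit $1, \LF(1), \LF_{2}(1), \ldots, \LF_{n-1}(1)$ visits every position of $[1, n]$ exactly once, and after $t$ applications the current position carries sa-value $n - t$. Let $B(I_{\mathsf{LF}}) = (p'_{1}, q'_{1}), \ldots, (p'_{r'}, q'_{r'})$ and $B(I_{\mathsf{SA}}) = (\hat{p}_{1}, \hat{q}_{1}), \ldots, (\hat{p}_{r''}, \hat{q}_{r''})$. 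I will walk through this orbit using $F(I_{\mathsf{LF}})$, maintaining at every step both the current position $q$ and the index $x$ of the input interval of $B(I_{\mathsf{LF}})$ containing $q$; by Theorem~\ref{theo:lf_inv_theorem} each step costs constant time and returns both pieces of information, so the whole walk runs in $O(n)$ time.

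To populate $\SA^{+}$, after each move I test whether the new position $q$ equals $p'_{x}$; if so, $\SA^{+}[x]$ equals the current sa-value by definition, so I write it into the array. Because $\LF$ is a bijection on $[1, n]$, the traversal visits every $p'_{x}$ exactly once, so all $r'$ entries of $\SA^{+}$ are populated. This phase takes $O(n)$ time and $O(r)$ words of working space.

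For $\SA^{+}_{\mathsf{index}}$, I need to identify, for each $x$, the index $y$ of the input interval of $B(I_{\mathsf{SA}})$ containing $\SA^{+}[x]$. I form the sequence of pairs $(\SA^{+}[x], x)$ for $x \in [1, r']$ and sort it in increasing order of the first coordinate via Lemma~\ref{lem:sort}, which runs in $O(n)$ time because each sa-value is polynomial in $n$. The starting positions $\hat{p}_{1} < \hat{p}_{2} < \cdots < \hat{p}_{r''}$ are already sorted and can be read out of $F(I_{\mathsf{SA}})$ in $O(r)$ time. A single synchronized merge of the two sorted lists then determines, for each pair $(\SA^{+}[x], x)$, the largest $y$ with $\hat{p}_{y} \leq \SA^{+}[x]$, which is precisely $\SA^{+}_{\mathsf{index}}[x]$. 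The merge runs in $O(r)$ time, so the overall construction fits within $O(n + r \log r)$ time and $O(r)$ words of working space.

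The main obstacle is the second phase: since $F(I_{\mathsf{SA}})$ is indexed by input-interval position rather than by arbitrary sa-value, there is no direct way to look up the input interval containing a given sa-value in constant time without extra data structures. Batching all $r'$ lookups, sorting them by sa-value, and performing a single synchronized merge with the sorted starting positions of $B(I_{\mathsf{SA}})$ amortizes the predecessor work across all queries and keeps everything within the target budget.
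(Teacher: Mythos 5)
Your construction of $\SA^{+}$ by walking the LF orbit from the position with sa-value $n$ and recording the sa-value each time the walk lands on an input-interval boundary $p'_{x}$ of $B(I_{\mathsf{LF}})$ is exactly the paper's approach. For $\SA^{+}_{\mathsf{index}}$, however, you take a slightly different route: the paper simply performs a binary search over the sorted starting positions $p'_{1} < \cdots < p'_{k'}$ of $B(I_{\mathsf{SA}})$ independently for each of the $O(r)$ entries of $\SA^{+}$, which costs $O(r \log r)$ total and fits the stated budget without any sorting. You instead batch the lookups, sort the pairs $(\SA^{+}[x], x)$ via Lemma~\ref{lem:sort} in $O(n)$ time (packing the pair into a single $n^{O(1)}$ integer, as the paper itself does in the proof of Lemma~\ref{lem:construction_I_LF}), and finish with a single $O(r)$-time synchronized merge against the already-sorted $\hat{p}_{1}, \ldots, \hat{p}_{r''}$. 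Both are correct and both respect the $O(n + r\log r)$ time and $O(r)$ space bounds; your batched merge trades $O(r \log r)$ search time for an $O(n)$ sort and would in principle dominate in regimes where $r \log r \gg n$, though the lemma's statement already allows the $O(n)$ term so neither approach is strictly better here. The paper's binary-search version is a bit more self-contained since it does not need to re-invoke the radix-sort machinery in this step. One small edge case worth stating explicitly: your walk should record $\SA^{+}[1] = n$ at step zero, before any move query is issued, since $p'_{1} = 1$ and the starting position of the orbit is already position $1$.
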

\begin{proof}
Let $B(I_{\mathsf{LF}}) = (p_{1}, q_{1}), (p_{2}, q_{2}), \ldots, (p_{k}, q_{k})$ and 
$p$ be the integer satisfying $\SA[p] = n$~(i.e., $p = 1$). 
We introduce $k$ integers $\mu(1), \mu(2), \ldots, \mu(k)$. 
Here, $\mu(i)$ is the integer satisfying $\LF_{\mu(i)}(p) = p_{i}$ for $i \in [1, k]$. 
%$\SA[p_{i}] = n - \mu(i)$ holds, and hence $\SA^{+}[i] = n - \mu(i)$ holds. 
$\SA[p_{i}] = n - \mu(i)$; hence, $\SA^{+}[i] = n - \mu(i)$.
The $k$ integers $p_{1}, p_{2}, \ldots, p_{k}$ are stored in $F(I_{\mathsf{LF}})$, 
and we can compute $\LF_{1}(p), \LF_{2}(p), \ldots, \LF_{n-1}(p)$ in left-to-right order using $F(I_{\mathsf{LF}})$. 
Hence $\SA^{+}$ can be constructed in $O(n)$ time. 

Next, 
let $B(I_{\mathsf{SA}}) = (p'_{1}, q'_{1}), (p'_{2}, q'_{2}), \ldots, (p'_{k'}, q'_{k'})$. 
Recall that $\SA^{+}_{\mathsf{index}}[i]$ stores the index $j$ of the input interval of $B(I_{\mathsf{SA}})$ containing $\SA^{+}[i]$ for each $i \in [1, k']$~(i.e., 
$j$ is the integer satisfying $p'_{j} \leq \SA^{+}[i] < p'_{j+1}$).
We find $\SA^{+}_{\mathsf{index}}[i]$ by performing a binary search on $p'_{1}, p'_{2}, \ldots, p'_{k'}$. 
%The $k'$ integers $p'_{1}, p'_{2}, \ldots, p'_{k'}$ are stored in $F(I_{\mathsf{SA}})$, 
%and hence we can construct $\SA^{+}_{\mathsf{index}}[i]$ in $O(r \log r)$ after constructing $\SA^{+}$. 
The $k'$ integers $p'_{1}, p'_{2}, \ldots, p'_{k'}$ are stored in $F(I_{\mathsf{SA}})$; 
hence, we can construct $\SA^{+}_{\mathsf{index}}[i]$ in $O(r \log r)$ after constructing $\SA^{+}$. 
\end{proof}

We construct $\SA^{+}$ and $\SA^{+}_{\mathsf{index}}$ in two steps: 
(i) construct $F(I_{\mathsf{LF}})$ and $F(I_{\mathsf{SA}})$ by processing the RLBWT of $T$, 
and (ii) construct $\SA^{+}$ and $\SA^{+}_{\mathsf{index}}$ by Lemma~\ref{lem:construction_sa_index}. 
Hence we can construct $\SA^{+}$ and $\SA^{+}_{\mathsf{index}}$ in $O(n + r \log r)$ time and $O(r)$ words of working space 
by processing the RLBWT of $T$.

\section*{Appendix E}
\subsection*{E.1: Proof of Theorem~\ref{theo:bookmark}}
%We consider two cases for Theorem~\ref{theo:bookmark}: (a) $r < \frac{n}{\log n}$; (b) $r \geq \frac{n}{\log n}$. 
We will consider two cases for Theorem~\ref{theo:bookmark}: (a) $r < \frac{n}{\log n}$; (b) $r \geq \frac{n}{\log n}$. 
\subsection*{Case (a)}
\emph{Data structure}. 
In this case, we use a data structure $E_{\mathsf{first}}$ supporting extract queries. 
%We introduce five symbols $I_{\mathsf{FL}}$, $F'$, $\flbwt$, $V$, and $G$ to explain the data structure $E_{\mathsf{first}}$. 
We will use five symbols $I_{\mathsf{FL}}$, $F'$, $\flbwt$, $V$, and $G$ to explain the data structure $E_{\mathsf{first}}$. 
Recall that $\ell_{i}$ is the starting position of the $i$-th run of BWT $L$, 
%and $\delta$ is the permutation of $[1,r]$ introduced in Section~\ref{sec:bwt}. 
and $\delta$ is the permutation of $[1,r]$ described in Section~\ref{sec:bwt}. 
The first symbol $I_{\mathsf{FL}}$ is a sequence of $r$ pairs $(\LF(\ell_{\delta[1]}), \ell_{\delta[1]})$, $(\LF(\ell_{\delta[2]}), \ell_{\delta[2]})$, $\ldots$, $(\LF(\ell_{\delta[r]}), \ell_{\delta[r]})$. 
%LF is a bijective function and FL is the inverse function of LF. 
LF is a bijection and FL is the inverse of LF. 
Hence, $I_{\mathsf{FL}}$ is a disjoint interval sequence. 
Let $f_{\mathsf{FL}}$ be the bijective function represented by the disjoint interval sequence $I_{\mathsf{FL}}$. 
Then, $f_{\mathsf{FL}}$ and $\FL$ are the same function. 

Next, $F'$ is a sequence of strings $F[\LF(\ell_{\delta[1]})..\LF(\ell_{\delta[2]})-1]$, $F[\LF(\ell_{\delta[2]})..\LF(\ell_{\delta[3]})-1]$, $\ldots$, 
$F[\LF(\ell_{\delta[r]})..n]$. 
%FL function maps each string of $F'$ into the corresponding run on $L$, 
The FL function maps each string of $F'$ into the corresponding run of $L$, 
and hence, each string of $F'$ is a repetition of a character. 

The third symbol $\flbwt$ is the string satisfying $\flbwt = F[\LF(\ell_{\delta[1]})], F[\LF(\ell_{\delta[2]})], \ldots, F[\LF(\ell_{\delta[r]})]$. 
Let $v(i)$ be the index of the input interval of $I_{\mathsf{FL}}$ containing a position $i \in [1, n]$. 
The $v(i)$-th input interval corresponds to the $v(i)$-th string of $F'$, 
and the $v(i)$-th string of $F'$ is a repetition of the $v(i)$-th character of $\flbwt$. 
Hence $F[i] = \flbwt[v(i)]$ holds. 

Let $B(I_{\mathsf{FL}}) = (p'_{1}, q'_{1})$, $(p'_{2}, q'_{2})$, $\ldots$, $(p'_{k'}, q'_{k'})$ 
be the balanced interval sequence of $I_{\mathsf{FL}}$.  
The fourth symbol $V$ is an array of size $|B(I_{\mathsf{FL}})|$. 
$V[j]$ stores the index of the input interval of $I_{\mathsf{FL}}$ containing position $p'_{j}$ for each $j \in [1, |B(I_{\mathsf{FL}})|]$. 
Let $v'(i)$ be the index of the input interval of $B(I_{\mathsf{FL}})$ containing a position $i \in [1, n]$. 
Each $j$-th input interval of $B(I_{\mathsf{FL}})$ is contained in the $V[j]$-th input interval of $I_{\mathsf{FL}}$, 
and the input interval of $I_{\mathsf{FL}}$ corresponds to the $V[j]$-th string of $F'$, which is a repetition of the character $\flbwt[V[j]]$. 
Hence $F[i] = \flbwt[V[v'(i)]]$ holds. 

The fifth symbol $G$ is an array of size $b$. 
Recall that $h_{j}$ is the position with sa-value $i_{j}$ in SA~(i.e., $\SA[h_{j}] = i_{j}$). 
$G[j]$ stores a pair of integers $h_{j}$ and $v'(h_{j})$ for each $j \in [1, b]$. 
Our data structure $E_{\mathsf{first}}$ consists of $\flbwt$, $F(I_{\FL})$, $V$, and $G$. 
The space usage is $O(|B(I_{\mathsf{FL}})| + b) = O(r + b)$ words, because $|B(I_{\mathsf{FL}})| \leq 2r$. 

%We explain an algorithm solving extract query returning substring $T[i_{j}..i_{j} + d - 1]$ using $E_{\mathsf{first}}$. 
Now let us explain an algorithm for solving extract queries that uses $E_{\mathsf{first}}$ to return a substring $T[i_{j}..i_{j} + d - 1]$. 
Recall that $T[i_{j}..i_{j}+d-1] = F[\FL_{0}(h_{j})], F[\FL_{1}(h_{j})], \ldots, F[\FL_{d-1}(h_{j})]$ holds. 
On the other hand, 
$\movef(B(I_{\mathsf{FL}}), \FL_{0}(h_{j})$, $v'(\FL_{0}(h_{j}))) = (\FL_{1}(h_{j}), v'(\FL_{1}(h_{j})) )$, 
$\movef(B(I_{\mathsf{FL}})$, $\FL_{1}(h_{j})$, $v'(\FL_{1}(h_{j})) ) = (\FL_{2}(h_{j}), v'(\FL_{2}(h_{j})) )$, $\ldots$, 
$\movef(B(I_{\mathsf{FL}}), \FL_{d-2}(h_{j})$, $v'(\FL_{d-2}(h_{j})) ) = (\FL_{d-1}(h_{j}), v'(\FL_{d-1}(h_{j})) )$ 
hold. 
The first two integers $\FL_{0}(h_{j}) = h_{j}$ and $v'(\FL_{0}(h_{j})) = v'(h_{j})$ are stored in $G[j]$, 
and $F[\FL_{t}(h_{j})]$ is equal to $\flbwt[V[v'(\FL_{t}(h_{j})) ]]$. 

Our algorithm consists of $d$ steps. 
At the $t$-th step~($t \in [1, d]$), 
the algorithm returns character $F[\FL_{t-1}(h_{j})]$ as the $t$-th character of the output string $T[i_{j}..i_{j} + d - 1]$,  
and it executes the move query $\movef(B(I_{\mathsf{FL}}), \FL_{t-1}(h_{j}), v'(\FL_{t-1}(h_{j})) )$ for the next step. 
Hence, we can solve an extract query in constant time per character using $E_{\mathsf{first}}$. 

\emph{Construction of $E_{\mathsf{first}}$}. 
%Next, we explain a construction algorithm of $E_{\mathsf{first}}$. 
%We leverage the following two lemmas.  
Next, we explain an algorithm for constructing $E_{\mathsf{first}}$. 
We will leverage the following two lemmas.
\begin{lemma}\label{lem:construction_FL}
We can construct $I_{\mathsf{FL}}$ and $\flbwt$ in $O(n)$ time and $O(r)$ words of working space by processing the RLBWT of $T$.
\end{lemma}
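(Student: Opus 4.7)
The plan is to mirror the construction of $I_{\mathsf{LF}}$ from Lemma~\ref{lem:construction_I_LF}. The two sequences $I_{\mathsf{LF}}$ and $I_{\mathsf{FL}}$ carry exactly the same information: both record, for each $i \in [1, r]$, the pair $\{\ell_{i}, \LF(\ell_{i})\}$; the differences are that $I_{\mathsf{FL}}$ swaps the two components of each pair and lists them in increasing order of $\LF(\ell_{i})$ rather than in increasing order of $\ell_{i}$. Hence the same sorting and running-LF trick works, and all that remains is to recover the characters of $\flbwt$.

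Concretely, I would proceed in four steps. First, scan the RLBWT left to right and collect, for each run $i \in [1, r]$, the triple $(L_{i}[1], \ell_{i}, |L_{i}|)$ in $O(r)$ time. Second, compute the permutation $\delta$ by sorting these triples in lexicographic order of $(L_{i}[1], i)$: following the proof of Lemma~\ref{lem:construction_I_LF}, encode each triple as a $(\log \sigma + \log r)$-bit integer $b_{i}$ whose high bits are $L_{i}[1]$ and whose low bits are $i$, and sort using Lemma~\ref{lem:sort} in $O(n)$ time and $O(r)$ words. Third, compute $\LF(\ell_{\delta[1]}), \ldots, \LF(\ell_{\delta[r]})$ incrementally via Lemma~\ref{lem:LF_property}(ii): set $\LF(\ell_{\delta[1]}) = 1$ and $\LF(\ell_{\delta[i]}) = \LF(\ell_{\delta[i-1]}) + |L_{\delta[i-1]}|$ for $i \geq 2$. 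Fourth, emit the pair $(\LF(\ell_{\delta[i]}), \ell_{\delta[i]})$ for each $i \in [1, r]$, yielding $I_{\mathsf{FL}}$.

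For $\flbwt$, recall that $F[j] = T[\SA[j]]$ and $L[i] = T[\SA[\LF(i)]]$, so substituting $j = \LF(i)$ gives the identity $F[\LF(i)] = L[i]$. Therefore $\flbwt[i] = F[\LF(\ell_{\delta[i]})] = L[\ell_{\delta[i]}] = L_{\delta[i]}[1]$, which is simply the first character of the $\delta[i]$-th run, already collected in step one. Emitting these characters in the sorted order produced by step two builds $\flbwt$ in $O(r)$ additional time and words.

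There is no substantive obstacle: the entire procedure inherits its complexity from Lemma~\ref{lem:sort}, which dominates at $O(n)$ time and $O(r)$ words, together with $O(r)$-time linear passes for extraction, prefix-sum computation, and assembly. The only observation beyond the proof of Lemma~\ref{lem:construction_I_LF} is the identity $F[\LF(i)] = L[i]$, which collapses the computation of $\flbwt$ into a trivial re-indexing of the run characters via the permutation $\delta$.
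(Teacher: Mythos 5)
Your proof is correct and follows essentially the same route as the paper's: compute the permutation $\delta$ via a linear-time sort (Lemma~\ref{lem:sort}), recover $\LF(\ell_{\delta[1]}), \ldots, \LF(\ell_{\delta[r]})$ incrementally from Lemma~\ref{lem:LF_property}(ii), emit the pairs in $\delta$-order, and observe that $\flbwt[i]$ is simply the first character $L_{\delta[i]}[1]$ of the $\delta[i]$-th run. The one cosmetic difference is that the paper first constructs $I_{\mathsf{LF}}$ as a black box via Lemma~\ref{lem:construction_I_LF} and then performs a second sort on the $\LF(\ell_i)$ values to rebuild $\delta$, whereas you inline the construction and extract $\delta$ directly from the single sort on $(L_i[1], i)$, and you also spell out the identity $F[\LF(i)] = L[i]$ that the paper leaves implicit when claiming $\flbwt[i] = L_{\delta[i]}[1]$.
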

\begin{proof}
$I_{\mathsf{FL}}$ consists of $r$ pairs 
$(\LF(\ell_{\delta[1]}), \ell_{\delta[1]})$, $(\LF(\ell_{\delta[2]}), \ell_{\delta[2]})$, $\ldots$, $(\LF(\ell_{\delta[r]}), \ell_{\delta[r]})$, 
and $I_{\mathsf{LF}}$ consists of $r$ pairs 
$(\ell_{1}, \LF(\ell_{1}))$, $(\ell_{2}, \LF(\ell_{2}))$, $\ldots$, $(\ell_{r}, \LF(\ell_{r}))$. 
We construct $I_{\mathsf{FL}}$ in four steps: 
(i) construct $I_{\mathsf{LF}}$ by Lemma~\ref{lem:construction_I_LF}; 
(ii) sorting $\LF(\ell_{1})$, $\LF(\ell_{2})$, $\ldots$, $\LF(\ell_{r})$ by Lemma~\ref{lem:sort}; 
(iii) construct the permutation $\delta$ using the sorted $r$ integers $\LF(\ell_{1})$, $\LF(\ell_{2})$, $\ldots$, $\LF(\ell_{r})$; 
and (iv) construct $I_{\mathsf{FL}}$ using $I_{\mathsf{LF}}$ and the permutation $\delta$. 
The construction algorithm takes $O(n)$ time. 

Next, $\flbwt[i]$ is the first character of the $\delta[i]$-th run of BWT $L$, 
and the first characters of runs in BWT $L$ are stored in the RLBWT of $T$. 
Hence, we can construct $\flbwt$ in $O(r)$ time after constructing $\delta$.

\end{proof}

\begin{lemma}\label{lem:construction_G}
We can construct $G$ in $O(n)$ time and $O(r)$ words of working space using $F(I_{\mathsf{FL}})$ and $b$ marked positions $i_{1}, i_{2}, \ldots, i_{b}$.
\end{lemma}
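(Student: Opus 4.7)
The plan is to compute all pairs $(h_j, v'(h_j))$ for $j = 1, \ldots, b$ by a single left-to-right walk through the suffix array ordered by sa-value, driven by the FL function. Recall that $\FL$ is the inverse of $\LF$, so $\SA[\FL(i)] = \SA[i] + 1$ whenever $\SA[i] < n$, and $\SA[\FL(i)] = 1$ when $\SA[i] = n$. Since $T$ ends with the unique smallest character $\$$, position $p = 1$ satisfies $\SA[1] = n$. Hence, starting at $p = 1$ and iteratively applying $\FL$ visits the positions with sa-values $1, 2, \ldots, n-1$ in this exact order; together with the initial position itself, every sa-value in $[1,n]$ is enumerated.

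Using $F(I_{\mathsf{FL}})$, each $\FL$ step is realized by a single move query, which by Theorem~\ref{theo:lf_inv_theorem} returns in constant time both the next position and the index of the input interval of $B(I_{\mathsf{FL}})$ containing it. This is precisely the pair $(\FL^{k}(1), v'(\FL^{k}(1)))$ we need. The construction algorithm therefore proceeds as follows. First, compute $v'(1)$; because $p_1 = 1$ and $B(I_{\mathsf{FL}})$ is a disjoint interval sequence, $v'(1) = 1$. If $i_1 = n$, record $G[1] = (1, 1)$ and increment the output pointer. Then, maintaining a current pair $(p, x)$ initialized to $(1, 1)$ and a pointer $j$ into the sorted list of marked positions, iterate $k = 1, 2, \ldots, n-1$: at each step set $(p, x) \leftarrow \movef(B(I_{\mathsf{FL}}), p, x)$, and if $k = i_j$, set $G[j] = (p, x)$ and advance $j$. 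Because the marked positions are already sorted, a single pointer merge with the walk suffices.

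The total running time is $O(n)$: the walk performs $n-1$ move queries at $O(1)$ each, and writing each of the $b \leq n$ entries of $G$ is $O(1)$. Apart from the input structures (the RLBWT of $T$ and $F(I_{\mathsf{FL}})$, which take $O(r)$ words) and the output array $G$, the algorithm uses only a constant number of auxiliary words for the counters $k$, $j$ and the current pair $(p, x)$, so the working space is $O(r)$.

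The only mildly subtle point is making sure that the second coordinate $v'(\cdot)$ is tracked at $O(1)$ amortized cost per step; this would be the main obstacle if we tried to recover it by a lookup after the fact, since a binary search in $B(I_{\mathsf{FL}})$ would add an $O(\log r)$ factor. The move data structure avoids this cleanly by carrying the interval index forward with the position, so no auxiliary index into $B(I_{\mathsf{FL}})$ is ever rebuilt. This completes the proof sketch.
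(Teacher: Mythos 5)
Your approach matches the paper's: walk the $\FL$-cycle starting from position $p=1$ (which has sa-value $n$), carry the pair $(\text{position},\ \text{input-interval index})$ forward via move queries on $B(I_{\mathsf{FL}})$, and record the pair whenever the current sa-value equals a marked position. The paper's proof is terser — it just observes that $G[j] = (\FL_{i_j}(p),\, v'(\FL_{i_j}(p)))$ and that the $n$ pairs $(\FL_1(p), v'(\FL_1(p))),\ldots,(\FL_n(p), v'(\FL_n(p)))$ can be computed by $n$ move queries — but the substance is identical, and your observation about why carrying the interval index forward matters (avoiding a per-step $O(\log r)$ lookup) is exactly the point of the move data structure.

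One small bookkeeping error: you write ``If $i_1 = n$, record $G[1] = (1,1)$,'' but since the marked positions are sorted increasingly with $i_1 < i_2 < \cdots < i_b$, the only index that can equal $n$ is $i_b$, not $i_1$. Your loop over $k = 1,\ldots,n-1$ covers sa-values $1$ through $n-1$, so if $i_b = n$ the entry $G[b]$ is never filled. The fix is to check $i_b = n$ and set $G[b] = (1, v'(1))$ before (or after) the walk, or equivalently to run $k$ up to $n$ and use $\FL_n(p) = p$ as the paper implicitly does. Either way the $O(n)$ time and $O(r)$ working-space bounds are unaffected.
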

\begin{proof}
Let $p$ be the integer satisfying $\SA[p] = n$~(i.e., $p = 1$). 
Then $G[j] = (\FL_{x}(p), v'(\FL_{x}(p)))$ holds for any $j \in [1, b]$, 
where $x = i_{j}$.
We can compute $n$ pairs $(\FL_{1}(p), v'(\FL_{1}(p)))$, $(\FL_{2}(p), v'(\FL_{2}(p)))$, $\ldots$, $(\FL_{n}(p), v'(\FL_{n}(p)))$ 
using $n$ move queries on $B(I_{\mathsf{FL}})$. 
Hence we can construct $G$ in $O(n)$ time and $O(r)$ words of working space. 

\end{proof}
We construct $E_{\mathsf{first}}$ in four steps. 
(i) compute $I_{\mathsf{FL}}$ and $\flbwt$ by Lemma~\ref{lem:construction_FL}; 
(ii) construct $F(I_{\mathsf{FL}})$ by Lemma~\ref{lem:balance_construction}; 
(iii) construct $V$ by processing $I_{\mathsf{FL}}$ and $B(I_{\mathsf{FL}})$; and 
(v) construct $G$ by Lemma~\ref{lem:construction_G}. 
Hence the construction time is $O(n + r\log r) = O(n)$ in total by $r < \frac{n}{\log n}$. 

\subsection*{Case (b)}
\emph{Data structure}. 
In this case, we use a data structure $E_{\mathsf{sec}}$ supporting extract queries. 
%We introduce three symbols $G'$, $S'$, and $R_{\mathsf{pred}}$ to explain the data structure $E_{\mathsf{sec}}$. 
We will use three symbols $G'$, $S'$, and $R_{\mathsf{pred}}$ to explain $E_{\mathsf{sec}}$. 
$G'$ is an array of size $b$, 
and $G'[j]$ stores a pair of integers $h_{j}$ and $v(h_{j})$ for each $j \in [1, b]$. 
$S'$ is the set $\{ \LF(\ell_{\delta[1]}), \LF(\ell_{\delta[2]}), \ldots, \LF(\ell_{\delta[r]}) \}$. 

%$R_{\mathsf{pred}}(S)$ is the data structure supporting predecessor queries on a given set $S \subseteq [1, n]$ of 
$R_{\mathsf{pred}}(S)$ is a data structure supporting predecessor queries on a given set $S \subseteq [1, n]$ of 
integers. It was proposed by Belazzougui and Navarro \cite{DBLP:journals/talg/BelazzouguiN15}. 
The data structure requires $O(|S|)$ words and supports a predecessor query in $O(\log \log_{w} (n/|S|))$ time. 
Our data structure $E_{\mathsf{sec}}$ consists of $\flbwt$, $I_{\FL}$, $R_{\mathsf{pred}}(S')$, and $G'$. 
The space usage is $O(r + b)$ words. 
At this point, we can prove the following lemma.  
%We show the following lemma. 
\begin{lemma}\label{lem:predecessor}
We can compute a pair $(\FL(i), v(\FL(i)))$ in $O(\log \log_{w} (n/r))$ time 
by using $I_{\mathsf{FL}}$, $R_{\mathsf{pred}}(S')$, and $v(i)$ for a given integer $i \in [1, n]$. 
\end{lemma}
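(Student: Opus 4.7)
The plan is to exploit the structural fact that the input intervals of $I_{\mathsf{FL}}$ begin precisely at the elements of $S'$, so that locating an input interval reduces to a predecessor query on $S'$.

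First, I would compute $\FL(i)$ directly from $I_{\mathsf{FL}}$ in constant time. By definition, $v(i)$ is the index $t$ such that $i$ lies in the $t$-th input interval of $I_{\mathsf{FL}}$, i.e.\ $\LF(\ell_{\delta[t]}) \le i < \LF(\ell_{\delta[t+1]})$. The $t$-th pair of $I_{\mathsf{FL}}$ is $(\LF(\ell_{\delta[t]}),\ell_{\delta[t]})$, and since $f_{I_{\mathsf{FL}}} = \FL$, we have
\[
\FL(i) \;=\; \ell_{\delta[t]} + \bigl(i - \LF(\ell_{\delta[t]})\bigr).
\]
Reading $\LF(\ell_{\delta[t]})$ and $\ell_{\delta[t]}$ from $I_{\mathsf{FL}}[t]$ and performing one addition and one subtraction gives $\FL(i)$ in $O(1)$ time.

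Second, I would obtain $v(\FL(i))$ with a single predecessor query on $S'$. The input intervals of $I_{\mathsf{FL}}$ are the disjoint ranges $[\LF(\ell_{\delta[s]}),\LF(\ell_{\delta[s+1]})-1]$ for $s \in [1,r]$, whose left endpoints are exactly the elements of $S' = \{\LF(\ell_{\delta[1]}),\ldots,\LF(\ell_{\delta[r]})\}$. Therefore $v(\FL(i))$ is the rank in $S'$ of the predecessor of $\FL(i)$. Using $R_{\mathsf{pred}}(S')$, this predecessor (together with its rank, which the data structure of Belazzougui and Navarro provides) can be computed in $O(\log\log_w(n/|S'|)) = O(\log\log_w(n/r))$ time, since $|S'| = r$. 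Returning the pair $(\FL(i),v(\FL(i)))$ then takes $O(\log\log_w(n/r))$ time overall, which proves the lemma.

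The main obstacle I anticipate is only a bookkeeping one: ensuring that the predecessor structure not only returns the predecessor value but also its index in the sorted order of $S'$, since it is the index (not the value) that is used as the second component of the output pair. This is standard for the Belazzougui--Navarro structure, but if one prefers to treat it as a black box returning only the value, the issue is resolved by augmenting $R_{\mathsf{pred}}(S')$ with an $O(r)$-word auxiliary map from elements of $S'$ to their ranks, which does not affect the stated space or time bounds.
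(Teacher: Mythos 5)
Your proof takes essentially the same route as the paper: read $\LF(\ell_{\delta[v(i)]})$ and $\ell_{\delta[v(i)]}$ from the $v(i)$-th pair of $I_{\mathsf{FL}}$ to get $\FL(i)$ arithmetically, then obtain $v(\FL(i))$ as the rank of the predecessor of $\FL(i)$ in $S'$. The paper states this via the identity $v(j) = |\{ x \in S' : x \le j \}|$ and leaves the rank-versus-value bookkeeping implicit, which your final remark makes explicit, but the argument is the same.
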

\begin{proof}
$v(j) = |\{ x \mid x \in S' \mbox{ s.t. } x \leq j \}|$ holds for any integer $j \in [1, n]$. 
Hence we can compute $v(\FL(i))$ using a predecessor query on $S$ after computing $\FL(i)$. 
Next, 
$\FL(i) =  \ell_{\delta[x]} + (i - \LF(\ell_{\delta[x]}))$ holds by Lemma~\ref{lem:LF_property}(i), 
where $x$ is the integer such that $\LF(\ell_{\delta[x]}) \leq i < \LF(\ell_{\delta[x+1]})$. 
This is because FL is the inverse of the LF function. 
$\FL(i) =  \ell_{\delta[v(i)]} + (i - \LF(\ell_{\delta[v(i)]}))$ also holds, 
because $x = v(i)$. 
%$\ell_{\delta[v(i)]}$ and $\LF(\ell_{\delta[v(i)]})$ are stored in $I_{\mathsf{FL}}$, 
%and hence we can compute $\FL(i)$ using $I_{\mathsf{FL}}$. 
$\ell_{\delta[v(i)]}$ and $\LF(\ell_{\delta[v(i)]})$ are stored in $I_{\mathsf{FL}}$; 
hence, we can compute $\FL(i)$ using $I_{\mathsf{FL}}$. 

We compute the pair $(\FL(i), v(\FL(i))$ in two steps: 
(i) compute $\FL(i)$ by $\FL(i) =  \ell_{\delta[v(i)]} + (i - \LF(\ell_{\delta[v(i)]}))$, 
and (ii) find $v(\FL(i))$ by a predecessor query on $S'$. 
The running time is $O(\log \log_{w} (n/r))$. 

\end{proof}

%We explain an algorithm solving extract query returning substring $T[i_{j}..i_{j} + d - 1]$ using $E_{\mathsf{sec}}$. 
Now let us explain an algorithm for solving extract queries that uses $E_{\mathsf{sec}}$ to return a substring $T[i_{j}..i_{j} + d - 1]$.  
Recall that $T[i_{j}..i_{j}+d-1] = F[\FL_{0}(h_{j})], F[\FL_{1}(h_{j})], \ldots, F[\FL_{d-1}(h_{j})]$, 
and $F[\FL_{t}(h_{j})] = \flbwt[v(\FL_{t}(h_{j}))]$ for $t \in [0, d-1]$. 
The two integers $h_{j}$ and $v(h_{j})$ are stored in $G'[j]$. 
Our algorithm consists of $d$ steps. 
In the $t$-th step~($t \in [1, d]$), 
the algorithm returns the character $\flbwt[v(\FL_{t-1}(h_{j}))]$ as the $t$-th character of the output string $T[i_{j}..i_{j} + d - 1]$,  
and it computes pair $(\FL_{t}(h_{j}), v(\FL_{t}(h_{j})))$ by Lemma~\ref{lem:predecessor} for the next step. 
Hence, we can solve an extract query in $O(\log \log_{w} (n/r))$ per character using $E_{\mathsf{sec}}$, 
and $O(\log \log_{w} (n/r)) = O(1)$ because $r \geq \frac{n}{\log n}$.

\emph{Construction of $E_{\mathsf{sec}}$}. 
We construct $E_{\mathsf{sec}}$ in three steps: 
(i) compute $I_{\mathsf{FL}}$, $\flbwt$, and $S$ by Lemma~\ref{lem:construction_FL}; 
(ii) construct $R_{\mathsf{pred}}(S)$ by processing $S$;  
(iii) construct $G'$. 
Constructing $R_{\mathsf{pred}}(S)$ takes $O(|S| \log\log_{w} (n/|S|))$ time and $O(|S|)$ words~\cite{10.1145/3375890}. 
In the third step, we construct $G'$ by the following lemma. 
\begin{lemma}\label{lem:construction_G_dash}
We can construct $G'$ in $O(n \log \log_{w} (n/r))$ time and $O(r)$ words of working space using $R_{\mathsf{pred}}(S)$ and $b$ marked positions $i_{1}, i_{2}, \ldots, i_{b}$.
\end{lemma}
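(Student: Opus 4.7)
The plan is to adapt the construction algorithm of Lemma~\ref{lem:construction_G} to Case (b), replacing the constant-time move queries used there with the $O(\log \log_{w} (n/r))$-time $\FL$-computation provided by Lemma~\ref{lem:predecessor}. Let $p$ be the position with $\SA[p] = n$ (i.e., $p = 1$). Since $\FL$ is the inverse of the LF function, applying $\FL$ to a position with sa-value $x$ yields a position with sa-value $x+1$ (with the natural wrap-around at $n$), so $h_{j} = \FL_{i_{j}}(p)$ for every $j \in [1, b]$.

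My algorithm iterates $t = 1, 2, \ldots, n$, maintaining at the beginning of the $t$-th iteration the pair $(\FL_{t-1}(p), v(\FL_{t-1}(p)))$. First, it uses Lemma~\ref{lem:predecessor} together with $I_{\FL}$ and $R_{\mathsf{pred}}(S')$ to obtain the next pair $(\FL_{t}(p), v(\FL_{t}(p)))$ in $O(\log \log_{w} (n/r))$ time. Then, if $t$ matches the next unvisited marked position $i_{j}$, the algorithm writes $(\FL_{t}(p), v(\FL_{t}(p)))$ into $G'[j]$. Since the marked positions are already given in sorted order $i_{1} < i_{2} < \cdots < i_{b}$, this membership check is performed in $O(1)$ amortized time by maintaining a single index into the sorted list.

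Correctness is immediate from Lemma~\ref{lem:predecessor} and from the identity $h_{j} = \FL_{i_{j}}(p)$. The total running time is $O(n \log \log_{w} (n/r))$, dominated by the $n$ invocations of Lemma~\ref{lem:predecessor}. Regarding space: the precomputed $I_{\FL}$ and $R_{\mathsf{pred}}(S')$ use $O(r)$ words, the output $G'$ uses $O(b)$ words, and the auxiliary state (the current pair and the pointer into the list of marked positions) uses $O(1)$ words, so the working space beyond the input and output is $O(r)$ as required. There is no genuine obstacle here; the only thing to verify carefully is that Lemma~\ref{lem:predecessor} indeed returns both $\FL(i)$ and $v(\FL(i))$ from the pair $(i, v(i))$ so that the iteration can proceed without ever having to recompute $v$ from scratch, and this is exactly the statement of that lemma.
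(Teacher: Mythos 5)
Your proposal is correct and follows essentially the same approach as the paper: start from $p$ with $\SA[p]=n$, iterate $\FL$ via Lemma~\ref{lem:predecessor} to compute the $n$ pairs $(\FL_{t}(p), v(\FL_{t}(p)))$, and record the pair at each marked step $t=i_{j}$. Your version merely spells out the initialization $(p, v(p))$ and the single-pointer matching against the sorted marked positions, which the paper leaves implicit.
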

\begin{proof}
Let $p$ be the integer satisfying $\SA[p] = n$~(i.e., $p = 1$). 
Then $G[j] = (\FL_{x}(p), v(\FL_{x}(p)))$ holds for any $j \in [1, b]$, 
where $x = i_{j}$.
We can compute $n$ pairs $(\FL_{1}(p), v(\FL_{1}(p)))$, $(\FL_{2}(p), v(\FL_{2}(p)))$, $\ldots$, $(\FL_{n}(p), v(\FL_{n}(p)))$ 
in $O(n \log \log_{w} (n/r))$ time by Lemma~\ref{lem:predecessor}. 
Hence, we can construct $G'$ in $O(n \log \log_{w} (n/r))$ time and $O(r)$ words of working space. 

\end{proof}
Hence, the construction time is $O(n \log \log_{w} (n/r))$ in total, 
and $O(n\log \log_{w} (n/r)) = O(n)$ holds because $r \geq \frac{n}{\log n}$.

\subsection*{E.2: Proof of Theorem~\ref{theo:prefix_search}}
\begin{proof}
We use the following two data structures. 
The first data structure is a compact trie for $D$ without the strings on the edges where each internal node stores 
(i) the number of leaves under the node, (ii) the pointers to the leftmost and rightmost leaves under the node and 
(iii) a perfect hash table storing the first character of the outgoing edges; each leaf stores the index of the string represented by 
the leaf and the pointer to the right leaf. 
The second data structure is the data structure presented in Theorem~\ref{theo:bookmark}, and it stores the strings on the edges. 
The data structure can be stored in $O(r' + d)$ words.

Recall that $v$ is a node such that (i) $P$ is a prefix of the string represented by the node 
and (ii) $P$ is not a prefix of the string represented by its parent. 
To answer a prefix search query, we traverse the path from the root to the node $v$ and output the indexes stored in the leaves under $v$. 
We can execute this procedure in $O(|P| + occ')$ time. 
We can also compute the number of the leaves~(i.e., the number of strings in $D$ that include $P$ as their prefixes) in $O(|P|)$ time using the two data structures. 
%Formally, we obtain the following theorem. 
\end{proof}

\end{document}